\newtheorem{theorem}{Theorem}
\newtheorem{lemma}[theorem]{Lemma}
\newtheorem{proposition}[theorem]{Proposition}
\newtheorem{remark}[theorem]{Remark}
\numberwithin{equation}{section}
\numberwithin{theorem}{section}
\DeclareMathOperator{\supp}{\mathrm{supp}}
\DeclareMathOperator{\tr}{\mathrm{tr}}
\begin{document}
\title{The Magnetic Scott Correction for\\Relativistic Matter at Criticality}
\author[1]{Gonzalo A. Bley \thanks{gonzalo.bley@gmail.com}}
\author[1]{S{\o}ren Fournais \thanks{fournais@math.au.dk}}
\affil[1]{Department of Mathematics, Aarhus University, \authorcr Ny Munkegade 118,  8000 Aarhus C, Denmark}
\maketitle
\begin{abstract}
We provide a proof of the first correction to the leading asymptotics of the minimal energy of pseudo-relativistic molecules in the presence of magnetic fields, the so-called ``relativistic Scott correction,'' when $\max{Z_k\alpha} \leq 2/\pi$, where $Z_k$ is the charge of the $k$-th nucleus and $\alpha$ is the fine structure constant. Our theorem extends a previous result by Erd\H{o}s, Fournais, and Solovej to the critical constant $2/\pi$ in the relativistic Hardy inequality $|p| - \frac{2}{\pi |x|} \geq 0$.
\end{abstract}
\begin{section}{Introduction}
\begin{subsection}{Primer on Thomas-Fermi theory}
In this subsection we shall provide a brief introduction to Thomas-Fermi theory that will be sufficient for our purposes in this paper. The interested reader is referred to more extended treatments, such as \cite{LS} or \cite{L}, for more information on the subject.

We will consider a neutral molecule composed of $M$ nuclei at positions $(R_1, R_2, \ldots, R_M) \in \mathbb{R}^3$ and charges $(Z_1, \ldots, Z_M) \in \mathbb{R}_+^3$, and $N$ non-relativistic electrons at positions $(x_1, x_2, \ldots, x_N) \in \mathbb{R}^3$. The quantum-mechanical Hamiltonian of the system will then be
\begin{gather}
H \equiv \sum_{n = 1}^N\frac{p_n^2}{2} - \sum_{n = 1}^N V(x_n) + \sum_{n < m}|x_n - x_m|^{-1},
\end{gather}
where $p_n = -i\nabla_n$ and $V(x) = \sum_{m = 1}^M Z_m/|x - R_m|$, acting on $\bigwedge_{i = 1}^N L^2(\mathbb{R}^3)\otimes \mathbb{C}^2$. ($\bigwedge$ denotes antisymmetric tensor product.) The ions are assumed to be fixed in space (Born-Oppenheimer approximation) and the interionic repulsion term $\sum_{n < m}Z_nZ_m/|x_n - x_m|$ is omitted. Units are such that $m = \hbar = e = 1$, $m$ being the electronic mass and $e$ the electronic charge. Spin for the electrons will be considered, with 2 possible states, which is the reason for the term $\mathbb{C}^2$ in the antisymmetric tensor product above. Note that $N = Z_1 + \ldots + Z_M \equiv Z$ by neutrality.

After making certain assumptions and simplifications (see \cite[Section I.1]{LS} for more details), one can approximate the ground-state energy of the system using a functional of the electronic density $\rho$, defined as
\begin{gather}
\mathcal{E}(\rho) \equiv \frac{3}{10}\left(3\pi^2\right)^{2/3}\int_{\mathbb{R}^3}\rho(x)^{5/3}\,dx - \int_{\mathbb{R}^3}V(x)\rho(x)\,dx + \frac{1}{2}\int_{\mathbb{R}^3}\!\int_{\mathbb{R}^3}\frac{\rho(x)\rho(y)}{|x - y|}\,dx\,dy.
\label{equation.thomas.fermi.functional}
\end{gather}
The functional
$\mathcal{E}$ is known as the Thomas-Fermi functional after L.H. Thomas \cite{T} and E. Fermi \cite{F}. The constant in front of $\int\rho^{5/3}$ is, in general, $3(6\pi^2/q)^{2/3}/10$, with $q$ equal to the number of spin states. The domain of definition of $\mathcal{E}$ is $\rho \in L^{5/3}(\mathbb{R}^3)\cap\,L^1(\mathbb{R}^3)$ with $\rho \geq 0$. By definition of the density, we will have $\int \rho = Z$. Under the assumptions explained in \cite[Section I.1]{LS},
\begin{gather}
E^{TF} \equiv \inf\left\{\mathcal{E}(\rho) : \rho \geq 0, \rho \in L^{5/3}(\mathbb{R}^3)\cap\,L^1(\mathbb{R}^3), \int_{\mathbb{R}^3}\rho(x)\,dx = Z\right\},
\end{gather}
should be a good approximation to the ground-state energy of the original Hamiltonian $H$. Before discussing if that is the case or not, let us address $E^{TF}$ in more detail. The most important facts for us at the moment are that $-\infty < E^{TF} < 0$; the infimum is actually a minimum; this minimum is unique. We shall call the minimum the Thomas-Fermi density, $\rho^{TF}$. The proof of these results can be found in $\cite{LS}$.

If one fixes $\lambda > 0$ and performs the scaling
\begin{align}
V_{\lambda} & \equiv \lambda^{4/3}V(\lambda^{1/3}x),\\
\rho_{\lambda} & \equiv \lambda^2\rho(\lambda^{1/3}x),
\end{align}
then it is easy to verify that $\mathcal{E}\left(V_{\lambda}; \rho_{\lambda}\right) = \lambda^{7/3}\mathcal{E}\left(V; \rho\right)$. If we define the vectors
\begin{align}
z & \equiv Z^{-1}(Z_1, \ldots , Z_M),
\label{definition.z}\\
r & \equiv Z^{1/3}(R_1, \ldots, R_M),
\label{definition.r}
\end{align}
and the functions $V_{r, z}$ and $\rho_z$ through the relations
\begin{align}
V(x) & \equiv Z^{4/3}V_{r, z}(Z^{1/3}x),\\
\rho(x) & \equiv Z^2\rho_z(Z^{1/3}x),
\end{align}
which implies $\mathcal{E}(V; \rho) = Z^{7/3}\mathcal{E}(V_{r, z}, \rho_z)$, we then have
\begin{gather}
E^{TF} = Z^{7/3}E^{TF}(z, r).
\label{equation.thomas.fermi.scaling}
\end{gather}
It turns out that $E^{TF}$ yields exactly the quantum-mechanical ground-state energy $E \equiv \text{inf spec }H$ in the limit $Z \to \infty$. The precise statement of the result is as follows.
\begin{theorem}[Lieb, Simon, 1977 \cite{LS}]
\label{theorem.lieb.simon}
As $a \to \infty$ so that $a(Z_1 + \ldots + Z_M)$ is an integer,
\begin{gather}
E(aZ_1, \ldots, aZ_M; a^{-1/3}R_1, \ldots, a^{-1/3}R_M) = a^{7/3}E^{TF}(Z_1, \ldots, Z_M; R_1, \ldots, R_M) + o(a^{7/3}).
\end{gather}
\end{theorem}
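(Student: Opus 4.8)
By the scaling identity \eqref{equation.thomas.fermi.scaling}, and since the hypothesis keeps the vectors $z$ and $r$ fixed while sending the total charge $Z := Z_1 + \dots + Z_M$ and the particle number $N = Z$ to infinity, the assertion is equivalent to the two-sided bound $E = E^{TF} + o(Z^{7/3})$ as $Z \to \infty$ for molecules with fixed $z, r$; after rescaling lengths to the scale $Z^{-1/3}$ this becomes a semiclassical problem with effective parameter $\hbar = Z^{-1/3} \to 0$. The plan is to establish matching upper and lower bounds. I will use repeatedly that neutrality forces the Thomas--Fermi chemical potential to vanish, so that $\rho^{TF} > 0$ everywhere and the Euler--Lagrange equation reads $\tfrac12(3\pi^2)^{2/3}(\rho^{TF})^{2/3} = \Phi^{TF}$ on all of $\mathbb{R}^3$, where $\Phi^{TF} := V - \rho^{TF} * |\cdot|^{-1}$ is the (positive) mean-field potential; equivalently $\rho^{TF} = (3\pi^2)^{-1}(2\Phi^{TF})^{3/2}$. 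Writing $D(f,g) := \tfrac12 \iint f(x)g(y)\,|x-y|^{-1}\,dx\,dy$, substituting this relation into \eqref{equation.thomas.fermi.functional} and using $\int \Phi^{TF}\rho^{TF} = \int V\rho^{TF} - 2D(\rho^{TF},\rho^{TF})$ yields the purely algebraic identity
\begin{gather}
E^{TF} \;=\; -\tfrac25 \int_{\mathbb{R}^3}\rho^{TF}\,\Phi^{TF}\;-\;D(\rho^{TF},\rho^{TF}).
\label{eq:tfid}
\end{gather}

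\smallskip
\noindent\emph{Upper bound.} I would build a trial fermionic one-particle density matrix $\gamma$ (with $0 \le \gamma \le \mathbf{1}$, the factor $2$ from spin included, and $\int \rho_\gamma = N$) from a family of coherent states centered at phase-space points filling the region $\{(x,p) : \tfrac12|p|^2 < \Phi^{TF}(x)\}$, with a localization length $\delta$ chosen so that $Z^{-1/3} \ll \delta \ll 1$ in the rescaled variables. Computing the energy of the associated trial state then gives $\rho_\gamma \to \rho^{TF}$ in the relevant norms, kinetic energy $\tfrac{3}{10}(3\pi^2)^{2/3}\int(\rho^{TF})^{5/3} + o(Z^{7/3})$ (the error absorbing the kinetic cost of the wave-packet width), attraction $-\int V\rho^{TF} + o(Z^{7/3})$, and repulsion $D(\rho_\gamma,\rho_\gamma) + (\text{exchange}) = D(\rho^{TF},\rho^{TF}) + o(Z^{7/3})$, the exchange term being negative of size $O(Z^{5/3})$. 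Contributions from neighbourhoods of the nuclei, where $\Phi^{TF}$ is singular, are bounded crudely: all that is needed is that they amount to $O(Z^2) = o(Z^{7/3})$. Summing the pieces yields $E \le E^{TF} + o(Z^{7/3})$.

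\smallskip
\noindent\emph{Lower bound.} This is the substantive direction. First I would pass to a mean field: combining the Lieb--Oxford inequality $\sum_{n<m}|x_n-x_m|^{-1} \ge D(\rho_\Psi,\rho_\Psi) - C\int\rho_\Psi^{4/3}$ with the elementary estimate $D(\rho_\Psi,\rho_\Psi) \ge 2D(\rho_\Psi,\rho^{TF}) - D(\rho^{TF},\rho^{TF})$ (completing the square in the positive-definite Coulomb form) gives, for every admissible $\Psi$,
\begin{gather}
\langle\Psi, H\Psi\rangle \;\ge\; \Big\langle \Psi,\, \sum_{n=1}^N h_n\,\Psi\Big\rangle \;-\; D(\rho^{TF},\rho^{TF}) \;-\; C\int_{\mathbb{R}^3}\rho_\Psi^{4/3}, \qquad h := \tfrac12 p^2 - \Phi^{TF}.
\end{gather}
One may restrict attention to trial states with $\langle\Psi,H\Psi\rangle \le 0$, for which a priori Lieb--Thirring bounds give $\langle\Psi,\sum_n p_n^2\Psi\rangle = O(Z^{7/3})$ and hence $\int\rho_\Psi^{5/3} = O(Z^{7/3})$, so that $\int\rho_\Psi^{4/3} \le (\int\rho_\Psi)^{1/2}(\int\rho_\Psi^{5/3})^{1/2} = O(Z^{5/3}) = o(Z^{7/3})$. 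By the Pauli principle with $q = 2$ spin states, $\langle\Psi, \sum_n h_n\Psi\rangle \ge 2\sum_j (e_j(h))_-$, twice the sum of the negative eigenvalues of $h$ counted with multiplicity. The main technical step is a coherent-state lower bound reproducing the Weyl term:
\begin{gather}
2\sum_j (e_j(h))_- \;\ge\; -\frac{2}{(2\pi)^3}\int_{\mathbb{R}^3}\!\int_{\mathbb{R}^3}\Big(\Phi^{TF}(x) - \tfrac12|p|^2\Big)_+\,dp\,dx \;-\; o(Z^{7/3}) \;=\; -\tfrac25\int_{\mathbb{R}^3}\rho^{TF}\,\Phi^{TF}\;-\;o(Z^{7/3}),
\end{gather}
the last equality obtained from $\rho^{TF} = (3\pi^2)^{-1}(2\Phi^{TF})^{3/2}$ after carrying out the $p$-integration. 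Inserting this bound, together with $\int\rho_\Psi^{4/3} = o(Z^{7/3})$ and the identity \eqref{eq:tfid}, into the mean-field estimate gives $E \ge E^{TF} - o(Z^{7/3})$, which with the upper bound proves the theorem.

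\smallskip
\noindent\emph{Where the difficulty lies.} The delicate point is extracting the \emph{sharp} semiclassical constant in the lower bound: the coherent-state analysis of $h$ must be carried through with an error that is genuinely $o(Z^{7/3})$ despite the local Coulomb singularities of $\Phi^{TF}$ at the nuclei (of size $\sim Z_m|x-R_m|^{-1}$ in the rescaled variables). One deals with each such singularity by excising a ball of radius $\sim Z^{-1/3}$ around $R_m$; the eigenvalue-sum mass discarded there is of lower order precisely because the region next to a nucleus contributes only at order $Z^2$ --- which is the Scott correction. Pinning down that $O(Z^2)$ term, and in particular what becomes of it once a magnetic field is present and $Z\alpha$ approaches the critical value $2/\pi$, is exactly the harder problem that the remainder of this paper addresses; for the leading $Z^{7/3}$ asymptotics it is enough that it be $o(Z^{7/3})$.
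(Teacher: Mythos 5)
The paper does not actually prove this statement: Theorem~\ref{theorem.lieb.simon} is quoted verbatim from Lieb--Simon (\cite[Theorem III.1]{LS}), so there is no internal proof to compare against. Judged on its own, your outline is the now-standard route to the Thomas--Fermi limit --- coherent-state trial states for the upper bound, and Lieb--Oxford plus completion of the square in the Coulomb form, the Pauli principle, and a semiclassical lower bound for $\tr(\tfrac12 p^2-\Phi^{TF})_-$ for the lower bound. This is closer to Lieb's 1981 streamlined treatment (and to the coherent-state machinery of \cite{SS}) than to the original Lieb--Simon argument, which predates the Lieb--Oxford inequality and instead uses Dirichlet--Neumann bracketing into boxes together with classical electrostatic inequalities; your route is shorter and is the one that generalizes to the relativistic and magnetic settings of this paper. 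The algebraic identity $E^{TF}=-\tfrac25\int\rho^{TF}\Phi^{TF}-D(\rho^{TF},\rho^{TF})$ and the a priori bounds ($\int\rho_\Psi^{5/3}=O(Z^{7/3})$, hence $\int\rho_\Psi^{4/3}=O(Z^{5/3})$) are correct as stated.

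Two points deserve flagging. First, the entire analytic content of the theorem sits in the one line you call ``the main technical step'': the coherent-state lower bound $2\sum_j(e_j(h))_-\geq -\tfrac{2}{(2\pi)^3}\iint(\Phi^{TF}-\tfrac12|p|^2)_+\,dp\,dx-o(Z^{7/3})$ with a potential having Coulomb singularities. As written this is asserted, not proved; a complete argument must specify the coherent-state scale, control the localization error, and handle the singular region, e.g.\ via Lieb--Thirring after splitting off the balls around the nuclei. Second, your accounting of the excised region conflates two scales: a ball of radius $\epsilon Z^{-1/3}$ about $R_m$ contributes $\int_{|x|<\epsilon Z^{-1/3}}(Z_m/|x|)^{5/2}\,dx\sim \epsilon^{1/2}Z^{7/3}$ to the phase-space integral, not $O(Z^2)$; it is the much smaller Scott scale $|x-R_m|\lesssim Z^{-1}$ that contributes $O(Z^2)$. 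The excision at radius $\epsilon Z^{-1/3}$ is still harmless, but only because one may send $\epsilon\to0$ after $Z\to\infty$, so the bookkeeping should read $O(\epsilon^{1/2}Z^{7/3})$ rather than $O(Z^2)$.
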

\begin{remark}
Theorem \ref{theorem.lieb.simon} corresponds to \cite[Theorem III.1]{LS}. The version appearing in \cite{LS} is more general.
\end{remark}
We now define the Thomas-Fermi potential as
\begin{align}\label{eq:VTF}
V^{TF}(x) \equiv V(x) - \int_{\mathbb{R}^3}\frac{\rho^{TF}(y)}{|x - y|}\,dy,
\end{align}
which satisfies, with $d_R(x) \equiv \min_{1 \leq k \leq M}{|x - R_k|}$ and $R_{\min} \equiv \min_{i \neq j}|R_i - R_j|$,
\begin{align}
\left|V^{TF}(x) - \frac{Z_k}{|x - R_k|}\right| & \leq C\left(Z^{-1/3}R_{\min}^{-1} + 1\right)Z^{4/3}\qquad\text{when $|x - R_k| \leq R_{\min}/2$,}
\label{estimate.VTF.Coulomb}\\
\left|V^{TF}(x)\right| & \leq \frac{CZ}{d_R(x)},
\label{estimate.VTF.asymptotic.Coulomb}\\
\left|V^{TF}(x)\right| & \leq \frac{C}{d_R(x)^{4}}.
\label{estimate.VTF.asymptotic.Coulomb.4th.power}
\end{align}
The constant $C$ in \eqref{estimate.VTF.Coulomb}, \eqref{estimate.VTF.asymptotic.Coulomb} and \eqref{estimate.VTF.asymptotic.Coulomb.4th.power} is universal (a number). For more information on these estimates on $V^{TF}$ the reader is referred to \cite[Subsection 2.2]{SSS}, \cite[Section 2]{SS}, and \cite[Section IV]{LS}. $V^{TF}$ appears naturally in the Euler-Lagrange equation for \eqref{equation.thomas.fermi.functional}, namely
\begin{gather}
\frac{1}{2}(3\pi^2)^{2/3}\left(\rho^{TF}\right)^{2/3} = V(x) - \int_{\mathbb{R}^3}\frac{\rho^{TF}(y)}{|x - y|}\,dy.
\end{gather}
(See \cite[Equation (14)]{SS}.) Another property of $V^{TF}$ that we shall use is that the semiclassical approximation to the sum of the negative eigenvalues of $p^2/2 - V^{TF}$, acting on $L^2(\mathbb{R}^3)\otimes \mathbb{C}^2$, is given by
\begin{gather}
\frac{2}{(2\pi)^3}\int_{\mathbb{R}^3}\!\int_{\mathbb{R}^3}\left(\frac{p^2}{2} - V^{TF}\right)_-\,dx\,dp = -\frac{4\sqrt{2}}{15\pi^2}\int_{\mathbb{R}^3}V^{TF}(x)^{5/2}\,dx = E^{TF} + D(\rho^{TF}),
\label{equation.semiclassical.V.TF}
\end{gather}
(see \cite[Equation (27)]{SS}).

We finish this subsection by recording two scaling properties of $V^{TF}$ and $\rho^{TF}$, which follow from our discussions before Theorem \ref{theorem.lieb.simon}: for $a > 0$,
\begin{align} \label{eq:TF-scaling}
V^{TF}_{Z, R}(x) & = a^4 V_{a^{-3}Z, aR}^{TF}(ax),\\
\rho_{Z, R}^{TF}(x) & = a^6 \rho_{a^{-3}Z, aR}^{TF}(ax).
\end{align}
Here $Z$ and $R$ take on the meaning $(Z_1, \ldots, Z_M)$ and $(R_1, \ldots, R_M)$, respectively. This is a notation we shall use ocassionally (when there is no risk of confusion).
\end{subsection}
\begin{subsection}{The Scott correction}
Using Theorem \ref{theorem.lieb.simon} while setting $M = 1$, the atomic case, yields
\begin{gather}
E = -C_{TF}Z^{7/3} + o(Z^{7/3})
\end{gather}
for some positive $C_{TF}$. In 1952, Scott \cite{S} predicted a correction to the energy asymptotics of large atoms, $-C_{TF}Z^{7/3}$, proportional to $Z^2$. This was finally proved rigorously by Siedentop and Weikard \cite{SW1, SW2, SW3} (upper and lower bounds) and Hughes \cite{H} (lower bound). They found that, for a neutral atom, (remember that we work in a setting of $2$ spin states)
\begin{gather}
E = -C_{TF}Z^{7/3} + \frac{Z^2}{2} + o(Z^2).
\end{gather}
This second term, $Z^2/2$, is known as the Scott correction. Siedentop and Weikard's result was later extended by Ivrii and Sigal \cite{IS} to the case of neutral molecules and to the case of ions by Bach \cite{Bach}. The result for molecules reads as
\begin{gather}
\label{result.ivrii.sigal}
E(Z_1, \ldots, Z_M, R_1, \ldots, R_M) = E^{TF} + \frac{1}{2}\sum_{m = 1}^M Z_m^2 + o(Z^2),
\end{gather}
where $(Z_1, \ldots, Z_M)$ and $(R_1, \ldots, R_M)$ are the charges and positions of the $M$ nuclei of the molecule, respectively. Here, $E^{TF}$ is the infimum of the Thomas-Fermi functional we saw before, with $V(x)$ equal to $\sum_{m = 1}^M Z_m/|x - R_m|$. The Scott correction for neutral molecules was also proven later by Solovej and Spitzer \cite{SS} using a generalisation of coherent states. We remark that there are some extra technical assumptions regarding \eqref{result.ivrii.sigal}, which the reader may check in \cite[Theorem 0.1]{IS}.

We finally mention a result by Fefferman and Seco \cite{FS}, where they proved that for a neutral atom of nuclear charge $Z$,
\begin{gather}
E = -C_{TF}Z^{7/3} + \frac{Z^2}{2} - C_2 Z^{5/3} + o(Z^{5/3}),
\label{equation.fefferman.seco}
\end{gather}
finding in this way, in the atomic case, a second correction, beyond the Scott term, proportional to $Z^{5/3}$. This term had been predicted by Dirac \cite{D} and Schwinger \cite{Sch}. $C_2$ is here a positive constant.
\end{subsection}
\begin{subsection}{The relativistic Scott correction}
We now review the large-$Z$ asymptotics for pseudo-relativistic atoms and molecules. Pseudo-relativistic means that the kinetic energy operator $p^2/2$ for electrons is replaced by $\sqrt{p^2\alpha^{-2} + \alpha^{-4}} - \alpha^{-2}$. (Units are here $\hbar = m = e = 1$. $\alpha \equiv e^2/(\hbar c)$ is the fine structure constant.) All the other parts in the Hamiltonian are kept the same. The first result in this direction was obtained by T. \O stergaard S\o rensen \cite{OS}: If $E_Z$ denotes now the ground-state energy of a pseudo-relativistic neutral atom with nuclear charge $Z$, then,
\begin{gather}
\lim_{\substack{Z \to \infty\\ \alpha \to 0\\ Z\alpha \leq 2/\pi}}\frac{E_Z}{Z^{7/3}} = -C_{TF}.
\end{gather}
The reader will notice that, except for the fact that $\alpha$ is taken to go to zero while $Z\alpha$ is bounded by $2/\pi$, the result is the same as in the non-relativistic case. If $\alpha$ were not forced to go to zero, the underlying Hamiltonian would become unstable as soon as $Z\alpha$ exceeded $2/\pi$, since it is given by
\begin{gather}
\alpha^{-1}\Bigg[\sum_{i = 1}^Z\left(\sqrt{-\Delta_i + \alpha^{-2}} - \alpha^{-1} - \frac{Z\alpha}{|x_i|}\right) + \sum_{1 \leq i < j \leq Z}\frac{\alpha}{|x_ i - x_j|}\Bigg],
\end{gather}
and we note that $\sqrt{-\Delta + \alpha^{-2}} - \alpha^{-1} - Z\alpha /|x| \leq \sqrt{-\Delta} - Z\alpha / |x|$, which is unbounded from below for $Z\alpha > 2/\pi$. (This follows from the relativistic Hardy inequality: $|p| - 2/(\pi|x|) \geq 0$ and $|p| - c/|x|$ is unbounded from below for $c > 2/\pi$.)

We now proceed to study the first correction to the asymptotic result $-C_{TF}Z^{7/3}$, in the relativistic case. 
We will consider a neutral system of $M$ atoms of atomic numbers $(Z_1, Z_2, \ldots , Z_M)$ located at the distinct position vectors $(R_1, R_2, \ldots , R_M)$, and $Z \equiv \sum_{m = 1}^M Z_m$ pseudo-relativistic electrons. The system is then described by the following Hamiltonian
\begin{gather}
H_0 \equiv \sum_{m = 1}^Z\left(\sqrt{\alpha^{-2}p_m^2 + \alpha^{-4}} - \alpha^{-2}\right) - \sum_{m = 1}^Z\sum_{n = 1}^M\frac{Z_n}{|x_m - R_n|} + \sum_{m < n}\frac{1}{|x_m - x_n|}.
\end{gather}
If we denote by $E_0$ the infimum of $(\psi, H\psi)$ for all normalized $\psi$,
\begin{gather}
E_0 \equiv \text{inf spec }H_0,
\end{gather}
we are interested in the asymptotics of $E_0$ as the total nuclear charge $Z \to \infty$, while simultaneously the fine structure constant $\alpha \to 0$ and $\max_{1 \leq k \leq M}{Z_k\alpha} \leq 2/\pi$. 

We are interested in the Scott correction in this relativistic setting. 
For atoms this was obtained by Frank, Siedentop, and Warzel \cite{FSW} (see also e.g. \cite{FSW2,HS} for other related models).
The relativistic Scott correction for molecules was  proved by Solovej, \O stergaard S\o rensen, and Spitzer \cite{SSS}. The reader should at this point remember definitions \ref{definition.z} and \ref{definition.r} regarding $z$ and $r$. This time we shall use them backward: given vectors $z$ and $r$, $(Z_1, \ldots , Z_M)$ will be $Zz$ and $(R_1, \ldots , R_M)$ will be $Z^{-1/3}r$.
\begin{theorem}[Solovej, \O stergaard S\o rensen, Spitzer \cite{SSS}]
\label{relativistic.scott.correction}
There exists a continuous, non-increasing function $S_2 : [0, 2/\pi] \to \mathbb{R}$ with $S_2(0) = 1/4$ with the following property: Let $z = (z_1, \ldots, z_M) \in {\mathbb R}_{+}^M$ and $r = (r_1, \ldots, r_M) \in {\mathbb R}^{3M}$ be fixed given vectors and $r_0 > 0$ a fixed number such that $\sum_{i = 1}^M z_i = 1$ and $\min_{i \neq j}|r_i - r_j| > r_0$. Then, as $Z \to \infty$ and $\alpha \to 0$ while $\max_{1 \leq k \leq M} (Z z_k\alpha) \leq 2/\pi$,
\begin{gather}
E_0(Zz, Z^{-1/3}r) = Z^{7/3}E^{TF}(z, r) + 2\sum_{k = 1}^M Z_k^2 S_2(Z_k\alpha) + \mathcal{O}(Z^{2 - 1/30}),
\end{gather}
where $\left|\mathcal{O}(Z^{2 - 1/30})\right| \leq LZ^{2 - 1/30}$, $L$ depending only on $M$ and $r_0$.
\end{theorem}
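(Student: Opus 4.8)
\emph{Strategy.} The proof is a matching upper and lower bound for $E_0$. In both directions the governing one-body operator is the mean-field Hamiltonian
\begin{gather*}
h \equiv \sqrt{\alpha^{-2}p^2 + \alpha^{-4}} - \alpha^{-2} - V^{TF}
\end{gather*}
on $L^2(\mathbb{R}^3)$, and the theorem reduces to the one-particle asymptotics
\begin{gather*}
q\,\mathrm{Tr}[h]_- - D(\rho^{TF}) = Z^{7/3}E^{TF}(z,r) + 2\sum_{k=1}^M Z_k^2\, S_2(Z_k\alpha) + \mathcal{O}(Z^{2-1/30}),
\end{gather*}
with $q=2$, where $S_2(\kappa)$ is a renormalized trace attached to the relativistic hydrogen operator $\sqrt{p^2+1}-1-\kappa/|x|$ --- the finite limit, as the spatial cutoff is removed, of its negative trace minus the corresponding Weyl integral --- which exists for every $\kappa\in[0,2/\pi]$ because along each Rydberg series the quantum eigenvalues approach the classical ones.

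\emph{Reduction to a one-body problem.} For the lower bound, insert $\rho^{TF}$ through a correlation (Onsager/Lieb--Oxford) estimate, $\sum_{m<n}|x_m-x_n|^{-1} \ge \sum_m\bigl(\rho^{TF}*|\cdot|^{-1}\bigr)(x_m) - D(\rho^{TF}) - \text{err}$, where the error is controlled by $\int\rho_\psi^{4/3}$ and hence, since the relativistic kinetic energy dominates $\alpha^{-1}\int\rho_\psi^{4/3}$ and $\int(\rho^{TF})^{4/3}\lesssim Z^{5/3}$, is absorbed into a vanishing fraction of the kinetic energy at a cost $o(Z^2)$. Using $V^{TF}=V-\rho^{TF}*|\cdot|^{-1}$ this gives $H_0\ge\sum_m h_{x_m}-D(\rho^{TF})-o(Z^2)$, and since the number of negative eigenvalues of $h$ (with spin degeneracy) matches $\int\rho^{TF}=Z$ to leading order, $E_0\ge q\,\mathrm{Tr}[h]_- - D(\rho^{TF}) - o(Z^2)$. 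For the upper bound, take a Slater determinant built from a density matrix equal to a coherent-state quantization of the non-relativistic symbol $\mathbf 1\{\tfrac12p^2 - V^{TF}(x)<0\}$ away from the nuclei and to the negative spectral projection of the localized relativistic hydrogen operators in small balls around the $R_k$; the positivity of $D$ and the Thomas--Fermi estimates on $V^{TF}$ make the Coulomb cross terms negligible, and its energy matches the right-hand side above up to $\mathcal{O}(Z^{2-1/30})$.

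\emph{Semiclassics and extraction of the Scott term.} Localize with $1=\chi_0^2+\sum_{k=1}^M\chi_k^2$ on three scales: Coulomb balls $\{|x-R_k|<Z^{-1+\delta}\}$, where $V^{TF}(x)=Z_k/|x-R_k|+\mathcal{O}(Z^{4/3})$ by \eqref{estimate.VTF.Coulomb}; an intermediate annulus; and the bulk $\{d_R(x)\gtrsim Z^{-1/3}\}$. Since the constraint $\max_k Z_k\alpha\le2/\pi$ forces $\alpha=\mathcal{O}(Z^{-1})$, in the bulk $\alpha|p|$ and $\alpha V^{TF}$ are $o(1)$ on the relevant energy shell, so the relativistic symbol collapses to $\tfrac12p^2$, the fast-decaying tail of $V^{TF}$ (bounded by $C/d_R^4$, \eqref{estimate.VTF.asymptotic.Coulomb.4th.power}) is harmless, and coherent-state semiclassics give the bulk term $E^{TF}+D(\rho^{TF})$ up to $\mathcal{O}(Z^{2-1/30})$ via the identity \eqref{equation.semiclassical.V.TF}. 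Near $R_k$, the scaling $x\mapsto\alpha x$ turns $\sqrt{\alpha^{-2}p^2+\alpha^{-4}}-\alpha^{-2}-Z_k/|x-R_k|$ into $\alpha^{-2}(\sqrt{p^2+1}-1-\kappa/|x|)$ with $\kappa=Z_k\alpha$, and the negative trace of the cut-off operator minus its Weyl integral converges, as $Z\to\infty$ with cutoff radius times $Z_k$ tending to infinity, to $-2Z_k^2 S_2(\kappa)$; continuity and monotonicity of $S_2$ on $[0,2/\pi]$ follow from the monotone dependence in $\kappa$ of the eigenvalues of $\sqrt{p^2+1}-1-\kappa/|x|$ together with perturbation theory below $2/\pi$ and a limit at the endpoint, and $S_2(0)=1/4$ reproduces $\tfrac12\sum_k Z_k^2$. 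The nonlocality of the kinetic energy in the localization step is handled by the representation $\sqrt{A}=\tfrac1\pi\int_0^\infty A(A+t)^{-1}t^{-1/2}\,dt$ with $A=\alpha^{-2}p^2+\alpha^{-4}$, whose commutators with the $\chi_k$ are controlled uniformly; optimizing $\delta$ against the semiclassical and commutator errors yields $\mathcal{O}(Z^{2-1/30})$, the constant depending only on $M$ and on $r_0$ (which keeps the nuclear regions disjoint and makes $V^{TF}$ a single-atom profile near each $R_k$ on the Thomas--Fermi scale).

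\emph{Main obstacle.} Everything above is comparatively easy away from criticality: for $\kappa<2/\pi$ the improved Hardy inequality $|p|-\kappa/|x|\ge(1-\tfrac{\pi\kappa}{2})|p|$ has a positive gain, which dominates the kinetic energy lost in localization, forces enough decay of the hydrogenic ground states that truncating at radius $Z^{-1+\delta}$ costs $o(Z^2)$, and makes the renormalized trace defining $S_2$ smooth in $\kappa$. At the critical value $\kappa=2/\pi$ this gain vanishes: $|p|\ge\tfrac{2}{\pi|x|}$ is saturated, the massless operator $|p|-\tfrac{2}{\pi|x|}$ has $0$ at the bottom of its spectrum with no ground state, and the relevant eigenfunctions decay only marginally near the origin. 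The crux of the proof is therefore to make \emph{all} of the above estimates uniform up to $\kappa=2/\pi$: one must replace $(1-\tfrac{\pi\kappa}{2})|p|$ by a critical Hardy-type inequality retaining a usable remainder (for instance logarithmically weighted, or improved by the mass term), control the nonlocal localization commutators at the threshold, and verify that $S_2$ and the nuclear traces remain finite and continuous there. This endpoint analysis is where essentially all the difficulty concentrates, and is precisely the point at which one must go beyond the regime $\kappa<2/\pi$.
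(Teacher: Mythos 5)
First, note that the paper does not prove this statement: Theorem \ref{relativistic.scott.correction} is quoted verbatim from \cite{SSS} and used as a black box (in particular to supply the upper bound for the magnetic problem). So the relevant comparison is with the proof in \cite{SSS}, whose architecture your sketch reproduces at the level of headings: correlation inequality to reduce to the one-body operator $\sqrt{\alpha^{-2}p^2+\alpha^{-4}}-\alpha^{-2}-V^{TF}$, multiscale localization, non-relativistic semiclassics in the bulk via \eqref{equation.semiclassical.V.TF}, and a rescaling near each nucleus producing the renormalized hydrogenic trace $2S_2(Z_k\alpha)$ of Lemma \ref{lemma.2S.non.magnetic}. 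That skeleton is right.

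The genuine gap is exactly at the point you yourself flag as the crux, and your proposed remedies do not close it. You absorb the exchange/correlation error and the localization errors ``into a vanishing fraction of the kinetic energy.'' At $\max_k Z_k\alpha=2/\pi$ this is forbidden: writing $(1-\varepsilon)T - V^{TF} = (1-\varepsilon)\bigl(T-(1-\varepsilon)^{-1}V^{TF}\bigr)$ produces a supercritical Coulomb coupling and an operator unbounded from below (this is precisely the failure mode discussed in the remark following \eqref{trace.hamiltonian.factor.bigger.than.1}). The mechanism that actually works in \cite{SSS} is to treat the combined kinetic-plus-Coulomb ``Hardy operator'' as an indivisible unit: one borrows a fraction $\varepsilon$ of the \emph{whole} one-body operator including the critical singularity, and controls the resulting trace by a combined Daubechies--Lieb--Yau inequality (a relativistic Lieb--Thirring bound valid with the critical Coulomb singularity $\tfrac{2}{\pi\alpha|x|}$ inside small balls; cf.\ \cite[Thm.\ 2.8]{SSS} and the non-magnetic case of Theorem \ref{thm:MCDLY} here), together with the fact that $|p|-\tfrac{2}{\pi|x|}$ still dominates $c\,(\psi,|p|^{s}\psi)^{1/s}$-type quantities for $s<1$ (\cite[Thm.\ 2.3]{SSS}), which is what absorbs localization remainders without any Hardy gain. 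Your alternative suggestions (a logarithmically improved Hardy inequality, or perturbation theory in $\kappa$ up to the endpoint) are not what is used and are not obviously sufficient; in particular the existence, continuity and uniformity in $\kappa\in[0,2/\pi]$ of the renormalized trace defining $S_2$ is itself part of the content of \cite[Lemma 4.3]{SSS} and cannot be dismissed as routine perturbation theory. Without these two inputs the proposal does not yield the stated uniform $\mathcal{O}(Z^{2-1/30})$ bound at criticality.
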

The inclusion of (self-generated) magnetic fields was then considered by Erd\H{o}s, Fournais, and Solovej \cite{EFS}. 
Before stating their main result, we shall define the main Hamiltonian involved. The system interacts with a classical magnetic field $B$, coming from a magnetic vector potential $A$. (In this manner, $B = \nabla\times A$.) We shall assume that $A \in L^6(\mathbb{R}^3)$, $\nabla\otimes A \in L^2(\mathbb{R}^3)$, and $\nabla\cdot A = 0$. 
Here $\nabla \otimes A$ denotes the $3 \times 3$ matrix with entries $\partial_j A_k$ and therefore $|\nabla \otimes A|^2 = \sum_{j,k=1}^3 |\partial_j A_k|^2$.
Magnetic vector potentials $A$ with these properties will be called ``admissible.'' Note that the property $\nabla\cdot A = 0$ implies in particular that $\|\nabla\otimes A\|_2 = \|\nabla\times A\|_2$. The Hamiltonian in question is
\begin{align}
& \sum_{m = 1}^Z\mathcal{T}_m^{(\alpha)}(A) - \sum_{m = 1}^Z\sum_{n = 1}^M\frac{Z_n}{|x_m - R_n|} + \sum_{m < n}\frac{1}{|x_m - x_n|} + \frac{1}{8\pi\alpha^2}\int_{\mathbb{R}^3}|\nabla\times A|^2\,dx\nonumber\\
\equiv \, & H(A) + \frac{1}{8\pi\alpha^2}\int_{\mathbb{R}^3}\left|\nabla\times A\right|^2\,dx,
\end{align}
where $\mathcal{T}_m^{(\alpha)}(A) \equiv \sqrt{\alpha^{-2}\left[\sigma\cdot(p_m - A)\right]^2 + \alpha^{-4}} - \alpha^{-2}$. As we want to take into account the interaction of the electron spin with the magnetic field, the Pauli operator $\sigma\cdot(p - A)$ is used, as opposed to the magnetic momentum $p - A$. ($\sigma$ is the vector of Pauli matrices $(\sigma_1, \sigma_2, \sigma_3)$.) We are then interested in the infimum of $(\psi, H(A)\psi) + (8\pi\alpha^2)^{-1}\|B\|_2^2$ for all normalized $\psi$ and admissible magnetic vector potentials,
\begin{gather}
E \equiv \inf_A\left(\text{inf spec }H(A) + \frac{1}{8\pi\alpha^2}\int_{\mathbb{R}^3}\left|\nabla\times A\right|^2\,dx\right).
\end{gather}
We will now state the main result from Erd\H{o}s, Fournais, and Solovej, concerning the Scott correction in the pseudo-relavistic case, including magnetic fields and spin. The function $S_2$ appearing is exactly the same as in Theorem \ref{relativistic.scott.correction}.
\begin{theorem}[Erd\H{o}s, Fournais, Solovej \cite{EFS}]
\label{magnetic.relativistic.scott.correction}
Under the same assumptions as in Theorem \ref{relativistic.scott.correction}, let in addition $0 < \kappa_0 < 2/\pi$ be a fixed constant. As $Z \to \infty$ and $\alpha \to 0$ while $\max_k (Z z_k\alpha) \leq \kappa_0 < 2/\pi$,
\begin{gather}
E = Z^{7/3}E^{TF}(z, r) + 2\sum_{k = 1}^M Z_k^2 S_2(Z_k\alpha) + o(Z^2).
\end{gather}
\end{theorem}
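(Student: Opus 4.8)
The plan is to prove the theorem by establishing, separately, a matching asymptotic upper and lower bound on $E$, each of the form $Z^{7/3}E^{TF}(z,r) + 2\sum_{k=1}^M Z_k^2 S_2(Z_k\alpha) + o(Z^2)$. The upper bound will be essentially free: since $A\equiv 0$ is admissible, makes $\frac{1}{8\pi\alpha^2}\int|\nabla\times A|^2\,dx$ vanish, and satisfies $[\sigma\cdot p]^2 = p^2$, the operator $H(0)$ is exactly the non-magnetic Hamiltonian of Theorem \ref{relativistic.scott.correction}, so $E \le E_0(Zz,Z^{-1/3}r) = Z^{7/3}E^{TF}(z,r) + 2\sum_k Z_k^2 S_2(Z_k\alpha) + O(Z^{2-1/30})$. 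Hence all the work goes into the lower bound: I must show that for \emph{every} admissible $A$ one has $\inf\mathrm{spec}\,H(A) + \frac{1}{8\pi\alpha^2}\int|\nabla\times A|^2\,dx \ge Z^{7/3}E^{TF}(z,r) + 2\sum_k Z_k^2 S_2(Z_k\alpha) - o(Z^2)$ with the error uniform in $A$; in other words, that no correlation between a self-generated field and the electron configuration can beat the non-magnetic Scott asymptotics by more than $o(Z^2)$.

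\textbf{Reduction to a one-body operator.} For the lower bound I would first fix an admissible $A$ and reduce the many-body operator to a one-body one, as in \cite{SSS,EFS}. A correlation/exchange inequality with reference density $\rho^{TF}$ gives $-\sum_m V(x_m) + \sum_{m<n}|x_m-x_n|^{-1} \ge -\sum_m V^{TF}(x_m) - D(\rho^{TF}) - (\text{exchange error})$, the error being $o(Z^2)$ by the bounds \eqref{estimate.VTF.Coulomb}--\eqref{estimate.VTF.asymptotic.Coulomb.4th.power} on $V^{TF}$; the fermionic filling of the lowest one-body levels (each of multiplicity $q=2$) then yields $\inf\mathrm{spec}\,H(A) \ge 2\,\mathrm{Tr}\big(\mathcal{T}^{(\alpha)}(A) - V^{TF}\big)_- - D(\rho^{TF}) - o(Z^2)$. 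Next I would localize with a smooth partition of unity $1 = \theta_0^2 + \sum_{k=1}^M\theta_k^2$, with $\theta_k$ supported in $B(R_k,\ell)$ for an intermediate length $\ell$ (a power of $Z$ between the Scott scale $\sim Z^{-1}$ and the Thomas--Fermi scale $\sim Z^{-1/3}$) and $\theta_0$ supported away from all nuclei; since $\mathcal{T}^{(\alpha)}(A)$ is nonlocal this uses the IMS-type localization formula for $\sqrt{[\sigma\cdot(p-A)]^2+\alpha^{-2}}$ of \cite{SSS,EFS}, whose error is $o(Z^2)$ for this $\ell$. This reduces matters to a lower bound on $\sum_{k=0}^M 2\,\mathrm{Tr}\big[\theta_k\big(\mathcal{T}^{(\alpha)}(A) - V^{TF}\big)\theta_k\big]_-$, after which the nonnegative field energy is re-inserted.

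\textbf{Semiclassics away from, hydrogen near, the nuclei.} On $\supp\theta_0$ there is no Coulomb singularity, the natural scale is $Z^{-1/3}$, and I would run the magnetic coherent-state (Weyl) analysis of \cite{EFS} --- where the diamagnetic inequality together with the large prefactor $\alpha^{-2}$ already force the self-generated field to cost strictly more than it can gain at such scales --- to get $\frac{2}{(2\pi)^3}\int\!\!\int\big(\tfrac12 p^2 - V^{TF}\big)_-\,dx\,dp + o(Z^2)$, which by \eqref{equation.semiclassical.V.TF} equals $E^{TF} + D(\rho^{TF}) + o(Z^2)$, the $D(\rho^{TF})$ cancelling the term from the correlation step. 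On $\supp\theta_k$ I would use \eqref{estimate.VTF.Coulomb} to replace $V^{TF}$ by $Z_k|x-R_k|^{-1}$ up to a controllable error, then rescale lengths by $\alpha$ about $R_k$ and energies by $\alpha^{-2}$: the one-body operator becomes the relativistic hydrogenic Pauli operator $\sqrt{[\sigma\cdot(p-\tilde A)]^2+1}-1 - \kappa|x|^{-1}$ at coupling $\kappa = Z_k\alpha \le \kappa_0 < 2/\pi$ (the mass normalized to $1$ by this choice of scale, so no extra parameter survives), with rescaled field energy of prefactor $(8\pi\alpha)^{-1}\to\infty$. One then has to show that its suitably subtracted negative-eigenvalue sum, after the infimum over $\tilde A$ and with the rescaled field energy included, equals $2Z_k^2 S_2(Z_k\alpha) + o(Z^2)$. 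This is where the \emph{strict} inequality $\kappa_0 < 2/\pi$ is used: it gives a coercive bound $\mathcal{T}^{(\alpha)}(0) - Z_k|x-R_k|^{-1} \ge (1-\tfrac{\pi\kappa_0}{2})\,\mathcal{T}^{(\alpha)}(0) - C\alpha^{-2}$ (relativistic Hardy $|p|\ge\tfrac{2}{\pi|x|}$ with room to spare), and combined with a stability estimate for the Pauli operator --- controlling the dangerous $-\sigma\cdot B$ against $(p-A)^2$ and $\|B\|_2^2$ --- it shows the field can lower the energy only by an amount the diverging prefactor $(8\pi\alpha)^{-1}$ absorbs as $\alpha\to0$. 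One is then reduced to the non-magnetic relativistic hydrogen Scott computation, that is, to the definition of $S_2$ and the results of \cite{FSW} (atoms) and \cite{SSS} (molecules).

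\textbf{Assembly and the main obstacle.} Finally I would sum the three contributions, re-insert the (nonnegative, now shown to be $o(Z^2)$) total field energy, and collect the localization, correlation and semiclassical errors --- each arranged to be $o(Z^2)$ uniformly in $A$ --- to obtain the lower bound and hence the theorem. I expect the main obstacle to be controlling the self-generated field \emph{near the nuclei}, where the Coulomb singularity is strongest and $-\sigma\cdot B$ most dangerous: one cannot simply set $A=0$ there, and must instead run a quantitative bootstrap (``running-energy'') argument --- starting from the a priori bound $\frac{1}{8\pi\alpha^2}\|\nabla\times A\|_2^2 \lesssim Z^{7/3}$ (from the upper bound together with magnetic stability of matter), feeding it into the Pauli stability inequality, and exploiting the divergence of the rescaled field-energy prefactor --- to conclude that $\|\nabla\times A\|_{L^2(B(R_k,\ell))}$ is too small to affect the Scott coefficient. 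A secondary technical point is keeping the many-body-to-one-body reduction compatible with $Z^2$-order bookkeeping; following \cite{SSS,EFS} this is done by working simultaneously on the two scales $Z^{-1/3}$ and $\sim Z^{-1}$ with sharp correlation estimates.
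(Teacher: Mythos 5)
This statement is quoted from \cite{EFS} and is not reproved in the paper; its proof strategy is, however, exactly the architecture the paper recalls and adapts (correlation inequality to a one-body trace, IMS localization into near-nuclei and outer regions, magnetic semiclassics with the diverging field-energy prefactor away from the nuclei, and a rescaled hydrogenic Pauli problem near each nucleus where subcriticality $\kappa_0<2/\pi$ gives the coercivity needed to discard the field). Your proposal follows essentially that same route, including the correct identification of where strict subcriticality enters, so it is in line with the cited proof; only minor bookkeeping points (the spin factor is already contained in the trace over $L^2(\mathbb{R}^3)\otimes\mathbb{C}^2$, and the correlation error contains a potential term $CZ^{3/2}sg$ near the nuclei, not just a constant) would need tightening.
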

The goal of this paper is to extend Theorem \ref{magnetic.relativistic.scott.correction} to the critical case $\max_k(Z_k\alpha) = 2/\pi$. 
It is natural to expect the domain of validity to be the same as in the case of Theorem~\ref{relativistic.scott.correction} (without magnetic field).
However, from a mathematical point of view this is delicate. In the critical case it is never allowed to use a fraction of the kinetic energy to control error terms. The solution in \cite{SSS} is to consider the `Hardy operator', i.e. combined kinetic+Coulomb terms, as a unit and use properties of this operator to control error terms. This is also the strategy in the present paper using recently obtained Hardy-Lieb-Thirring inequalities for the Pauli operator \cite{BF}.
We also want to point to another detail: It is convenient to carry out a localization in space to facilitate analysis. However, due to the square root, the localization of the kinetic energy is non-trivial in the relativistic setting. In \cite{SSS} the explicit integral kernel of the kinetic energy is used to obtain very explicit localization errors. In the Pauli case such formulae are not available, and even the tool of diamagnetic inequalities cannot be used. Therefore, proofs have to be based essentially only on the spectral theorem. This makes it likely that the methods developed in the present paper are applicable for many other problems.

The rest of the paper shall be dedicated to proving the following theorem. Again, $S_2$ is the function from Theorem \ref{relativistic.scott.correction}---allow us to stress that this means that the relativistic Scott-term is not sensitive to the self-generated magnetic field.
\begin{theorem}
\label{magnetic.relativistic.critical.scott.correction}
Under the same assumptions as in Theorem \ref{relativistic.scott.correction}, as $Z \to \infty$ and $\alpha \to 0$ while $\max_k (Z z_k \alpha)$ $\leq 2/\pi$,
\begin{gather}
E = Z^{7/3}E^{TF}(z, r) + 2\sum_{k = 1}^M Z_k^2 S_2(Z_k\alpha) + o(Z^2).
\end{gather}
\end{theorem}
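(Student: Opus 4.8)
\emph{Strategy; the upper bound.} The upper bound is immediate. Since $\mathcal T_m^{(\alpha)}(0)=\sqrt{\alpha^{-2}(\sigma\cdot p_m)^2+\alpha^{-4}}-\alpha^{-2}=\sqrt{\alpha^{-2}p_m^2+\alpha^{-4}}-\alpha^{-2}$ and the field energy $(8\pi\alpha^2)^{-1}\|\nabla\times A\|_2^2$ vanishes for $A\equiv0$, choosing $A\equiv0$ in the variational problem defining $E$ gives $E\le E_0(Zz,Z^{-1/3}r)$, and Theorem \ref{relativistic.scott.correction} --- which already allows $\max_k(Zz_k\alpha)\le2/\pi$ --- yields $E\le Z^{7/3}E^{TF}(z,r)+2\sum_{k=1}^MZ_k^2S_2(Z_k\alpha)+\mathcal O(Z^{2-1/30})$. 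The content of the theorem is therefore the matching lower bound, uniformly in the critical regime, in particular also when some $Z_k\alpha$ equals $2/\pi$. To this end I would reduce the electron--electron repulsion to the Thomas--Fermi mean field: writing $\sum_{m<n}|x_m-x_n|^{-1}$ through $\rho^{TF}$ and estimating the indirect term by a Lieb--Oxford-type inequality, $H(A)\ge\sum_{m=1}^Z\big(\mathcal T_m^{(\alpha)}(A)-V^{TF}(x_m)\big)-D(\rho^{TF})-(\text{errors})$; hence, by the Pauli principle, the bottom of the spectrum of $H(A)$ plus the field energy is at least $\tr\big(\mathcal T^{(\alpha)}(A)-V^{TF}\big)_-+(8\pi\alpha^2)^{-1}\|\nabla\times A\|_2^2-D(\rho^{TF})-(\text{errors})$, the one-particle operator $\mathcal T^{(\alpha)}(A)-V^{TF}$ acting on $L^2(\mathbb R^3)\otimes\mathbb C^2$. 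By the semiclassical identity \eqref{equation.semiclassical.V.TF} and the scalings \eqref{equation.thomas.fermi.scaling}, \eqref{eq:TF-scaling}, it remains to prove that this one-particle trace plus the field energy exceeds its Weyl value plus $\sum_k2Z_k^2S_2(Z_k\alpha)$, with total error $o(Z^2)$.

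\emph{Localization of the Pauli kinetic energy.} Fix an intermediate scale $\ell$ with $Z^{-1/3}\ll\ell\ll1$ (to be optimized as a power of $Z$) and a smooth partition of unity $\sum_{k=0}^M\theta_k^2\equiv1$ with $\theta_k$ supported within distance $\ell$ of $R_k$ for $k\ge1$ and $\theta_0$ supported away from all nuclei, so $|\nabla\theta_k|\lesssim\ell^{-1}$. The delicate step, flagged in the introduction, is to localize the relativistic Pauli kinetic energy with no explicit integral kernel and no diamagnetic inequality available. Writing $\Pi:=\alpha^{-2}[\sigma\cdot(p-A)]^2+\alpha^{-4}$ and using $\sqrt\Pi=\tfrac1\pi\int_0^\infty\tfrac{\Pi}{\Pi+t}\,\tfrac{dt}{\sqrt t}=\tfrac1\pi\int_0^\infty\big(1-t(\Pi+t)^{-1}\big)\tfrac{dt}{\sqrt t}$, one has $\sqrt\Pi-\sum_k\theta_k\sqrt\Pi\,\theta_k=\tfrac12\sum_k[\theta_k,[\theta_k,\sqrt\Pi]]=\tfrac1{2\pi}\int_0^\infty\sum_k[\theta_k,[\theta_k,t(\Pi+t)^{-1}]]\tfrac{dt}{\sqrt t}$, which is bounded below by $-L_{\mathrm{loc}}$ through resolvent commutator estimates controlling $[\theta_k,\Pi]$ by $\nabla\theta_k$ and $\sqrt\Pi$. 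The essential constraint, absent in the subcritical argument of \cite{EFS}, is that at criticality no fraction of $\mathcal T^{(\alpha)}(A)$ may be given up; hence $L_{\mathrm{loc}}$ and every later error must either be genuinely $o(Z^2)$ by the choice of $\ell$, or be absorbed while keeping the critical Hardy weight $\tfrac{2}{\pi\alpha|x|}$ attached --- treating $\mathcal T^{(\alpha)}(A)-\tfrac{2}{\pi\alpha|x|}$ as a single unit, as in \cite{SSS} --- and the tool for the latter is the Hardy--Lieb--Thirring inequality for the Pauli operator of \cite{BF}.

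\emph{Exterior region and the regimes at the nuclei.} On $\supp\theta_0$ one has $|V^{TF}|\le C\,d_R^{-4}$ with $d_R\gtrsim\ell$ by \eqref{estimate.VTF.asymptotic.Coulomb.4th.power}, and a localized (magnetic) Lieb--Thirring bound for the Pauli operator, as provided by \cite{BF}, gives $\tr\big(\theta_0(\mathcal T^{(\alpha)}(A)-V^{TF})\theta_0\big)_-\ge\tfrac{2}{(2\pi)^3}\iint\big(\tfrac{p^2}{2}-V^{TF}(x)\big)_-\theta_0(x)^2\,dx\,dp-o(Z^2)-c\,\alpha\,\|\nabla\times A\|_2^2$, where the field error is dominated by $(8\pi\alpha^2)^{-1}\|\nabla\times A\|_2^2$ and where replacing $\mathcal T^{(\alpha)}(A)$ by $\tfrac{p^2}{2}$ where $V^{TF}$ is small costs only $o(Z^2)$. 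Near $R_k$ I would invoke \eqref{estimate.VTF.Coulomb}: $V^{TF}(x)=\tfrac{Z_k}{|x-R_k|}+\mathcal O\big((Z^{-1/3}R_{\min}^{-1}+1)Z^{4/3}\big)$ for $|x-R_k|\le R_{\min}/2$, so after absorbing this bounded perturbation (and the tail of $V^{TF}$) it suffices to control $\tr\big(\theta_k(\mathcal T^{(\alpha)}(A)-Z_k/|x-R_k|)\theta_k\big)_-$. For this: (a) remove the self-generated field at cost $o(Z^2)$, controlling the Pauli term $-\sigma\cdot B$ and the cross terms by the field energy together with the Pauli Hardy--Lieb--Thirring bound --- this is precisely why the Scott term is insensitive to $A$; (b) remove the cutoff $\theta_k$ and rescale $x\mapsto\alpha x$ using estimates that stay valid when $Z_k\alpha=2/\pi$, which identifies the leading term with $\alpha^{-1}$ times the (regularized) sum of the negative eigenvalues of $\sqrt{(\sigma\cdot p)^2+1}-1-\tfrac{Z_k\alpha}{|x|}$, i.e. with $2Z_k^2S_2(Z_k\alpha)$ by the definition of $S_2$ and its continuity up to the endpoint $2/\pi$.

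\emph{Assembling and optimizing.} Combining the previous steps, the Weyl pieces reconstruct $Z^{7/3}E^{TF}(z,r)$ (via \eqref{equation.semiclassical.V.TF}), each nucleus contributes $2Z_k^2S_2(Z_k\alpha)$, and the total error is a finite sum --- $L_{\mathrm{loc}}(\ell)$, the error from replacing $V^{TF}$ by its Coulomb singularity on the balls, the relativistic-versus-nonrelativistic and exchange corrections in the bulk, and the magnetic errors controlled by the field energy --- each of which is $o(Z^2)$ for a suitable power $\ell=Z^{-\sigma}$. I expect the main obstacle to be the localization step together with part (a) above: carrying out the spatial localization of the relativistic Pauli kinetic energy, and removing the self-generated field near the critical nuclei, \emph{without} borrowing kinetic energy --- which is exactly what forces one to keep the critical Hardy term attached throughout and to rely on the Hardy--Lieb--Thirring inequality for the Pauli operator of \cite{BF} at every place where the subcritical proof of \cite{EFS} would have sacrificed an $\varepsilon$-fraction of $\mathcal T^{(\alpha)}(A)$.
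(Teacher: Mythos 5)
Your overall architecture -- trivial upper bound via $A=0$ and Theorem \ref{relativistic.scott.correction}, reduction to a one-particle trace by a correlation inequality, spatial localization, semiclassics away from the nuclei, and identification of $2Z_k^2S_2(Z_k\alpha)$ near each nucleus after removing the self-generated field -- is the paper's architecture, and you correctly identify the central obstruction (no fraction of the kinetic energy may be sacrificed at criticality) and the central external tool (the Pauli--Hardy--Lieb--Thirring inequality of \cite{BF}). However, there are two genuine gaps at exactly the decisive technical points. First, your localization of $\sqrt{\Pi}$ via the resolvent representation and double commutators, ``controlling $[\theta_k,\Pi]$ by $\nabla\theta_k$ and $\sqrt{\Pi}$,'' is precisely the kind of step that reintroduces an $\varepsilon$-fraction of the kinetic energy and therefore fails at $Z_k\alpha=2/\pi$; you flag the constraint but do not resolve it. The paper's resolution is different: apply IMS to the operator $\beta^{-2}T_h(\widetilde A)+\beta^{-4}$ \emph{under} the square root, absorb the IMS error into the mass term $\beta^{-4}$ (possible because the error is $O(\beta^{-2}h^2u^{-2})\ll\beta^{-4}$ for the chosen scales), and then use operator monotonicity of $\sqrt{\cdot}$ together with the pull-out formula $\big(\sum_n S_nA_nS_n\big)^{1/2}\ge\sum_nS_nA_n^{1/2}S_n$ (Proposition \ref{proposition.pull.out}). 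This produces localized operators with the localization error sitting harmlessly inside the square root, at no cost in kinetic energy.

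Second, you do not identify the need for a new Lieb--Thirring-type inequality for the massive magnetic Pauli operator allowing critical Coulomb singularities $\tfrac{2}{\pi\alpha\,d_R(x)}$ (the magnetic combined Daubechies--Lieb--Yau inequality, Theorem \ref{thm:MCDLY}, itself proved via localization, the critical Pauli--Hardy--Lieb--Thirring inequality, a dyadic partition, and BKS). This inequality is indispensable in two places your sketch glosses over as ``(errors)'': (i) eliminating the term $CZ^{3/2}sg(x)$ produced by the correlation inequality \emph{without} changing the coefficient of $V^{TF}$ -- in \cite{EFS} this step yields the prefactor $(1-Z^{-1/2})^{-1}>1$ in front of $V^{TF}$, which renders the localized operator unbounded below at criticality, so Theorem \ref{theorem.trace.splitting} must be proved by splitting off an $\varepsilon$-fraction of the \emph{whole} Hardy operator and applying Theorem \ref{thm:MCDLY} to it; and (ii) obtaining the a priori bound $\int|\nabla\otimes A|^2\le C\sqrt R/\Lambda$ on near-minimizing fields and reducing $D\neq0$ to $D=0$ in the model problem near each nucleus, without which your step (a) (removal of the field at cost $o(Z^2)$) cannot be closed. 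A minor further point: in the exterior region the semiclassical Weyl asymptotics do not come from \cite{BF} (which gives only Hardy--Lieb--Thirring lower bounds); the paper imports them from \cite{EFS}, whose proofs in those regions remain valid at criticality because the coupling there is subcritical.
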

The proof of Theorem \ref{magnetic.relativistic.critical.scott.correction} follows roughly the lines of \cite{EFS}, but with many important modifications. In particular, one of the main tools that allowed us to include the critical case $2/\pi$ is our ``Pauli-Hardy-Lieb-Thirring'' inequality from \cite{BF} (see Theorem~\ref{thm:CPHLT} below)
The next two sections will be dedicated to the proof of Theorem \ref{magnetic.relativistic.critical.scott.correction}.
\end{subsection}
\end{section}

\section{Analytical tools}
In this section we collect a number of tools that we will need in the proof.

First of all, the Scott term is characterized by the following.
\begin{lemma}[{\cite[Lemma 4.3]{SSS}}]
\label{lemma.2S.non.magnetic}
Let $0 < \alpha \leq 2/\pi$.
Let $\phi \in C_0^{\infty}({\mathbb R}^3)$ with $\sqrt{1-\phi^2} \in C^{\infty}({\mathbb R}^3)$ and such that $\phi(x) = 1$ for $|x|\leq 1$, $\phi(x) = 0$ for $|x| \geq 2$, and define, for $R>0$,
$\phi_R(x) := \phi(x/R)$.

Then,
\begin{align}
\lim_{R \to \infty}\,\sup_{0 < \alpha \leq 2/\pi}\left|\tr\left[\phi_R\left(\sqrt{\alpha^{-2}p^2 + \alpha^{-4}} - \alpha^{-2} - \frac{1}{|x|} \right)\phi_R\right]_- - I_R - 2S_2(\alpha)\right| = 0,
\end{align}
with
\begin{gather}\label{eq:IR}
I_R := \frac{2}{(2\pi)^3}\int_{\mathbb{R}^3}\!\int_{\mathbb{R}^3}\!\phi_R(x)^2\left(\frac{p^2}{2} - \frac{1}{|x|}\right)_-\,dx\,dp < 0.
\end{gather}
\end{lemma}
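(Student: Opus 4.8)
\medskip
\noindent\textit{Proof idea.}
Abbreviate $\mathcal{T}^{(\alpha)} := \sqrt{\alpha^{-2}p^{2}+\alpha^{-4}}-\alpha^{-2}$ and $A_\alpha := \mathcal{T}^{(\alpha)}-|x|^{-1}$. The plan is to show that, for fixed $\alpha$, the renormalized trace $\tr[\phi_{R}A_\alpha\phi_{R}]_- - I_{R}$ converges as $R\to\infty$ to a finite number $2S_2(\alpha)$, independent of the choice of $\phi$; that the convergence is uniform in $\alpha\in(0,2/\pi]$; and that $S_2$ is the function of Theorem~\ref{relativistic.scott.correction}. Two preliminary facts are needed. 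First, $A_\alpha$ is bounded below uniformly in $\alpha\in(0,2/\pi]$: split $\mathcal{T}^{(\alpha)}$ into its low-momentum regime ($\mathcal{T}^{(\alpha)}\gtrsim p^{2}$ on $\{|p|\le\alpha^{-1}\}$), where an elementary uncertainty bound absorbs the Coulomb term, and its high-momentum regime ($\mathcal{T}^{(\alpha)}\ge\alpha^{-1}|p|-\alpha^{-2}$), where the \emph{relativistic} Hardy inequality $|p|\ge\tfrac{2}{\pi|x|}$ absorbs it---at $\alpha=2/\pi$ the latter must carry the entire Coulomb singularity. Second, for fixed $R$ the trace is finite, by a relativistic Lieb-Thirring estimate for $\phi_{R}\mathcal{T}^{(\alpha)}\phi_{R}-\phi_{R}|x|^{-1}\phi_{R}$ together with the localization bound below.

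\medskip
\noindent The argument separates a ``Scott zone'' at the nucleus from a semiclassical bulk. Fix a smooth radial partition $\chi_{L}^{2}+\eta_{L}^{2}=1$ with $\chi_{L}\equiv1$ on $\{|x|\le L\}$ and $\supp\chi_{L}\subset\{|x|\le2L\}$, where $1\ll L\ll R$. An IMS-type localization adapted to the nonlocal operator $\mathcal{T}^{(\alpha)}$ (together with Lieb-Thirring bounds to pass to traces) gives
\begin{gather*}
\tr[\phi_{R}A_\alpha\phi_{R}]_- = \tr[\chi_{L}A_\alpha\chi_{L}]_- + \tr[\eta_{L}\phi_{R}A_\alpha\phi_{R}\eta_{L}]_- + o_{L}(1),
\end{gather*}
uniformly in $R$ and $\alpha$ (using $\phi_{R}\equiv1$ on $\supp\chi_{L}$ once $R>4L$); here and below $o_{L}(1)$ denotes a quantity that tends to $0$ as $L\to\infty$ uniformly in the other parameters. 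Since $\mathcal{T}^{(\alpha)}$ is nonlocal, the localization error is \emph{not} the familiar $|\nabla\chi_{L}|^{2}+|\nabla\eta_{L}|^{2}$; it has to be controlled through the explicit integral kernel of $\sqrt{\alpha^{-2}p^{2}+\alpha^{-4}}$, and is supported near $\{|x|\sim L\}$ with size $\mathcal{O}(L^{-2})$ (the momenta relevant there being $\mathcal{O}(L^{-1})$, so $\mathcal{T}^{(\alpha)}$ acts as $p^{2}/2$). Split $I_{R}=I_{R}^{\mathrm{in}}+I_{R}^{\mathrm{out}}$ correspondingly, cut off by $\chi_{L}^{2}$ and $\eta_{L}^{2}\phi_{R}^{2}$.

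\medskip
\noindent On the bulk $\{|x|\ge L\}$ the Coulomb potential is $\le L^{-1}$, so the negative spectral subspace of $\eta_{L}\phi_{R}A_\alpha\phi_{R}\eta_{L}$ is carried by momenta $|p|\lesssim L^{-1/2}$, where $\mathcal{T}^{(\alpha)}=p^{2}/2+\mathcal{O}(\alpha^{2}L^{-2})=p^{2}/2+\mathcal{O}(L^{-2})$ as $\alpha\le2/\pi$. Coherent-state upper and lower bounds for this smooth, now \emph{non-singular} potential yield $\tr[\eta_{L}\phi_{R}A_\alpha\phi_{R}\eta_{L}]_- = I_{R}^{\mathrm{out}}+o_{L}(1)$, uniformly in $R$ and $\alpha$, so the bulk cancels in $\tr[\phi_{R}A_\alpha\phi_{R}]_--I_{R}$ up to $o_{L}(1)$. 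This leaves $F_{L}(\alpha):=\tr[\chi_{L}A_\alpha\chi_{L}]_--I_{R}^{\mathrm{in}}$, which depends only on $L$ and $\alpha$; comparing $F_{L'}(\alpha)$ with $F_{L}(\alpha)$ for $L'>L$ reduces to a semiclassical estimate on the annulus $\{L\le|x|\le2L'\}$, where the potential is smooth and $\le L^{-1}$, hence this difference is $o_{L}(1)$ uniformly in $\alpha$ and $(F_{L}(\alpha))_{L}$ is Cauchy. Letting $R\to\infty$ first (with $L$ fixed) and then $L\to\infty$ gives $\lim_{R\to\infty}\bigl(\tr[\phi_{R}A_\alpha\phi_{R}]_--I_{R}\bigr)=\lim_{L\to\infty}F_{L}(\alpha)=:2S_2(\alpha)$, which is exactly the constant term in the two-term Weyl asymptotics of $A_\alpha$ with a large cut-off---the Scott correction produced by the Coulomb singularity. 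Matching this against the definition of $S_2$ in \cite{SSS} (a renormalized trace for $\alpha^{-2}\bigl(\sqrt{p^{2}+1}-1-\alpha|x|^{-1}\bigr)$, to which $A_\alpha$ is unitarily equivalent under the dilation $x\mapsto\alpha x$) via the same localization identifies it with the function of Theorem~\ref{relativistic.scott.correction}. Continuity and monotonicity of $S_2$ follow from the monotonicity of $\alpha\mapsto A_\alpha$ and the uniformity of the estimates; in particular, $0\le p^{2}/2-\mathcal{T}^{(\alpha)}\le\tfrac{\alpha^{2}}{8}|p|^{4}$ reduces matters to the non-relativistic Scott analysis as $\alpha\to0$, recovering $2S_2(0)=\tfrac{1}{2}$.

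\medskip
\noindent The main obstacle, I expect, is making \emph{every} error bound above uniform up to the critical value $\alpha=2/\pi$. There one cannot spend a positive fraction of the kinetic energy to dominate error terms, so the a priori bound, the localization estimate and the coherent-state remainders must all be driven by the relativistic Hardy inequality used as an indivisible unit---a ``Hardy operator''---rather than by terms of the form $\varepsilon\,\mathcal{T}^{(\alpha)}$; in the magnetic counterpart this role is taken by the Pauli-Hardy-Lieb-Thirring inequality of \cite{BF}. This is compounded by the non-locality of $\mathcal{T}^{(\alpha)}$, which precludes commutator-based localization and forces the analysis to rest on the explicit kernel of the square-root operator---equivalently, on the spectral theorem.
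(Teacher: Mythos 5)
This lemma is not proved in the paper at all: it is imported verbatim as \cite[Lemma 4.3]{SSS} and used as a black box, so there is no in-paper argument to measure your proposal against. What you have written is, in architecture, a faithful reconstruction of how the result is actually established in \cite{SSS}: a uniform-in-$\alpha$ lower bound for the ``Hardy operator'' $\mathcal{T}^{(\alpha)}-|x|^{-1}$, a splitting into a Scott zone near the singularity and a semiclassical bulk, cancellation of the bulk against the corresponding piece of $I_R$, a Cauchy-in-$L$ argument that simultaneously defines $2S_2(\alpha)$ and proves existence of the limit, and identification with the function of Theorem \ref{relativistic.scott.correction} via the dilation $x\mapsto\alpha x$. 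Your closing paragraph also correctly isolates the two genuinely hard points: localization of the nonlocal kinetic energy (done in \cite{SSS} through the explicit integral kernel of $\sqrt{\alpha^{-2}p^2+\alpha^{-4}}$, exactly as you say) and uniformity of every error term up to the critical coupling, where no fraction of the kinetic energy may be sacrificed.

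As a standalone proof, however, the sketch leaves precisely those hard points as assertions. Two gaps deserve naming. First, the finiteness of $\tr[\phi_R A_\alpha\phi_R]_-$ and the control of the Scott-zone trace uniformly for $\alpha$ up to $2/\pi$ require a combined Daubechies--Lieb--Yau inequality admitting the critical Coulomb singularity (\cite[Thm.~2.8]{SSS}; the magnetic analogue is Theorem \ref{thm:MCDLY} here); invoking ``a relativistic Lieb--Thirring estimate'' is not enough, since ordinary Lieb--Thirring bounds degenerate at criticality. Second, the claim that the bulk semiclassics is $o_L(1)$ \emph{uniformly in $R$} is not a single coherent-state estimate: the exterior region $\{L\le|x|\le 2R\}$ has unboundedly many negative-energy states as $R\to\infty$, and one needs a multiscale (dyadic) decomposition in which each annulus is rescaled so that the effective semiclassical parameter improves with the scale and the errors sum to $CL^{-\epsilon}$. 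Both ingredients exist in \cite{SSS}, so the outline is sound, but neither is supplied by the proposal itself.
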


The following formula is very useful to deal with the non-locality of the square root operator.
\begin{proposition}
\label{proposition.pull.out}
(Pull-out formula.) Let $0 < a \leq 1$; $A_1, A_2, \ldots$ a collection of positive, self-adjoint operators; and $S_1, S_2, \ldots$ a collection of bounded, self-adjoint operators such that $\sum_{n = 1}^{\infty}S_n^2 = 1$. Then,
\begin{gather}
\left(\sum_{n = 1}^{\infty}S_n A_n S_n\right)^a \geq \sum_{n = 1}^{\infty}S_n A_n^a S_n.
\end{gather}
\end{proposition}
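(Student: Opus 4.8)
The plan is to reduce the inequality, by means of the standard integral representation of fractional powers, to a resolvent inequality, and then to obtain that resolvent inequality from the observation that $X \mapsto \sum_n S_n X S_n$ is compression by an isometry. The case $a = 1$ is an identity, so I assume $0 < a < 1$; all operator inequalities are understood between quadratic forms.

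First I would use that for $t \ge 0$,
\[
t^a \;=\; \frac{\sin(\pi a)}{\pi} \int_0^\infty \frac{t}{t+s}\, s^{a-1}\, ds \;=\; \frac{\sin(\pi a)}{\pi} \int_0^\infty \bigl(1 - s(t+s)^{-1}\bigr)\, s^{a-1}\, ds .
\]
Writing $C := \sum_n S_n A_n S_n$, applying this identity through the spectral theorem, and using both $\sum_n S_n^2 = 1$ and $C + s = \sum_n S_n (A_n + s) S_n$, one gets (all terms being non-negative, so Tonelli's theorem legitimates interchanging $\sum_n$ and $\int_0^\infty$, and one may argue at the level of forms)
\[
C^a \;-\; \sum_n S_n A_n^a S_n \;=\; \frac{\sin(\pi a)}{\pi} \int_0^\infty s\, \Bigl( \sum_n S_n (A_n + s)^{-1} S_n \,-\, (C+s)^{-1} \Bigr)\, s^{a-1}\, ds .
\]
Since $\sin(\pi a) > 0$, the proposition follows once I establish, for each fixed $s > 0$, the resolvent inequality $\sum_n S_n (A_n + s)^{-1} S_n \ge (C + s)^{-1}$.

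To prove this, I would pass to the Hilbert space $\widehat{\mathcal H} := \bigoplus_n \mathcal H$ and the map $V : \mathcal H \to \widehat{\mathcal H}$, $V \psi := (S_n \psi)_n$; then $\|V\psi\|^2 = \sum_n \|S_n\psi\|^2 = \langle \psi, (\sum_n S_n^2)\psi\rangle = \|\psi\|^2$, so $V$ is an isometry, $V^* V = 1$. With $\mathcal B := \bigoplus_n (A_n + s)$, self-adjoint with $\mathcal B \ge s > 0$, a direct computation gives $V^* \mathcal B V = C + s$ and $V^* \mathcal B^{-1} V = \sum_n S_n (A_n + s)^{-1} S_n$, so the claim is $(V^* \mathcal B V)^{-1} \le V^* \mathcal B^{-1} V$. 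Setting $W := \mathcal B^{1/2} V (V^* \mathcal B V)^{-1/2}$, one checks $W^* W = 1$, hence $W W^* \le 1$, i.e. $\mathcal B^{1/2} V (V^* \mathcal B V)^{-1} V^* \mathcal B^{1/2} \le 1$; conjugating by $\mathcal B^{-1/2}$ gives $V (V^* \mathcal B V)^{-1} V^* \le \mathcal B^{-1}$, and conjugating this by $V^*$ on the left and $V$ on the right (using $V^* V = 1$ once more) yields $(V^* \mathcal B V)^{-1} \le V^* \mathcal B^{-1} V$, as desired.

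I expect the algebraic core — the integral-representation reduction together with the short isometry argument for the resolvents — to be the easy part. Where I would take care is the functional-analytic bookkeeping when the $A_n$ are unbounded: giving $C = \sum_n S_n A_n S_n$ and $\mathcal B = \bigoplus_n(A_n + s)$ precise meaning as self-adjoint operators, checking the domain inclusions used above (such as $\mathrm{Ran}\, V (V^* \mathcal B V)^{-1/2} \subseteq \mathrm{dom}\,\mathcal B^{1/2}$, which follows from the norm identity just computed), and keeping the form inclusions straight in the $s$-integral. If one prefers, this is circumvented by first treating the truncations $A_n \wedge \Lambda$ — where all sums are bounded operators — and then letting $\Lambda \to \infty$, using operator monotonicity of $t \mapsto t^a$ (L\"{o}wner--Heinz) and monotone convergence of quadratic forms.
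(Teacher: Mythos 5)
Your argument is correct, and it is essentially the standard proof of the pull-out formula: the paper itself gives no proof but defers to the comments around Equation (1.12) of \cite{BF}, where the same reduction is used, namely the integral representation $t^a = \frac{\sin(\pi a)}{\pi}\int_0^\infty \frac{t}{t+s}\,s^{a-1}\,ds$ combined with the resolvent inequality $\sum_n S_n (A_n+s)^{-1}S_n \geq \bigl(\sum_n S_n A_n S_n + s\bigr)^{-1}$ (an instance of the operator Jensen/Cauchy--Schwarz inequality, which you derive cleanly via the isometry $V$). Your handling of the unbounded case, including the truncation fallback using L\"owner--Heinz and monotone convergence of forms, is adequate.
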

For a proof of Proposition \ref{proposition.pull.out}, the reader is referred to the comments surrounding 
\cite[Equation (1.12)]{BF}.

We also use the Critical Pauli-Hardy-Lieb-Thirring inequality below, which is a particular case of \cite[Theorem 1.1]{BF} or \cite[Theorem 1.2]{BF}. (We remark here that $x_- \equiv \min (x, 0)$).

\begin{theorem}[Critical Pauli-Hardy-Lieb-Thirring inequality]\label{thm:CPHLT}
For all electric potentials $V \in L^1_{\rm loc}({\mathbb R}^3)$ and all vector potentials $A \in H^1({\mathbb R}^3;{\mathbb R}^3)$ we have with $B = \nabla \times A$,
\begin{equation}
\label{inequality.pauli.hardy.lieb.thirring}
\tr\left(|\sigma\cdot (p - A)| - \frac{2}{\pi |x|} - V\right)_{-} \geq -C\left(\int_{\mathbb{R}^3}|B|^2\,dx + \int_{\mathbb{R}^3}V_+^4\,dx\right),
\end{equation}
for some universal constant $C>0$ independent of $A$ and $V$.
\end{theorem}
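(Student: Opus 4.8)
My plan is to prove the two bounds separately. \emph{The upper bound is immediate:} choosing $A\equiv 0$ in the variational problem defining $E$ removes the field-energy term and turns each $\mathcal{T}_m^{(\alpha)}(0)$ into the field-free operator $\sqrt{\alpha^{-2}p_m^2+\alpha^{-4}}-\alpha^{-2}$ (since $[\sigma\cdot p]^2=p^2$), so $E\leq E_0(Zz,Z^{-1/3}r)$, and Theorem~\ref{relativistic.scott.correction}---whose hypothesis $\max_k(Z_k\alpha)\leq 2/\pi$ is exactly ours---gives the required upper bound. The whole content of the theorem is therefore the matching lower bound. For that, I would first reduce the many-body problem to an effective one-body operator in the Thomas--Fermi potential: bounding the electron--electron repulsion from below by a Lieb--Oxford/correlation inequality, $\sum_{m<n}|x_m-x_n|^{-1}\geq\sum_m\int\rho^{TF}(y)|x_m-y|^{-1}\,dy-D(\rho^{TF})-o(Z^2)$, and combining this with the nuclear attraction and the Euler--Lagrange equation for $\rho^{TF}$ yields, for every admissible $A$,
\[
\text{inf spec }H(A)+\tfrac{1}{8\pi\alpha^2}\|B\|_2^2\;\geq\;\tr\big[\mathcal{T}^{(\alpha)}(A)-V^{TF}\big]_-+\tfrac{1}{8\pi\alpha^2}\|B\|_2^2-D(\rho^{TF})-o(Z^2).
\]
Since $\tr[\mathcal{T}^{(\alpha)}(0)-V^{TF}]_-$ is, to leading order, $Z^{7/3}E^{TF}(z,r)+D(\rho^{TF})$ by \eqref{equation.semiclassical.V.TF}, it remains to show, uniformly in admissible $A$,
\[
\tr\big[\mathcal{T}^{(\alpha)}(A)-V^{TF}\big]_-+\tfrac{1}{8\pi\alpha^2}\|B\|_2^2\;\geq\;\frac{2}{(2\pi)^3}\!\int\!\!\int\Big(\frac{p^2}{2}-V^{TF}\Big)_- + 2\sum_{k=1}^M Z_k^2 S_2(Z_k\alpha)+o(Z^2),
\]
the $-D(\rho^{TF})$ above cancelling the one implicit on the right. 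The decisive feature, here and throughout, is that at criticality no fraction of $\mathcal{T}^{(\alpha)}(A)$ may be spent on errors: wherever the subcritical treatment of \cite{EFS} uses an ordinary magnetic Lieb--Thirring bound, one must here invoke the critical Pauli--Hardy--Lieb--Thirring inequality, Theorem~\ref{thm:CPHLT}, which plays the role of the critical Hardy--Lieb--Thirring estimate used in \cite{SSS}.

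Next I would localize. Take a multi-scale partition of unity $1=\chi_0^2+\sum_{k=1}^M\chi_k^2$ separating the nuclei at the Thomas--Fermi length $\sim Z^{-1/3}$ and, inside each $\chi_k$, a further cutoff isolating the genuinely Coulombic Scott region $|x-R_k|\lesssim RZ_k^{-1}$ from an intermediate annulus, with $R$ a large parameter sent to infinity after $Z\to\infty$. Because of the square root, $\mathcal{T}^{(\alpha)}(A)$ cannot be localized via an explicit kernel as in \cite{SSS}, nor via a diamagnetic inequality; instead I would combine the IMS identity $\sum_j\chi_j[\sigma\cdot(p-A)]^2\chi_j=[\sigma\cdot(p-A)]^2-\sum_j|\nabla\chi_j|^2$ (valid for the Pauli operator because $\sigma\cdot B$ commutes with multiplication) with the pull-out formula, Proposition~\ref{proposition.pull.out} at exponent $1/2$, applied to the positive operator $\alpha^{-2}[\sigma\cdot(p-A)]^2+\alpha^{-4}+\alpha^{-2}\sum_i|\nabla\chi_i|^2=\sum_j\chi_j\big(\alpha^{-2}[\sigma\cdot(p-A)]^2+\alpha^{-4}+\alpha^{-2}\sum_i|\nabla\chi_i|^2\big)\chi_j$, to obtain
\[
\mathcal{T}^{(\alpha)}(A)\;\geq\;\sum_{j=0}^M\chi_j\Big(\sqrt{\alpha^{-2}[\sigma\cdot(p-A)]^2+\alpha^{-4}+\alpha^{-2}\textstyle\sum_i|\nabla\chi_i|^2}-\alpha^{-2}\Big)\chi_j.
\]
On each $\mathrm{supp}\,\chi_j$ the extra term $\alpha^{-2}\sum_i|\nabla\chi_i|^2$ raises the square root by a bounded amount concentrated in thin shells where it is a small relative perturbation of $V^{TF}$, so the accumulated localization error is $o(Z^2)$ as $R\to\infty$. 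Splitting the field energy as $\tfrac{1}{8\pi\alpha^2}\|B\|_2^2=\sum_j\theta_j\tfrac{1}{8\pi\alpha^2}\|B\|_2^2$ with $\sum_j\theta_j=1$, the problem decouples into one estimate per region.

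On $\mathrm{supp}\,\chi_0$ the Coulomb singularities are cut off, $|V^{TF}|\lesssim Z^{4/3}$, the semiclassical parameter is $\sim Z^{-1/3}$, and $\mathcal{T}^{(\alpha)}(0)-p^2/2$ is of lower order; a magnetic coherent-state lower bound (its field-energy share paying for the magnetic term, with no criticality obstruction) gives the bulk part of the Weyl integral up to $o(Z^2)$. The intermediate annulus is handled by a relativistic semiclassical estimate, again keeping $\mathcal{T}^{(\alpha)}(A)$ glued to the Coulombic part of $V^{TF}$. In each Scott region I would: (i) replace $V^{TF}$ by $Z_k|x-R_k|^{-1}$ using \eqref{estimate.VTF.Coulomb}, the $O(Z^{4/3})$ remainder being a small relative perturbation of $\mathcal{T}^{(\alpha)}(A)-Z_k|x-R_k|^{-1}$ whose negative states are counted via Theorem~\ref{thm:CPHLT}; (ii) rescale $x-R_k=Z_k^{-1}y$, which turns the operator into $Z_k^2$ times the field-free operator of Lemma~\ref{lemma.2S.non.magnetic} at effective coupling $Z_k\alpha\in(0,2/\pi]$ plus a magnetic perturbation, with the rescaled field energy appearing as $\tfrac{Z_k}{8\pi(Z_k\alpha)^2}\theta_k\|\tilde B\|_2^2$; (iii) show that the magnetic perturbation cannot depress the rescaled trace by more than this rescaled field energy, up to $o(1)$, by splitting into ``large'' rescaled fields---where $\tfrac{Z_k}{8\pi(Z_k\alpha)^2}\theta_k\|\tilde B\|_2^2$ alone dominates the $-C\|\tilde B\|_2^2$ lower bound of Theorem~\ref{thm:CPHLT}---and ``small'' rescaled fields, where one needs a quantitative stability of the trace of the critical relativistic Pauli--Coulomb operator under small perturbations of the vector potential, obtained from Theorem~\ref{thm:CPHLT} together with Lemma~\ref{lemma.2S.non.magnetic}. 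This is exactly where the insensitivity of the Scott term to the self-generated field comes from, and it produces $I_{R,k}+2Z_k^2 S_2(Z_k\alpha)+o(Z^2)$ per nucleus. Recombining the localized Weyl contributions ($\sum_k I_{R,k}$ together with the bulk and intermediate pieces) into $\frac{2}{(2\pi)^3}\int\int(\frac{p^2}{2}-V^{TF})_-$ up to $o(Z^2)$ as $R\to\infty$ completes the lower bound.

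I expect two coupled difficulties to be the crux, both forced by criticality. First, the relativistic localization: extracting a usably small error from Proposition~\ref{proposition.pull.out} and the spectral theorem alone---no kernel, no diamagnetic inequality---and then controlling its interaction with the critical Coulomb singularity and with $B$. Second, the magnetic insensitivity of the Scott term at the endpoint $2/\pi$: because the critical Coulomb term must stay glued to $\mathcal{T}^{(\alpha)}(A)$ everywhere, every error (exchange, the TF-potential remainder $V^{TF}-Z_k|x-R_k|^{-1}$, the localization error, and the entire magnetic contribution) must be absorbed into the field energy via Theorem~\ref{thm:CPHLT} or shown to be $o(Z^2)$ by counting negative states of the critical Pauli--Hardy operator, never by a Cauchy--Schwarz trade against the kinetic energy. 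A subsidiary nuisance is making the iterated limit ($Z\to\infty$, then $R\to\infty$) uniform in $A$ and in $Z_k\alpha$ up to $2/\pi$.
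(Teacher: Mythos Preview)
Your proposal does not address the stated theorem at all. The statement you were asked to prove is Theorem~\ref{thm:CPHLT}, the \emph{Critical Pauli--Hardy--Lieb--Thirring inequality}
\[
\tr\Big(|\sigma\cdot(p-A)|-\tfrac{2}{\pi|x|}-V\Big)_-\;\geq\;-C\Big(\int|B|^2+\int V_+^4\Big),
\]
which is a purely one-body trace inequality for the massless Pauli operator with critical Hardy weight. What you have written is instead a proof outline for the \emph{main result of the paper}, Theorem~\ref{magnetic.relativistic.critical.scott.correction} (the magnetic relativistic Scott correction): upper bound by $A=0$, correlation estimate down to $\tr[\mathcal{T}^{(\alpha)}(A)-V^{TF}]_-$, multiscale localization, and Scott-region analysis. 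None of this bears on the inequality \eqref{inequality.pauli.hardy.lieb.thirring}; indeed, you repeatedly \emph{invoke} Theorem~\ref{thm:CPHLT} as a tool rather than proving it.

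For comparison: the paper does not prove Theorem~\ref{thm:CPHLT} either. It is quoted as a special case of \cite[Theorem~1.1 or Theorem~1.2]{BF}, and the proof lives entirely in that reference. A proof of the stated inequality would have to work with the massless operator $|\sigma\cdot p_A|$ and the critical Hardy potential directly (e.g.\ via the methods of \cite{BF}: controlling the spin term $\sigma\cdot B$ against a fractional kinetic energy, and combining with a fractional Hardy--Lieb--Thirring estimate), and has nothing to do with Thomas--Fermi theory, the function $S_2$, or many-body reductions. As it stands, your submission is a reasonable sketch of the wrong theorem and contains no argument for the one actually requested.
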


We also have a Daubechies type inequality for the Pauli operator. This is given in \cite[Theorem 2.2]{EFS},
\begin{theorem}[Daubechies inequality for Pauli operators]
\label{theorem.lieb.thirring.efs}
There is a $C > 0$ such that for every $\beta > 0$, potential $V$ with $V_+ \in L^{5/2}(\mathbb{R}^3)\cap L^4(\mathbb{R}^3)$, and magnetic fields $B = \nabla\times A \in L^2(\mathbb{R}^3)$, we have
\begin{align}
& \tr\left[\sqrt{\beta^{-2}[\sigma\cdot p_A]^2+ \beta^{-4}} - \beta^{-2} - V\right]_-\nonumber\\
\geq & -C\left[\int_{\mathbb{R}^3}V_+(x)^{5/2}\,dx + \beta^3\int_{\mathbb{R}^3}V_+(x)^4\,dx + \left(\int_{\mathbb{R}^3}B^2\,dx\right)^{3/4}\left(\int_{\mathbb{R}^3}V_+(x)^4\,dx\right)^{1/4}\right].
\end{align}
\end{theorem}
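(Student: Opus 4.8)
The plan is to prove matching asymptotic upper and lower bounds. The upper bound is immediate: $E$ is an infimum over admissible $A$ and the field energy $\tfrac{1}{8\pi\alpha^2}\|\nabla\times A\|_2^2$ is nonnegative, so $E\le\inf\mathrm{spec}\,H(0)$; since $[\sigma\cdot p]^2=p^2$, the operator $H(0)$ coincides with the pseudo-relativistic molecular Hamiltonian of Theorem~\ref{relativistic.scott.correction}, whence $E\le Z^{7/3}E^{TF}(z,r)+2\sum_{k=1}^M Z_k^2 S_2(Z_k\alpha)+\mathcal{O}(Z^{2-1/30})$. Everything is in the lower bound, which has to be uniform in $A$. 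For it I would fix an admissible $A$ and first pass to a one-body problem: a correlation (Lieb--Oxford) inequality replaces the electron repulsion by the mean field $\rho^{TF}\ast|\cdot|^{-1}$, completing the square against $\rho^{TF}$ produces the Coulomb self-energy term $-D(\rho^{TF})=-\tfrac12\iint|x-y|^{-1}\rho^{TF}(x)\rho^{TF}(y)\,dx\,dy$ together with an exchange error $\mathcal{O}\!\big(\int(\rho^{TF})^{4/3}\big)=\mathcal{O}(Z^{5/3})=o(Z^2)$ by the scaling \eqref{eq:TF-scaling} (bounded, where needed, via Theorem~\ref{theorem.lieb.thirring.efs}), and the Pauli principle bounds the bottom of the $N$-body spectrum below by the relevant sum of one-body negative eigenvalues. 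Trading $-D(\rho^{TF})$ for $Z^{7/3}E^{TF}(z,r)-\tfrac{2}{(2\pi)^3}\iint(\tfrac{p^2}{2}-V^{TF})_-$ via the semiclassical identity \eqref{equation.semiclassical.V.TF}, the theorem reduces to showing, uniformly in admissible $A$,
\[
\tr\big[\mathcal{T}^{(\alpha)}(A)-V^{TF}\big]_-+\tfrac{1}{8\pi\alpha^2}\|\nabla\times A\|_2^2\ \ge\ \frac{2}{(2\pi)^3}\iint\Big(\tfrac{p^2}{2}-V^{TF}\Big)_-\,dx\,dp+2\sum_{k=1}^M Z_k^2 S_2(Z_k\alpha)+o(Z^2),
\]
with $V^{TF}$ as in \eqref{eq:VTF}.

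To prove this I would localize. Take a smooth partition $1=\sum_{k=0}^M\phi_k^2$ with $\phi_k$ ($k\ge1$) supported in $B(R_k,s)$, $Z^{-1}\ll s\ll Z^{-1/3}$, and $\phi_0$ supported off the nuclei (further subdivided on semiclassical cells). Ordinary IMS localization is unavailable, $\mathcal{T}^{(\alpha)}(A)$ being nonlocal; instead I would invoke the pull-out formula. With $\mathcal{T}^{(\alpha)}(A)+\alpha^{-2}=Y^{1/2}$, $Y:=\alpha^{-2}[\sigma\cdot(p-A)]^2+\alpha^{-4}$, and the identity $Y+\alpha^{-2}\sum_k|\nabla\phi_k|^2=\sum_k\phi_k Y\phi_k$, operator concavity of the square root gives $Y^{1/2}+\alpha^{-1}\big(\sum_k|\nabla\phi_k|^2\big)^{1/2}\ge\big(\sum_k\phi_k Y\phi_k\big)^{1/2}$, and Proposition~\ref{proposition.pull.out} (with $a=1/2$, $S_k=\phi_k$, $A_k=Y$) gives $\big(\sum_k\phi_k Y\phi_k\big)^{1/2}\ge\sum_k\phi_k Y^{1/2}\phi_k$; hence
\[
\mathcal{T}^{(\alpha)}(A)-V^{TF}\ \ge\ \sum_{k=0}^M\phi_k\big(\mathcal{T}^{(\alpha)}(A)-V^{TF}\big)\phi_k\ -\ \alpha^{-1}\Big(\sum_{k}|\nabla\phi_k|^2\Big)^{1/2},
\]
the last term a multiplication operator supported in the transition annuli. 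Distributing $\tr[\,\cdot\,]_-$ over the finitely overlapping pieces and splitting the field energy decouples the bound into a bulk term (the $\phi_0$ piece, carrying essentially all of $\|\nabla\times A\|_2^2$) plus $M$ near-nucleus terms, plus localization errors that one estimates in the transition annuli via Theorem~\ref{theorem.lieb.thirring.efs} and shows to be $o(Z^2)$.

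On $\mathrm{supp}\,\phi_0$ one has $|V^{TF}|\le C\,d_R(x)^{-4}$ by \eqref{estimate.VTF.asymptotic.Coulomb.4th.power}; coherent-state/Weyl-calculus lower bounds on the semiclassical cells, patched with Theorem~\ref{theorem.lieb.thirring.efs} at cell interfaces, give the bulk term $\ge\tfrac{2}{(2\pi)^3}\iint\phi_0(x)^2(\tfrac{p^2}{2}-V^{TF})_-\,dx\,dp+o(Z^2)$, the $-\sigma\cdot B$ part of the Pauli operator and the $B$-dependent term of Theorem~\ref{theorem.lieb.thirring.efs} being absorbed into $\tfrac{1}{8\pi\alpha^2}\|\nabla\times A\|_2^2$ (possible since $\alpha\to0$) and hence contributing only $o(Z^2)$ --- this is the field-insensitivity of the subleading term. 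Near a nucleus $k\ge1$, rescaling $x=R_k+Z_k^{-1}y$ conjugates $\mathcal{T}^{(\alpha)}(A)-Z_k/|x-R_k|$ into $Z_k^2\big(\sqrt{\beta_k^{-2}[\sigma\cdot(p-\tilde A)]^2+\beta_k^{-4}}-\beta_k^{-2}-|y|^{-1}\big)$ on $L^2(B(0,Z_ks))$, with $\beta_k:=Z_k\alpha\in(0,2/\pi]$ and $\tilde A(y):=Z_k^{-1}A(R_k+Z_k^{-1}y)$ (whose curl-energy is the rescaled field energy); by \eqref{estimate.VTF.Coulomb} replacing $V^{TF}$ by the point Coulomb potential costs $o(Z^2)$. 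One is reduced to the magnetic counterpart of Lemma~\ref{lemma.2S.non.magnetic},
\[
\tr\Big[\phi_{Z_ks}\Big(\sqrt{\beta_k^{-2}[\sigma\cdot(p-\tilde A)]^2+\beta_k^{-4}}-\beta_k^{-2}-\tfrac{1}{|y|}\Big)\phi_{Z_ks}\Big]_-+\tfrac{1}{8\pi}\|\nabla\times\tilde A\|_2^2\ \ge\ I_{Z_ks}+2S_2(\beta_k)+o(1)
\]
as $Z_ks\to\infty$, uniformly in $\beta_k\in(0,2/\pi]$ and $\tilde A$, with $I_{Z_ks}$ as in \eqref{eq:IR}. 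For $\beta_k$ bounded away from $2/\pi$ one may spend an $\varepsilon$-fraction of the kinetic energy to dominate $\|\nabla\times\tilde A\|_2^2$, the cutoff, and the potential tail, concluding as in \cite{EFS}. At $\beta_k=2/\pi$ this is forbidden --- the model operator is only barely bounded below --- so instead one keeps the Hardy operator $|\sigma\cdot(p-\tilde A)|-\tfrac{2}{\pi}|y|^{-1}$ intact, uses that near the singularity $\sqrt{\beta_k^{-2}[\sigma\cdot(p-\tilde A)]^2+\beta_k^{-4}}-\beta_k^{-2}$ is close to $\beta_k^{-1}|\sigma\cdot(p-\tilde A)|$ (the ultrarelativistic regime), and bounds the perturbations from $\phi_{Z_ks}$, from $\tilde A$, and from the potential tail by the Critical Pauli--Hardy--Lieb--Thirring inequality (Theorem~\ref{thm:CPHLT}), whose remainders $\int|\nabla\times\tilde A|^2$ and $\int V_+^4$ are respectively absorbed into $\tfrac{1}{8\pi}\|\nabla\times\tilde A\|_2^2$ and made $o(1)$. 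That the constant is still the field-free $2S_2(\beta_k)$ follows by combining this lower bound with the $\tilde A\equiv0$ content of Lemma~\ref{lemma.2S.non.magnetic}. Summing the bulk and the $M$ near-nucleus pieces, the $I$-terms reassemble into $\tfrac{2}{(2\pi)^3}\iint(\tfrac{p^2}{2}-V^{TF})_-$ up to $o(Z^2)$ and the near-nucleus constants into $2\sum_k Z_k^2 S_2(Z_k\alpha)$, completing the lower bound and the theorem.

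The hard part is the near-nucleus analysis at criticality $\beta_k=2/\pi$: since not even an infinitesimal fraction of kinetic energy may be borrowed, one must show simultaneously that the cutoff $\phi_{Z_ks}$, the deviation of $V^{TF}$ from the point Coulomb potential, and --- most delicately --- the self-generated vector potential $\tilde A$ each perturb the critical model operator by only $o(1)$ after rescaling, and that none of them shifts the Scott constant away from $2S_2(\beta_k)$. This hinges on Theorem~\ref{thm:CPHLT} having a field-energy remainder of exactly the form that $\tfrac{1}{8\pi\alpha^2}\|\nabla\times A\|_2^2$ can pay for, and --- diamagnetic inequalities and explicit heat/resolvent kernels being unavailable for the Pauli operator --- on carrying out the entire localization of the nonlocal square root by purely spectral-theoretic means via the pull-out formula. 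A secondary difficulty is calibrating the multi-scale decomposition so that every localization and semiclassical remainder is genuinely $o(Z^2)$ while the bulk is still captured with the sharp semiclassical constant.
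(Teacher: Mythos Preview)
Your proposal does not address the stated theorem at all. The statement in question is the Daubechies inequality for Pauli operators (Theorem~\ref{theorem.lieb.thirring.efs}), a Lieb--Thirring--type estimate bounding $\tr[\sqrt{\beta^{-2}[\sigma\cdot p_A]^2+\beta^{-4}}-\beta^{-2}-V]_-$ in terms of $\int V_+^{5/2}$, $\beta^3\int V_+^4$, and $(\int B^2)^{3/4}(\int V_+^4)^{1/4}$. The paper does not prove this result; it merely cites it as \cite[Theorem~2.2]{EFS}. What you have written is instead a proof outline for the main Scott-correction result, Theorem~\ref{magnetic.relativistic.critical.scott.correction}, and throughout you invoke Theorem~\ref{theorem.lieb.thirring.efs} as a black box rather than establishing it.

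If the intent was to prove Theorem~\ref{theorem.lieb.thirring.efs}, none of the ingredients you list --- correlation inequalities, Thomas--Fermi scaling, localization near nuclei, the identification of $S_2$ --- is relevant: the Daubechies inequality is a universal spectral estimate with no nuclei, no Thomas--Fermi potential, and no Scott term in sight. Its proof (in \cite{EFS}) proceeds by writing $[\sigma\cdot p_A]^2=p_A^2-\sigma\cdot B$, applying the BKS inequality to pass from the square root to a Schr\"odinger-type operator, and then using a magnetic Lieb--Thirring inequality for the Pauli operator together with the classical Daubechies inequality. If, on the other hand, you simply mismatched the statement and meant to sketch the proof of Theorem~\ref{magnetic.relativistic.critical.scott.correction}, then your outline is broadly in line with the paper's strategy, but you should resubmit it against the correct statement.
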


As in \cite{SSS} a Daubechies inequality (i.e. a relativistic inequality of Lieb-Thirring type) allowing for local Coulomb singularities will be an important ingredient in our analysis in the present paper. Our version allows for the magnetic Pauli-operator and is given in the next theorem.

\begin{theorem}[Magnetic Combined Daubechies-Lieb-Yau inequality]\label{thm:MCDLY}
Let $\alpha >0$, $R_1, \ldots, R_M \in {\mathbb R}^3$ and $W \in L^1_{\rm loc}({\mathbb R}^3)$ satisfying for some $0 \leq U \in L^{5/2} \cap L^4({\mathbb R}^3)$,
\begin{align}
W(x) \geq \sum_{j=1}^M - \frac{\nu}{d_R(x)} 1_{\{d(x,R_j) < \alpha\}} - U(x),
\end{align}
with $\alpha \nu \leq \frac{2}{\pi}$, and
\begin{align}\label{eq:SepCond}
\min_{k \neq \ell} | R_k - R_{\ell}| > (2+2\pi) \alpha.
\end{align}
Then,
\begin{align}\label{eq:CDLY}
\tr\left( \sqrt{\alpha^{-2}[\sigma\cdot p_A]^2 + \alpha^{-4}} - \alpha^{-2}+ W \right)_{-}
&\geq
- C   \nu^{5/2} \alpha^{1/2} - C \int U^{5/2}\,dx \nonumber \\
&\quad- C \alpha^3  \int U^{4}\,dx
- C \alpha^{-1}  \int B^2\,dx.
\end{align}
Furthermore, for sufficiently small values of $\nu \alpha$ ($\nu \alpha \leq (64 M)^{-1}$ will do), the constant term $-C   \nu^{5/2} \alpha^{1/2} $ can be omitted in \eqref{eq:CDLY}.
\end{theorem}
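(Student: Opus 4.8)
The strategy is to localize in space around the nuclei and to use the two Lieb--Thirring inputs recorded above in complementary regions: away from the nuclei, where there is no Coulomb singularity, the Daubechies inequality (Theorem~\ref{theorem.lieb.thirring.efs}) is enough, while near each nucleus the singular term must be matched \emph{exactly} against the Critical Pauli--Hardy--Lieb--Thirring inequality (Theorem~\ref{thm:CPHLT}), since at $\nu\alpha=2/\pi$ no part of the kinetic energy may be spent on error terms. The nonlocality of the square root is handled throughout with the pull-out formula (Proposition~\ref{proposition.pull.out}). First, since $\tr(\,\cdot\,)_-$ is monotone non-decreasing in the potential, it suffices to prove \eqref{eq:CDLY} for $W=\sum_{j=1}^{M}-\tfrac{\nu}{d_R}1_{\{d(x,R_j)<\alpha\}}-U$. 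Fix a partition of unity $1=\Phi_0^2+\sum_{j=1}^M\Phi_j^2$ with $\Phi_j\equiv1$ on $B(R_j,\alpha)$ and $\supp\Phi_j\subset B(R_j,(1+\pi)\alpha)$, built from trigonometric cut-offs over shells $S_j:=\{\alpha<d(x,R_j)<(1+\pi)\alpha\}$ of width $\pi\alpha$; by \eqref{eq:SepCond} the supports $(\Phi_j)_{j\ge1}$ are pairwise disjoint and $\sum_m|\nabla\Phi_m|^2\le\tfrac1{4\alpha^2}$ everywhere. Writing $T:=\sqrt{\alpha^{-2}[\sigma\cdot p_A]^2+\alpha^{-4}}-\alpha^{-2}=\alpha^{-1}\big([\sigma\cdot p_A]^2+\alpha^{-2}\big)^{1/2}-\alpha^{-2}$ and applying Proposition~\ref{proposition.pull.out} with exponent $1/2$, $S_n=\Phi_n$ and $A_n=[\sigma\cdot p_A]^2+\alpha^{-2}-\sum_m|\nabla\Phi_m|^2\ (\ge[\sigma\cdot p_A]^2+\tfrac{3}{4\alpha^2}\ge0)$, one gets $T\ge\sum_{j=0}^M\Phi_jT_j\Phi_j$ with $T_j:=\alpha^{-1}\big([\sigma\cdot p_A]^2+\alpha^{-2}-\sum_m|\nabla\Phi_m|^2\big)^{1/2}-\alpha^{-2}$, so the localization error sits \emph{under} the square root and cannot lower the bottom of the essential spectrum; operator subadditivity of $\sqrt{\cdot}$ further gives $\Phi_jT_j\Phi_j\ge\Phi_j(T-\tfrac1{2\alpha^2}1_{S_j})\Phi_j$. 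Since $W$ commutes with the $\Phi_j$, $T+W\ge\sum_{j=0}^M\Phi_j(T_j+W)\Phi_j$, and by superadditivity of $\tr(\cdot)_-$ the trace splits nucleus by nucleus.

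On $\supp\Phi_0$ only $-U$ is present, so $\tr(\Phi_0(T_0+W)\Phi_0)_-\ge\tr\big(T-\tfrac1{2\alpha^2}1_{S}-U\big)_-$ by the elementary bound $\tr(\Phi H\Phi)_-\ge\tr H_-$, where $S=\bigcup_jS_j$; Theorem~\ref{theorem.lieb.thirring.efs} then produces the $\int U^{5/2}$, $\alpha^3\int U^4$ and $\alpha^{-1}\int B^2$ terms (the mixed term being absorbed by Young's inequality against the large weight $\alpha^{-1}$), together with a remainder $\lesssim M\alpha^{-2}$ from integrating $(2\alpha^2)^{-5/2}$ and $\alpha^3(2\alpha^2)^{-4}$ over $S$ (total volume $\lesssim M\alpha^3$). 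For $j\ge1$, bound $T_j\ge\alpha^{-1}|\sigma\cdot p_A|-\alpha^{-2}-\tfrac1{2\alpha^2}1_{S_j}$ and absorb the Coulomb term into the kinetic one: since $1_{\{d<\alpha\}}\le1$ and $\nu\alpha\le2/\pi$,
\begin{align}
\Phi_j(T_j+W)\Phi_j\ \ge\ \alpha^{-1}\,\Phi_j\Big(|\sigma\cdot p_A|-\tfrac2{\pi d_R}-\alpha^{-1}-\tfrac1{2\alpha}1_{S_j}-\alpha U\Big)\Phi_j.
\end{align}
Because $\Phi_j^2\le1_{B(R_j,(1+\pi)\alpha)}$, the subtracted potential $V_j$ is compactly supported; splitting $V_j=\Phi_j^2V_j+(1-\Phi_j^2)V_j$, the first piece recombines with $\Phi_j(|\sigma\cdot p_A|-\tfrac2{\pi d_R})\Phi_j$ into $\Phi_j\big(|\sigma\cdot p_A|-\tfrac2{\pi d_R}-V_j\big)\Phi_j$, whose $\tr(\cdot)_-$ is $\ge\tr\big(|\sigma\cdot p_A|-\tfrac2{\pi d_R}-V_j\big)_-\ge-C\big(\int|B|^2+\int V_j^4\big)$ by $\tr(\Phi H\Phi)_-\ge\tr H_-$ and Theorem~\ref{thm:CPHLT}, while the second piece is a bounded potential supported in $S_j$ whose trace is of lower order. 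As $\int V_j^4\lesssim\alpha^{-1}+\alpha^4\int U^4$, multiplying by $\alpha^{-1}$ gives $\tr(\Phi_j(T_j+W)\Phi_j)_-\gtrsim-C\big(\alpha^{-1}\int|B|^2+\alpha^{-2}+\alpha^3\int U^4+\int U^{5/2}\big)$. This near-nucleus step, where criticality is genuinely used, is the core of the argument.

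Collecting the pieces yields \eqref{eq:CDLY} with $-C\nu^{5/2}\alpha^{1/2}$ replaced by $-C_M\alpha^{-2}$; to upgrade to the stated constant I would split on the size of $\nu\alpha$. If $\nu\alpha\ge(64M)^{-1}$ then $\alpha^{-2}\le C_M(\nu\alpha)^{5/2}\alpha^{-2}=C_M\nu^{5/2}\alpha^{1/2}$, which is exactly what is claimed. If $\nu\alpha<(64M)^{-1}$, the cut-off singular term is so far below criticality that a small fixed fraction of the relativistic kinetic energy --- kept as the genuine square-root operator, \emph{not} as its bound $\alpha^{-1}|\sigma\cdot p_A|-\alpha^{-2}$ --- already dominates $\sum_j\tfrac{\nu}{d_R}1_{\{d<\alpha\}}$ by a relativistic Hardy inequality adapted to the scale-$\alpha$ cut-off and to the separation; no spatial localization at scale $\alpha$ is then needed, so no $\alpha^{-2}$ error is generated, and the remaining kinetic energy together with $-U$ is again fed into Theorem~\ref{theorem.lieb.thirring.efs}, giving \eqref{eq:CDLY} without the constant term.

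The main obstacle is exactly the near-nucleus estimate at criticality: one cannot afford the loss incurred by localizing the square-root kinetic energy through an IMS-type commutator, so the localization must be routed through Proposition~\ref{proposition.pull.out} (errors under the square root), and the Coulomb singularity must be disposed of by Theorem~\ref{thm:CPHLT} with nothing to spare; the subsidiary difficulty is the bookkeeping that converts the residual $\alpha^{-2}$ into precisely $\nu^{5/2}\alpha^{1/2}$ and makes it vanish for small $\nu\alpha$.
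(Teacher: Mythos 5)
Your treatment of the regime $\nu\alpha \geq (64M)^{-1}$ is essentially the paper's own argument: localize at scale $\alpha$ around each nucleus with the localization errors pushed under the square root via Proposition~\ref{proposition.pull.out}, bound the massive square root from below by $\alpha^{-1}|\sigma\cdot p_A| - \alpha^{-2}$ near the nuclei and feed the critical singularity into Theorem~\ref{thm:CPHLT}, use Theorem~\ref{theorem.lieb.thirring.efs} in the exterior, and convert the resulting $-C_M\alpha^{-2}$ into $-C\nu^{5/2}\alpha^{1/2}$ using the lower bound on $\nu\alpha$. (The paper additionally runs a dyadic partition in the exterior region so that the localization error decays in $j$ and its $L^{5/2}$ and $L^4$ norms are summable, but your variant with the error supported on the transition shells reaches the same $O(\alpha^{-2})$ remainder, which is harmless in this regime.)

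The genuine gap is in the small-$\nu\alpha$ case, which is exactly where the constant term must disappear. You assert that a small fixed fraction of $\sqrt{\alpha^{-2}[\sigma\cdot p_A]^2 + \alpha^{-4}} - \alpha^{-2}$ dominates $\sum_j\frac{\nu}{d_R}1_{\{d<\alpha\}}$ "by a relativistic Hardy inequality." No such inequality exists for the Pauli operator: by the Aharonov--Casher construction, $\sigma\cdot p_A$ has normalizable zero modes for suitable $A$ with $\int |B|^2 < \infty$ and with the zero mode concentrated near a nucleus; on such a state the kinetic term $\sqrt{0 + \alpha^{-4}} - \alpha^{-2}$ vanishes while the cut-off Coulomb term is strictly positive. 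Hence any lower bound in this regime must carry a magnetic-field error, the operator inequality you invoke is false, and this failure is precisely why the $\int B^2$ terms appear in Theorems~\ref{thm:CPHLT} and~\ref{thm:MCDLY} in the first place. The paper's route is different: it splits off $U$ with Theorem~\ref{theorem.lieb.thirring.efs}, and for each singular term applies the BKS inequality, $\tr\bigl(\sqrt{X} - Y\bigr)_- \geq -\tr\bigl\{-\bigl(X - Y^2\bigr)_-\bigr\}^{1/2}$, to reduce to the magnetic Schr\"odinger operator $p_A^2 - \frac{1}{4|x|^2} - |B|$ (the hypothesis $\nu\alpha \leq (64M)^{-1}$ guarantees that the residual potential left over after completing the square is non-negative and can be dropped), and then invokes the magnetic Hardy--Lieb--Thirring inequality for $1/2$-moments from \cite{RF}, yielding exactly $-C\alpha^{-1}\int B^2$ and no constant term. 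Some trace-level mechanism of this kind, producing a $\int B^2$ error, is indispensable; your sketch of Case~2 cannot be completed as written.
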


The proof of Theorem~\ref{thm:MCDLY} will be given in Section~\ref{sec:PfofDaubechies}.

\begin{remark}~
\begin{enumerate}
\item In the case $M=1$ we can take $R_1=0$ for simplicity of notation. In that case Theorem~\ref{thm:MCDLY} gives a Lieb-Thirring inequality for the (Pauli) relativistic kinetic energy operator with mass, i.e.  $\sqrt{\alpha^{-2}[\sigma\cdot p_A]^2 + \alpha^{-4}} - \alpha^{-2}$ and allowing for a critical Hardy singularity $\frac{2}{\pi \alpha |x|}$ near the origin.
\item In the non-magnetic case $A=0$, Theorem~\ref{thm:MCDLY} gives a somewhat improved version of \cite[Thm. 2.8]{SSS}.
\end{enumerate}
\end{remark}

\begin{section}{Proof of Theorem \ref{magnetic.relativistic.critical.scott.correction}}
\begin{subsection}{Preliminaries}
\label{subsection.preliminaries.proof.main.theorem}
We begin by noticing that it is possible by taking $A=0$, to obtain an upper bound for $E$ with the correct form,
\begin{gather}
E \leq \text{inf spec }H(0) = Z^{7/3}E^{TF}(z, r) + 2\sum_{k = 1}^M Z_k^2 S_2(Z_k\alpha) + o(Z^2),
\end{gather}
for $Z \to \infty$ and $\alpha \to 0$ while $\max_k Z_k\alpha \leq 2/\pi$, which follows directly from Theorem \ref{relativistic.scott.correction}. We shall then focus for the rest of the article on finding a lower bound with the proper form.

For the remainder of the section it will be assumed that $1/\pi \leq \max_{k}Z z_k\alpha \leq 2/\pi$. By appealing to Theorem~\ref{magnetic.relativistic.scott.correction}, this suffices for the proof of Theorem \ref{magnetic.relativistic.critical.scott.correction}.
We do not assume that $1/\pi \leq \max_{k}Z z_k\alpha \leq 2/\pi$ in Section \ref{section.continuation.proof}.
\end{subsection}
\begin{subsection}{First step}
We start with a correlation inequality as in the beginning of the proof of \cite[Theorem 1.1]{EFS}, using \cite[Equation (2.5)]{EFS} (which is based in turn on \cite[Theorem 2.9 and the calculation on page 55]{SSS}),
\begin{align}\label{eq:Corr}
\text{inf spec }H(A) \geq & \, \tr \left[\mathcal{T}^{(\alpha)}(A) - V_{Z, R}^{TF}(x) - CZ^{3/2}sg(x)\right]_- - D(\rho_{Z, R}^{TF}) - CsZ^{8/3} - Cs^{-1}Z,
\end{align}
where $\mathcal{T}^{(\alpha)}(A) \equiv \sqrt{\alpha^{-2}\left[\sigma\cdot (p - A)\right]^2 + \alpha^{-4}} - \alpha^{-2}$ (that is, $\mathcal{T}_m^{(\alpha)}$ for just one particle), $V_{Z, R}^{TF}$ is the Thomas-Fermi potential (see \eqref{eq:VTF}); $\rho_{Z, R}^{TF}$ is the Thomas-Fermi density; $s = Z^{-5/6}$;
\begin{gather}
D(f) \equiv \frac{1}{2}\int_{\mathbb{R}^3}\!\int_{\mathbb{R}^3}\frac{\overline{f(x)}f(y)}{|x - y|}\,dx\,dy;
\end{gather}
and $g$ is defined as
\begin{align}\label{eq:gfunction}
g(x) = 
\begin{cases}
(2s)^{-1/2} & \text{if  } \,\, d_R(x) < 2s,\\
d_R(x)^{-1/2} & \text{if } \,\, 2s \leq d_R(x) \leq Z^{-1/3},\\
0 & \text{if } \,\, Z^{-1/3} < d_R(x),
\end{cases}
\end{align}
where $d_R(x) \equiv \min_{1 \leq m \leq M}|x - R_m|$. The idea will be now to effectively eliminate $CZ^{3/2}sg$ from
\begin{gather}
\tr\left[\mathcal{T}^{(\alpha)}(A) - V_{Z, R}^{TF}(x) - CZ^{3/2}sg(x)\right]_-,
\end{gather}
in the following sense,
\begin{theorem}
\label{theorem.trace.splitting}
\begin{align}
\tr\left[\mathcal{T}^{(\alpha)}(A) - V_{Z, R}^{TF}(x) - CZ^{3/2}sg(x)\right]_{-} \geq & \left(1 - Z^{-1/2}\right)\tr\left[\mathcal{T}^{(\alpha)}(A) - V_{Z, R}^{TF}(x)\right]_-\nonumber\\
& \, -CZ^{-1/2} \alpha^{-1} \int|\nabla\times A|^2\,dx + o(Z^2).
\label{equation.theorem.trace.splitting}
\end{align}
\end{theorem}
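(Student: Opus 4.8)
The plan is to give up a $Z^{-1/2}$ fraction of the \emph{whole} Hardy-type operator $\mathcal{T}^{(\alpha)}(A) - V_{Z,R}^{TF}$ --- not of the bare kinetic term, which at the critical coupling $2/\pi$ is forbidden --- in order to absorb $CZ^{3/2}sg$. Putting $\delta := Z^{-1/2}$ and using $s = Z^{-5/6}$, so that $\delta^{-1}Z^{3/2}s = Z^{1/2}Z^{2/3} = Z^{7/6}$, I would write the exact operator identity
\[
\mathcal{T}^{(\alpha)}(A) - V_{Z,R}^{TF} - CZ^{3/2}sg
= (1-\delta)\bigl(\mathcal{T}^{(\alpha)}(A) - V_{Z,R}^{TF}\bigr)
+ \delta\bigl(\mathcal{T}^{(\alpha)}(A) - V_{Z,R}^{TF} - CZ^{7/6}g\bigr).
\]
All three negative traces here are finite (Theorem~\ref{thm:MCDLY} applies to each: $V_{Z,R}^{TF}$ has only Coulombic singularities obeying $Z_k\alpha\le 2/\pi$ and $g$ is bounded), so I would combine the superadditivity $\tr[X+Y]_-\ge\tr[X]_-+\tr[Y]_-$ --- which holds because $\tr[X+Y]_- = \tr(PXP)+\tr(PYP)$ with $P = 1_{\{X+Y<0\}}$ and $\tr(PZP)\ge\tr[Z]_-$ for every projection $P$ --- with the homogeneity $\tr[\delta H]_- = \delta\tr[H]_-$. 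This reduces \eqref{equation.theorem.trace.splitting} to the single estimate
\[
\tr\bigl[\mathcal{T}^{(\alpha)}(A) - V_{Z,R}^{TF} - CZ^{7/6}g\bigr]_-
\ \geq\ -C\alpha^{-1}\!\int_{\mathbb{R}^3}|\nabla\times A|^2\,dx + o(Z^{5/2}),
\]
since multiplying by $\delta = Z^{-1/2}$ and adding $(1-\delta)\tr[\mathcal{T}^{(\alpha)}(A)-V_{Z,R}^{TF}]_-$ recovers \eqref{equation.theorem.trace.splitting} verbatim.

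To establish the last display I would invoke the magnetic combined Daubechies--Lieb--Yau inequality, Theorem~\ref{thm:MCDLY}, with $\nu := \max_k Z_k$ (so $\alpha\nu = \max_k(Zz_k\alpha)\le 2/\pi$), with the nuclear positions $R_1,\dots,R_M$ --- which obey the separation condition \eqref{eq:SepCond} for $Z$ large, since $\min_{k\ne\ell}|R_k-R_\ell| = Z^{-1/3}\min_{k\ne\ell}|r_k-r_\ell| > Z^{-1/3}r_0$ while $\alpha$ is of order $Z^{-1}$ under the running assumption $1/\pi \le \max_k Z z_k \alpha \le 2/\pi$ --- with $W := -V_{Z,R}^{TF} - CZ^{7/6}g$, and with
\[
U \;:=\; \Bigl(V_{Z,R}^{TF} + CZ^{7/6}g - \sum_{k=1}^M \tfrac{\nu}{d_R(x)}\,1_{\{d(x,R_k)<\alpha\}}\Bigr)_{\!+} \;\geq\; 0,
\]
for which $W(x)\ge -\sum_k \tfrac{\nu}{d_R(x)}1_{\{d(x,R_k)<\alpha\}} - U(x)$ holds by construction. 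By the Thomas--Fermi bounds \eqref{estimate.VTF.Coulomb}--\eqref{estimate.VTF.asymptotic.Coulomb.4th.power} and the boundedness and compact support of $g$, the function $U$ is bounded on $\mathbb{R}^3$ and decays, hence lies in $L^{5/2}\cap L^4$. Theorem~\ref{thm:MCDLY} then bounds the trace below by $-C\nu^{5/2}\alpha^{1/2} - C\!\int U^{5/2} - C\alpha^3\!\int U^4 - C\alpha^{-1}\!\int|\nabla\times A|^2$, and one checks that the first three terms are $o(Z^{5/2})$: indeed $\nu^{5/2}\alpha^{1/2}\sim Z^2$; the amplified term $CZ^{7/6}g$ contributes only $O(Z^{7/3})$ to $\int U^{5/2}$ and $O(Z^{4/3})$ to $\alpha^3\int U^4$, small precisely because $g$ is supported in $\{d_R < Z^{-1/3}\}$ and equals $d_R^{-1/2}$ there; and $\int U^{5/2}$ is otherwise controlled by $\int (V_{Z,R}^{TF})^{5/2}\sim Z^{7/3}$ (the $5/2$-th power is integrable even against the critical Coulomb singularity, and the total is computed from \eqref{equation.semiclassical.V.TF}), while $\alpha^3\int U^4 = O(Z^2)$, the bound dominated by the shell $\alpha < d_R < R_{\min}/2$ where $U \sim Z_k/d_R$.

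I expect the real content to lie not in any single estimate but in this structure: at the critical coupling one is never allowed to absorb errors into a fraction of the bare kinetic energy, so everything rests on Theorem~\ref{thm:MCDLY} permitting the critical Coulomb singularity to coexist with an $L^{5/2}\cap L^4$ perturbation, and the factor $(1-Z^{-1/2})$ in the statement is exactly the amount of $\mathcal{T}^{(\alpha)}(A)-V_{Z,R}^{TF}$ that must be set aside so that the amplified potential $CZ^{7/6}g$ fits into the ``$U$'' slot of that inequality. A further point --- in the spirit of the introduction --- is that the non-locality of the square root prevents performing this split by any spatial localization; it has to be done purely at the level of the operator identity above together with the superadditivity lemma, so that, pleasantly, no localization is needed for this step. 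The only genuinely computational part is the bookkeeping of the $L^{5/2}$ and $L^4$ norms just described, where the precise exponent $s = Z^{-5/6}$ and the three-piece shape \eqref{eq:gfunction} of $g$ are used.
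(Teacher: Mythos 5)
Your proposal is correct and follows essentially the same route as the paper: the paper likewise splits off an $\varepsilon = Z^{-1/2}$ fraction of the full Hardy operator $\mathcal{T}^{(\alpha)}(A)-V^{TF}_{Z,R}$ and absorbs the amplified potential $C\varepsilon^{-1}Z^{3/2}sg = CZ^{7/6}g$ via the magnetic combined Daubechies--Lieb--Yau inequality (Theorem~\ref{thm:MCDLY}), obtaining an unscaled error of order $Z^{7/3}$ and hence $o(Z^2)$ after multiplication by $\varepsilon$. Your write-up merely makes explicit the superadditivity step and the $L^{5/2}$ and $L^4$ bookkeeping that the paper leaves to the reader.
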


\begin{proof}[Proof of Theorem \ref{theorem.trace.splitting}]
Recall that $s= Z^{-5/6}$ and the definition of $g$ in \eqref{eq:gfunction}.
The theorem clearly follows from the estimate
\begin{align}\label{eq:EstimError}
\varepsilon \tr\left[\mathcal{T}^{(\alpha)}(A) - V_{Z, R}^{TF}(x) - C \varepsilon^{-1} Z^{3/2}sg(x)\right]_{-} \geq &  -C\varepsilon \alpha^{-1} \int|\nabla\times A|^2\,dx - C \varepsilon Z^{7/3},
\end{align}
with $\varepsilon = Z^{-1/2}$.
The estimate \eqref{eq:EstimError} is a direct application of the magnetic combined Daubechies-Lieb-Yau inequality, Theorem~\ref{thm:MCDLY} (see also \cite[p.56]{SSS} where the similar term is estimated in the non-magnetic situation).
\end{proof}

After a scaling argument in Subsection \ref{subsection.scaling}, the right hand side of \eqref{equation.theorem.trace.splitting}---with the magnetic field energy added---will in Subsection \ref{subsection.semiclassical.theorem} be reduced to a semiclassical problem. This semiclassical problem is analyzed in Section \ref{section.continuation.proof}.
\end{subsection}

\begin{subsection}{Scaling}
\label{subsection.scaling}
Using Theorem \ref{theorem.trace.splitting}, we conclude from \eqref{eq:Corr} that, for $\alpha$ small enough
\begin{align}
\label{estimate.gse.stage.1}
\text{inf spec }H(A) + \frac{1}{8\pi\alpha^2}\int|\nabla\times A|^2\,dx \geq & \, \left(1 - Z^{-1/2}\right)\tr\left[\mathcal{T}^{(\alpha)}(A) - V_{Z, R}^{TF}(x)\right]_- - D(\rho_{Z, R}^{TF})\nonumber\\
& \, + \left(\frac{1}{8\pi\alpha^2} - CZ^{-1/2} \alpha^{-1} \right)\int|\nabla\times A|^2\,dx + o(Z^2)\nonumber\\
\geq & \, \left(1 - Z^{-1/2}\right)\tr\left[\mathcal{T}^{(\alpha)}(A) - V_{Z, R}^{TF}(x)\right]_- - D(\rho_{Z, R}^{TF})\nonumber\\
& \, + \frac{1}{16\pi\alpha^2}\int|\nabla\times A|^2\,dx + o(Z^2).
\end{align}
\begin{remark}
This corresponds to Equation (2.9) in \cite{EFS}, with an important exception: the first term on the right side of that equation is
\begin{gather}
\label{trace.hamiltonian.factor.bigger.than.1}
\left(1 - Z^{-1/2}\right)\tr\left[\mathcal{T}^{(\alpha)}(A) - \left(1 - Z^{-1/2}\right)^{-1}V_{Z, R}^{TF}(x)\right]_-,
\end{gather}
which is not good enough for our purposes, because at criticality ($\max_{1 \leq k \leq M}Z_k\alpha = 2/\pi$) the Hamiltonian inside \eqref{trace.hamiltonian.factor.bigger.than.1} is unstable for $A = 0$ (and therefore when the infimum over $A$ is considered), because of the following reasons: $(\sigma\cdot p)^2 = p^2$, $\sqrt{\alpha^{-2}p^2 + \alpha^{-4}} - \alpha^{-2}$ is less than or equal to $\alpha^{-1}|p|$, $V^{TF}_{Z, R}$ behaves like $Z_k/|x - R_k|$ close to the nucleus at $R_k$ (Equation \eqref{estimate.VTF.Coulomb}), and $(1 - Z^{-1/2})^{-1} > 1$ for all $Z > 1$ (recall the relativistic Hardy inequality mentioned in the introduction). Theorem \ref{theorem.trace.splitting} is what allows us to circumvent this problem.
\end{remark}
We will now apply the following semiclassical scaling, as it appears in subsection II.B from \cite{EFS}. We define
\begin{align}
\kappa \equiv & \, \min_{1 \leq k \leq M}\frac{2}{\pi z_k},\\
h \equiv & \, \kappa^{1/2}Z^{-1/3},\\
\beta \equiv & \, Z^{2/3}\alpha\kappa^{-1/2} = Z\alpha h\kappa^{-1}, \label{eq:defBeta}\\
T_h(A) \equiv & \, \left[\sigma\cdot(-ih\nabla + A)\right]^2.
\end{align}
We replace the potential $A$ by
\begin{gather}
\widetilde{A}(x) \equiv Z^{-2/3}\kappa^{1/2}A(Z^{-1/3}x),
\end{gather}
and the scaling properties \eqref{eq:TF-scaling} of Thomas-Fermi theory,
with $a = Z^{1/3}$, we get, from \eqref{estimate.gse.stage.1},
\begin{align}
& \, \text{inf spec }H(A) + \frac{1}{8\pi\alpha^2}\int|\nabla\times A|^2\,dx\nonumber\\
\geq & \, Z^{4/3}\kappa^{-1}\left(1 - Z^{-1/2}\right)\left[\tr\left(\sqrt{\beta^{-2}T_h(\widetilde{A}) + \beta^{-4}} - \beta^{-2} - \kappa V_{z, r}^{TF}\right)_- + \frac{\kappa^{1/2}}{16\pi\beta^2 h^3}\int|\nabla\otimes \widetilde{A}|^2\,dx\right]\nonumber\\
& \, -Z^{7/3}D(\rho_{z, r}^{TF}) + o(Z^2).
\label{equation.preliminary.estimate.for.main.theorem}
\end{align}

\end{subsection}

\begin{subsection}{Semiclassical theorem}
\label{subsection.semiclassical.theorem}
As in \cite[Subsection II.B]{EFS}, the idea will be now to use the following theorem (where the extension below compared to \cite{EFS} is that we allow criticality, i.e. $\widetilde \kappa z = 2/\pi$), that we prove in the next section, in order to finish the proof of Theorem \ref{magnetic.relativistic.critical.scott.correction},
\begin{theorem}
\label{theorem.preliminary.main.result}
Let $\lambda > 0$ and $0 < \widetilde{\kappa}\max\left\{z_1, \ldots, z_M\right\} \leq 2/\pi$. There is a function $\xi : \mathbb{R}_+ \to \mathbb{R}_+$ with $\xi(t) \to 0$ as $t \to 0$ such that if $0 < \beta \leq h$, then
\begin{multline}
\label{equation.preliminary.main.result}
\left|\inf_{\widetilde{A}}\left\{\tr\left[\sqrt{\beta^{-2}T_h(\widetilde{A}) + \beta^{-4}} - \beta^{-2} - \widetilde{\kappa}V_{z, r}^{TF}\right]_- + \frac{\lambda}{\beta^2 h^3}\int_{\mathbb{R}^3}\left|\nabla\otimes\widetilde{A}\right|^2\right\}\right.\\
\left. - \frac{2}{(2\pi h)^3}\int_{\mathbb{R}^3}\int_{\mathbb{R}^3}\left[\frac{p^2}{2} - \widetilde{\kappa}V_{z, r}^{TF}(x)\right]_{-}\,dx\,dp - \frac{2}{h^2}\sum_{k = 1}^M (z_k\widetilde{\kappa})^2 S_2\left(\beta h^{-1}\widetilde{\kappa}z_k\right)\right| \leq \frac{\xi(h)}{h^2}.
\end{multline}
\end{theorem}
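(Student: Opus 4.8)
Since the target error $\xi(h)/h^2$ has to be uniform in $\beta\in(0,h]$, in $\widetilde\kappa$ with $\widetilde\kappa\max_k z_k\le 2/\pi$, and in $\widetilde A$, the two ingredients that make this uniformity work are the \emph{uniform} limit in Lemma~\ref{lemma.2S.non.magnetic} and the continuity of $S_2$ on $[0,2/\pi]$. I would prove the matching upper and lower bounds separately. For the upper bound one takes $\widetilde A=0$, so that the magnetic energy vanishes and it remains to establish the asymptotics of the non-magnetic trace $\tr[\sqrt{\beta^{-2}T_h(0)+\beta^{-4}}-\beta^{-2}-\widetilde\kappa V_{z,r}^{TF}]_-$. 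The standard two-scale analysis does this: an IMS partition $\{\theta_k\}_{k=0}^M$ with $\theta_0$ supported away from all nuclei and each $\theta_k$ ($k\ge1$) a bump of radius $\sim\sigma_k R$ about $r_k$, where $\sigma_k:=h^2/(\widetilde\kappa z_k)$ and $R=R(h)\to\infty$ slowly; the substitution $x=r_k+\sigma_k\xi$ turns the near-$r_k$ operator into $\frac{(\widetilde\kappa z_k)^2}{h^2}$ times $\sqrt{\alpha_k^{-2}p^2+\alpha_k^{-4}}-\alpha_k^{-2}-|\xi|^{-1}$ with $\alpha_k:=\beta h^{-1}\widetilde\kappa z_k\in(0,2/\pi]$, so Lemma~\ref{lemma.2S.non.magnetic} contributes $\frac{2(\widetilde\kappa z_k)^2}{h^2}S_2(\alpha_k)$ together with the rescaled cutoff Weyl integral $\frac{(\widetilde\kappa z_k)^2}{h^2}I_R$; by \eqref{estimate.VTF.Coulomb} the error of replacing $V_{z,r}^{TF}$ by its Coulomb part on this small ball is $o(h^{-2})$, and so are the localization errors, once $R(h)$ grows slowly enough. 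Coherent states handle $\theta_0$, where the potential is bounded; summing, the $I_R$-pieces recombine with the bulk Weyl term into $\frac{2}{(2\pi h)^3}\int\!\!\int[\tfrac{p^2}{2}-\widetilde\kappa V_{z,r}^{TF}]_-$ up to $o(h^{-2})$. The same computation is the target for the lower bound.

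For the lower bound, fix admissible $\widetilde A$. Since there is no diamagnetic inequality for the Pauli operator and no freedom at criticality to give up a fraction of the kinetic energy, $\sqrt{\beta^{-2}T_h(\widetilde A)+\beta^{-4}}$ cannot be compared to its $\widetilde A=0$ counterpart globally; instead I localize with the same partition. Combining the IMS formula with the pull-out formula of Proposition~\ref{proposition.pull.out} (exponent $1/2$, $A_n=\beta^{-2}T_h(\widetilde A)+\beta^{-4}$, $S_n=\theta_n$) and subadditivity of the operator square root gives
\begin{equation*}
\sqrt{\beta^{-2}T_h(\widetilde A)+\beta^{-4}}-\beta^{-2}\ \ge\ \sum_{k=0}^M\theta_k\Big(\sqrt{\beta^{-2}T_h(\widetilde A)+\beta^{-4}}-\beta^{-2}\Big)\theta_k\ -\ h\beta^{-1}\Big(\textstyle\sum_k|\nabla\theta_k|^2\Big)^{1/2},
\end{equation*}
and because $\sum_k\theta_k^2=1$ the potential and the (multiplication) localization error distribute over the pieces; monotonicity and superadditivity of $\tr[\,\cdot\,]_-$ then reduce the lower bound to $M$ near-nucleus terms plus a bulk term, with the magnetic energy split accordingly. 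On $\supp\theta_0$ the potential is bounded and in $L^{5/2}\cap L^4$, the relevant momenta are $O(1)$, so $\sqrt{\beta^{-2}T_h(\widetilde A)+\beta^{-4}}-\beta^{-2}=\tfrac12 T_h(\widetilde A)+O(\beta^2 T_h(\widetilde A)^2)$ and the bulk term reduces to the non-relativistic magnetic semiclassics of \cite{EFS}, where the field cannot lower the Weyl term because the coefficient $\lambda\beta^{-2}h^{-3}\ge\lambda h^{-5}$ of $\|\nabla\otimes\widetilde A\|_2^2$ overwhelms any gain, as quantified by Theorem~\ref{theorem.lieb.thirring.efs}.

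The near-nucleus terms are the crux, as they are now \emph{critical}. After the substitution $x=r_k+\sigma_k\xi$ the relativistic parameter is $\alpha_k=\beta h^{-1}\widetilde\kappa z_k\le2/\pi$, the rescaled vector potential is $\widehat A_k$, and the rescaled magnetic coupling $\widehat\lambda_k=\lambda\beta^{-2}h^{-1}(\widetilde\kappa z_k)^{-1}$ diverges as $h\to0$. I would show that this divergent coupling forces $\widehat A_k=0$ to be near-optimal for the near-nucleus piece: a crude lower bound from Theorem~\ref{thm:MCDLY} (with the critical Coulomb cut off at the natural scale $\alpha_k$, so that $\int U^{5/2}+\alpha_k^3\int U^4\sim R^{1/2}$) shows that any near-optimizing $\widehat A_k$ must have $\|\nabla\times\widehat A_k\|_2\to0$, and a \emph{stability} estimate at criticality then rules out any gain from such a field — the near-nucleus trace at $\widehat A_k$ is no smaller than at $\widehat A_k=0$ minus $o(1)$. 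For that stability estimate a non-critical argument (borrow $\varepsilon$ of the kinetic energy) is unavailable; instead one keeps the critical Coulomb singularity attached to the kinetic energy, treats the pair as an indivisible Hardy operator, splits off the Pauli term $\sigma\cdot B_{\widehat A_k}$ by hand (its cost a power of $\|\nabla\times\widehat A_k\|_2$, controlled by H\"older on the bounded rescaled domain together with the relativistic diamagnetic inequality for the resulting $\sigma\cdot B$-free operator), and invokes the Critical Pauli-Hardy-Lieb-Thirring inequality, Theorem~\ref{thm:CPHLT} — and Theorem~\ref{thm:MCDLY} for the leftover error terms. With the field removed, Lemma~\ref{lemma.2S.non.magnetic}, valid at $\alpha_k=2/\pi$, supplies $\frac{2(\widetilde\kappa z_k)^2}{h^2}S_2(\alpha_k)$ and the local Weyl integral, the $V_{z,r}^{TF}$-versus-Coulomb error being $o(h^{-2})$ as before. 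Recombining the near-nucleus Weyl integrals with the bulk one then matches the upper bound, completing the proof.

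The main obstacle is the efficient localization of the relativistic Pauli kinetic energy at criticality. With neither an explicit integral kernel nor a diamagnetic inequality for $[\sigma\cdot(p-A)]^2$ at one's disposal, the commutator error produced by the pull-out formula at scale $\ell$ is of size $h\beta^{-1}/\ell$ — only one power of $\ell$ smaller than the Coulomb potential it sits next to, so it cannot be absorbed by a fraction of the kinetic energy; it must instead be folded into the $U$-term of Theorem~\ref{thm:MCDLY}, and the cutoff radius $R(h)$, the bump scale $\sigma_k R$, and the size of the Coulomb-versus-$V^{TF}$ discrepancy have to be tuned simultaneously so that every error is $o(h^{-2})$ uniformly in $\beta$ and $\widetilde A$. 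Equally delicate is the stability estimate behind the near-nucleus reduction to $\widehat A_k=0$: at criticality the Hardy operator must be handled as a unit, a maneuver that only became feasible with the Pauli-Hardy-Lieb-Thirring inequality of \cite{BF} (Theorem~\ref{thm:CPHLT}) — precisely the new ingredient that pushes the argument to $\max_k(Z_k\alpha)=2/\pi$.
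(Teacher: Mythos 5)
Your overall architecture coincides with the paper's: an IMS localization separating the nuclei from the bulk, a rescaling $x=r_k+\sigma_k\xi$ with $\sigma_k=h^2/(\widetilde\kappa z_k)$ and relativistic parameter $\widetilde\alpha=\beta h^{-1}\widetilde\kappa z_k$ reducing each near-nucleus piece to a model problem on a slowly growing ball, an a priori bound on the optimizing field from the divergent magnetic coupling (this is exactly the hypothesis $R^{11}/\Lambda\to 0$ in the paper's Theorem~\ref{theorem.asymptotics.E} and the bound \eqref{equation.a.priori.bound.magnetic.field}), removal of the field at criticality by treating the Hardy operator as a unit via the inequalities of \cite{BF} (the paper packages this as Lemma~\ref{lemma.inequality.removal.magnetic.field}), and identification of the Scott term through Lemma~\ref{lemma.2S.non.magnetic}. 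For the upper bound the paper simply quotes \cite[Theorem 1.4]{SSS} rather than redoing the two-scale analysis, and it uses a three-region partition (with Theorems~\ref{thm:EFS_far} and \ref{thm:EFS_semiclassics} imported from \cite{EFS} for the two outer regions) instead of your two-region one, but these are inessential differences.

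There is, however, one concrete gap, precisely at the step you yourself flag as the main obstacle. Your localization inequality takes the IMS error \emph{outside} the square root, via $\sqrt{B+Y}\le\sqrt{B}+\|Y\|^{1/2}$ with $Y=\beta^{-2}h^2\sum_k|\nabla\theta_k|^2$, producing the error $h\beta^{-1}\ell^{-1}$ at scale $\ell$. Near a nucleus this is not ``one power of $\ell$ smaller than the Coulomb potential'': it is the \emph{same} power of $\ell$ multiplied by $h\beta^{-1}\ge 1$, so it dominates the Coulomb term, and after rescaling it contributes a constant of order $h^{1+\delta}\beta^{-1}(\widetilde\kappa z_k)^{-2}$ to the model problem, which is unbounded as $\beta\to 0$ (the theorem must hold uniformly for all $0<\beta\le h$). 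Folding it into the $U$-term of Theorem~\ref{thm:MCDLY} does not help: $\int U^{5/2}$ for this $U$ also diverges as $\beta\to 0$, so the problem is the size of the error, not the absorption mechanism. The fix — and the paper's actual device — is to keep the IMS error \emph{under} the square root: from the IMS identity and Proposition~\ref{proposition.pull.out} one gets $\sqrt{\beta^{-2}T_h(\widetilde A)+\beta^{-4}}\ge\sum_k\theta_k\sqrt{\beta^{-2}T_h(\widetilde A)+\beta^{-4}-C\beta^{-2}h^2\ell^{-2}}\,\theta_k$, and since $\sqrt{\beta^{-4}-C\beta^{-2}h^2\ell^{-2}}\ge\beta^{-2}-Ch^2\ell^{-2}$ (monotonicity of $\xi\mapsto\sqrt{a^2+\xi^2}-\xi$), the localization cost is the $\beta$-independent constant $Ch^2\ell^{-2}$, which after rescaling becomes the harmless $DR^{-2}\chi_R$ term that Theorem~\ref{theorem.asymptotics.E} is built to tolerate. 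Without this step your estimate fails in the regime $\beta\ll h$ covered by the statement.
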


Before discussing the proof of Theorem \ref{theorem.preliminary.main.result}, we shall explain how our main Theorem \ref{magnetic.relativistic.critical.scott.correction} follows directly from Theorem \ref{theorem.preliminary.main.result}.

\begin{proof}[Proof of Theorem \ref{magnetic.relativistic.critical.scott.correction}]
We explained in subsection \ref{subsection.preliminaries.proof.main.theorem} that all we really need is a lower bound for $E$, since an upper bound follows from the non-magnetic case. In order to find a lower bound we first notice that, from the estimate \eqref{equation.preliminary.estimate.for.main.theorem} and Theorem \ref{theorem.preliminary.main.result} with $\lambda = \kappa^{1/2}/(16\pi)$,
\begin{align}
\text{inf spec }H(A) &+ \frac{1}{8\pi\alpha^2}\int|\nabla\times A|^2\,dx
\nonumber\\
\geq & \, Z^{4/3}\kappa^{-1}(1 - Z^{-1/2})\left\{\frac{2}{(2\pi h)^3}\int\!\!\!\int\left[\frac{p^2}{2} - \kappa V^{TF}_{z, r}(x)\right]_-\,dx\,dp\right.\nonumber\\
& \, \left. + 2h^{-2}\sum_{k = 1}^M (z_k \kappa)^2 S_2(\beta h^{-1}z_k \kappa) - \frac{\xi(h)}{h^2}\right\} - Z^{7/3}D(\rho_{z, r}^{TF}) + o(Z^2).
\label{equation.preliminary.estimate.for.main.theorem.second}
\end{align}
The $p$-part of the first integral can be calculated, yielding
\begin{gather}
\int\!\!\!\int\left[\frac{p^2}{2} - \kappa V^{TF}_{z, r}(x)\right]_-\,dx\,dp = -C_0\kappa^{5/2}\int V_{z, r}^{TF}(x)^{5/2}\,dx,
\end{gather}
and then, using \eqref{equation.semiclassical.V.TF},
\begin{align}
& \, Z^{4/3}\kappa^{-1}\left(1 - Z^{-1/2}\right)\frac{2}{(2\pi h)^3}\int\!\!\!\int\left[\frac{p^2}{2} - \kappa V_{z, r}^{TF}(x)\right]_-\,dx\,dp - Z^{7/3}D(\rho_{z, r}^{TF})\nonumber\\
= & \, Z^{7/3}\left[-\frac{2C_0}{(2\pi)^3}\int V_{z, r}^{TF}(x)^{5/2}\,dx - D(\rho_{z, r}^{TF})\right] + o(Z^2)\nonumber\\
= & \, Z^{7/3}E(z, r) + o(Z^2),
\end{align}
and therefore
\begin{align}
\text{inf spec }H(A) + \frac{1}{8\pi\alpha^2}\int|\nabla\times A|^2\,dx \geq Z^{7/3}E(z, r) + 2\sum_{k = 1}^M Z_k^2 S_2(Z_k\alpha) + o(Z^2),
\end{align}
which proves the lower bound. This concludes the proof of Theorem \ref{magnetic.relativistic.critical.scott.correction}.
\end{proof}

The only step that is pending then is the proof of Theorem \ref{theorem.preliminary.main.result}. This we shall finish in Section~\ref{section.continuation.proof}.
\end{subsection}
\end{section}

\begin{section}{Proof of Theorem~\ref{theorem.preliminary.main.result}}
\label{section.continuation.proof}
\begin{subsection}{Preliminaries}
Let $\mathcal{M}(\beta, h, \widetilde{\kappa}, \lambda)$ be the quantity inside the absolute value in Equation \eqref{equation.preliminary.main.result}. The goal is to show that there is a $\xi : \mathbb{R}_+ \to \mathbb{R}_+$ with $\lim_{t \to 0}\xi(t) = 0$ such that
\begin{gather}
-\frac{\xi(h)}{h^2} \leq \mathcal{M}(\beta, h, \widetilde{\kappa}, \lambda) \leq \frac{\xi(h)}{h^2}.
\end{gather}
For the upper bound, we set $\widetilde{A} = 0$ and get
\begin{align}
\mathcal{M}\left(\beta, h, \widetilde{\kappa}, \lambda\right) \leq & \, \tr\left[\sqrt{\beta^{-2}h^2p^2 + \beta^{-4}} - \beta^{-2} - \widetilde{\kappa}V_{z, r}^{TF}\right]_-\nonumber\\
& \, - \frac{2}{(2\pi h)^3}\int\!\!\!\int\left[\frac{p^2}{2} - \widetilde{\kappa}V_{z, r}^{TF}(x)\right]_-\,dx\,dp - \frac{2}{h^2}\sum_{k = 1}^M (z_k\widetilde{\kappa})^2 S_2\left(\beta h^{-1}\widetilde{\kappa}z_k\right),
\end{align}
and this quantity is less than or equal to $Ch^{1/10}/h^2$, by \cite[Theorem 1.4]{SSS}. (The statement of that theorem is precisely our statement of Theorem \ref{theorem.preliminary.main.result} with $\widetilde{A} = 0$ and $\xi(h) = Ch^{1/10}$.)

The only remaining part then is to prove the lower bound $\mathcal{M} \geq - \xi(h)/h^2$. We shall follow the arguments in \cite[sections III, IV, and V]{EFS}, with modifications in order to include the critical constant $2/\pi$. 
We consider a smooth partition of unity $\theta_{+}, \theta_{-}$ on $[0, \infty)$ with 
\begin{gather}
\theta_-^2 + \theta_+^2 = 1,
\end{gather}
and $\theta_{-}(t) = 1$ for $t < 1$, and equal to 0 if $t > 2$. 

Let $u, U > 0$ and define
\begin{align}\label{eq:Phiphi}
\Phi_{\pm}(x) & = \, \theta_{\pm}\left(d_r(x)/U\right),\\
\phi_{\pm}(x) & = \, \theta_{\pm}\left(d_r(x)/u\right).
\end{align}
We let $u$ be small enough so that, if $\theta_{u, k}(x) \equiv \theta_-(|x - r_k|/u)$, then
\begin{gather}
\text{supp }\theta_{u, i}\cap \text{supp }\theta_{u, j} = \emptyset,
\end{gather}
for $i \neq j$. Therefore,
\begin{gather}
\phi_-(x) = \sum_{k = 1}^M\theta_{u, k}(x).
\end{gather}
Furthermore, we let $U$ be large enough so that
\begin{gather}
\bigcup_{k = 1}^M\text{supp }\theta_{u, k} \subset \left\{x \in \mathbb{R}^3 : \Phi_-(x) = 1\right\};
\end{gather}
with this,
\begin{gather}
1 = \phi_-^2 + \phi_+^2 = \sum_{k = 1}^M\theta_{u, k}^2 + \phi_+^2 = \sum_{k = 1}^M\theta_{u, k}^2 + \phi_+^2\Phi_-^2 + \phi_+^2\Phi_+^2 = \sum_{k = 1}^M\theta_{u, k}^2 + \phi_+^2\Phi_-^2 + \Phi_+^2.
\end{gather}
We will fix at this point $u = h^{2 - \delta}$, $\delta = 1/11$, and $U = h^{-1}$. For $h$ small enough, the above properties will be satisfied. We then have, by the IMS formula,
\begin{align}
T_h(\widetilde{A}) + \beta^{-2} \geq & \, \sum_{k = 1}^M\theta_{u, k}\left(T_h(\widetilde{A}) + \beta^{-2} - E\right)\theta_{u, k} + \Phi_-\phi_+\left(T_h(\widetilde{A}) + \beta^{-2} - E\right)\Phi_-\phi_+\nonumber\\
& \, + \Phi_+\left(T_h(\widetilde{A}) + \beta^{-2} - E\right)\Phi_+,
\end{align}
where $E$ is the IMS localization error. One can easily verify that, with $d(x) =: d_r(x)$,
\begin{align}
E\theta_{u, k}^2 & \leq \, Ch^2u^{-2},\\
E\Phi_-^2\phi_+^2 & \leq \, Ch^2\left(u^{-2}1_{\left\{u \leq d(x) \leq 2u\right\}} + U^{-2}1_{\left\{U \leq d(x) \leq 2U\right\}}\right) \equiv Ch^2 W_{u, U}(x),\\
E\Phi_+^2 & \leq \, Ch^2 U^{-2}1_{\left\{d(x) \leq 2U\right\}}.
\end{align}
We notice that each of these expressions is smaller than $\beta^{-2}$ for small enough $h$
(recall from Theorem~\ref{theorem.preliminary.main.result} that $\beta\leq h$ by assumption). We then have, by the pull-out estimate (Proposition \ref{proposition.pull.out}),
\begin{align}
& \, \sqrt{\beta^{-2}T_h(\widetilde{A}) + \beta^{-4}} - \beta^{-2} - \widetilde{\kappa}V_{z, r}^{TF}\nonumber\\
\geq & \, \sum_{k = 1}^M\theta_{u, k}\left(\sqrt{\beta^{-2}T_h(\widetilde{A}) + \beta^{-4} - C\beta^{-2}h^2 u^{-2}} - \beta^{-2} - \widetilde{\kappa}V_{z, r}^{TF}\right)\theta_{u, k}\nonumber\\
& \, + \Phi_-\phi_+\left(\sqrt{\beta^{-2}T_h(\widetilde{A}) + \beta^{-4} - C\beta^{-2}h^2W_{u, U}} - \beta^{-2} - \widetilde{\kappa}V_{z, r}^{TF}\right)\phi_+\Phi_-\nonumber\\
& \, + \Phi_+\left(\sqrt{\beta^{-2}T_h(\widetilde{A}) + \beta^{-4} - Ch^2\beta^{-2}U^{-2}1_{\left\{d(x) \leq 2U\right\}}} - \beta^{-2} - \widetilde{\kappa}V_{z, r}^{TF}\right)\Phi_+.
\end{align}
The trace of $\sqrt{\beta^{-2}T_h(\widetilde{A}) + \beta^{-4}} - \beta^{-2} - \widetilde{\kappa}V_{z, r}^{TF}$ can therefore be split into three pieces. 

In conclusion, we have (for arbitrary $\widetilde{A}$ and where in ${\mathcal T}_1$ we have dropped the corresponding part of the phase space integral for a lower bound)
\begin{align}\label{eq:SplittingInTs}
&\tr\left[\sqrt{\beta^{-2}T_h(\widetilde{A}) + \beta^{-4}} - \beta^{-2} - \widetilde{\kappa}V_{z, r}^{TF}\right]_- + \frac{\lambda}{\beta^2 h^3}\int_{\mathbb{R}^3}\left|\nabla\otimes\widetilde{A}\right|^2 \nonumber \\
&\quad- \frac{2}{(2\pi h)^3}\int_{\mathbb{R}^3}\int_{\mathbb{R}^3}\left[\frac{p^2}{2} - \widetilde{\kappa}V_{z, r}^{TF}(x)\right]_{-}\,dx\,dp - \frac{2}{h^2}\sum_{k = 1}^M (z_k\widetilde{\kappa})^2 S_2\left(\beta h^{-1}\widetilde{\kappa}z_k\right) \nonumber \\
&\geq
{\mathcal T}_1 + {\mathcal T}_2 + \sum_{k=1}^M {\mathcal T}_{3,k},
\end{align}
where
\begin{align}\label{eq:Ts}
{\mathcal T}_1 &:= \tr \Big[ \Phi_+\left(\sqrt{\beta^{-2}T_h(\widetilde{A}) + \beta^{-4} - Ch^2\beta^{-2}U^{-2}1_{\left\{d(x) \leq 2U\right\}}} - \beta^{-2} - \widetilde{\kappa}V_{z, r}^{TF}\right)\Phi_+ \Big]_{-} \nonumber \\
&\quad + \frac{\lambda}{3\beta^2 h^3}\int_{\mathbb{R}^3}\left|\nabla\otimes\widetilde{A}\right|^2,
\nonumber \\
{\mathcal T}_2 &:=  \tr \Big[  \Phi_-\phi_+\left(\sqrt{\beta^{-2}T_h(\widetilde{A}) + \beta^{-4} - C\beta^{-2}h^2W_{u, U}} - \beta^{-2} - \widetilde{\kappa}V_{z, r}^{TF}\right)\phi_+\Phi_- \Big]_{-} 
+ \frac{\lambda}{3\beta^2 h^3}\int_{\mathbb{R}^3}\left|\nabla\otimes\widetilde{A}\right|^2\nonumber \\
&\qquad- \frac{2}{(2\pi h)^3}\int_{\mathbb{R}^3}\int_{\mathbb{R}^3}  \Phi_{-}^2(x)\phi_{+}^2(x)
\left[\frac{p^2}{2} - \widetilde{\kappa}V_{z, r}^{TF}(x)\right]_{-}\,dx\,dp, \nonumber \\
{\mathcal T}_{3,k} &:=  \tr \Big[ \theta_{u, k}\left(\sqrt{\beta^{-2}T_h(\widetilde{A}) + \beta^{-4} - C\beta^{-2}h^2 u^{-2}} - \beta^{-2} - \widetilde{\kappa}V_{z, r}^{TF}\right)\theta_{u, k} \Big]_{-}
+ \frac{\lambda}{3 M \beta^2 h^3}\int_{\mathbb{R}^3}\left|\nabla\otimes\widetilde{A}\right|^2\nonumber \\
&\qquad- \frac{2}{(2\pi h)^3}\int_{\mathbb{R}^3}\int_{\mathbb{R}^3}  \theta_{u, k}^2(x)
\left[\frac{p^2}{2} - \widetilde{\kappa}V_{z, r}^{TF}(x)\right]_{-}\,dx\,dp- \frac{2}{h^2} (z_k\widetilde{\kappa})^2 S_2\left(\beta h^{-1}\widetilde{\kappa}z_k\right).
\end{align}

The first two parts, ${\mathcal T}_1$ and ${\mathcal T}_2$, are localized away from the nuclei, and for this reason the exact value of the coupling constant $\widetilde{\kappa}\max(z_1, \ldots , z_M)$ is irrelevant in these regions. Because of this, in order to bound ${\mathcal T}_1$ and ${\mathcal T}_2$ we shall merely use the following two results, obtained in \cite[subsections III. A. and III. B]{EFS}. The proofs in \cite{EFS} also work for $\widetilde{\kappa}\max(z_1, \ldots , z_M) = 2/\pi$.
\begin{theorem}[Erd\H{o}s, Fournais, Solovej]\label{thm:EFS_far}
For sufficiently small $h > 0$, $0 < \beta \leq h$, $0 < \widetilde{\kappa}\max(z_1, \ldots , z_M)$ $\leq 2/\pi$, and any admissible $\widetilde{A}$,
\begin{align}
& \, \tr\left[\Phi_+\left(\sqrt{\beta^{-2}T_h(\widetilde{A}) + \beta^{-4} - Ch^2\beta^{-2}U^{-2}1_{\left\{d(x) \leq 2U\right\}}} - \beta^{-2} - \widetilde{\kappa}V_{z, r}^{TF}\right)\Phi_+\right]_-\nonumber\\
\geq & \, -C - Ch^{-1}\int_{\mathbb{R}^3}|\nabla\otimes\widetilde{A}|^2\,dx.
\end{align}
\end{theorem}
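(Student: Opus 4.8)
\textbf{Proof plan for Theorem~\ref{thm:EFS_far}.}
The statement concerns the region $\{d_r(x) \geq U\}$ where $U = h^{-1} \to \infty$, so the Thomas-Fermi potential is small there: by the scaling and the estimate \eqref{estimate.VTF.asymptotic.Coulomb.4th.power}, on $\supp\Phi_+$ we have $\widetilde{\kappa}V_{z,r}^{TF}(x) \leq C/d_r(x)^4 \leq C U^{-4} = Ch^4$, and more importantly $\widetilde{\kappa}V_{z,r}^{TF} \in L^{5/2} \cap L^4$ on that support with controlled norms. The plan is to discard the positive-mass advantage and bound the operator from below by the Pauli relativistic kinetic energy: using $\sqrt{\beta^{-2}X + \beta^{-4}} - \beta^{-2} \geq 0$ trivially removes the mass, but more usefully one writes $\sqrt{\beta^{-2}X + \beta^{-4}} - \beta^{-2} \geq \tfrac{1}{2}\beta^{-1}\sqrt{X}$ for $X \geq 0$ (valid once $\beta^{-2}X$ is not too large, which one arranges on the relevant spectral subspace, or one simply uses the global bound $\sqrt{a^2+b^2}-b \geq (\sqrt{2}-1)\min(a^2/(2b), a)$ type estimate). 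Then apply the Critical Pauli-Hardy-Lieb-Thirring inequality, Theorem~\ref{thm:CPHLT}, with the effective potential being $\widetilde{\kappa}V_{z,r}^{TF}$ plus the IMS error $Ch^2\beta^{-2}U^{-2}1_{\{d \leq 2U\}}$, after rescaling to absorb the $h$ and $\beta$ factors; the output is $-C\big(\int |\nabla \times \widetilde A|^2 + \int V_{\mathrm{eff},+}^4\big)$.

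The key steps, in order, are as follows. First, handle the shifted square root: on $\supp \Phi_+$ the subtracted term $Ch^2\beta^{-2}U^{-2}1_{\{d\leq 2U\}} = Ch^4\beta^{-2}$ inside the square root is, after multiplying through, a bounded perturbation, and one checks $\sqrt{\beta^{-2}T_h(\widetilde A) + \beta^{-4} - Ch^4\beta^{-2}} - \beta^{-2} \geq c\,h\beta^{-1}|\sigma\cdot(-i\nabla + h^{-1}\widetilde A)| - C$ for a universal $C$, using concavity of the square root and $\beta \leq h$. Second, rescale $x \mapsto hx$ (equivalently, work with $h^{-1}\widetilde A$ as the vector potential) so that $h|\sigma \cdot(-i\nabla+h^{-1}\widetilde A)|$ becomes $|\sigma\cdot(p - \mathcal A)|$ with $\mathcal A(y) = h\widetilde A(hy)$, under which $\int|\nabla \times \mathcal A|^2 = h^{-1}\int |\nabla\times \widetilde A|^2$ and the potential $\widetilde\kappa V^{TF}_{z,r}$ transforms into something supported at distance $\geq h^{-2}$ from the rescaled nuclei with $L^4$ norm that is a fixed small power of $h$. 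Third, note that in this far region the coefficient in front of the Hardy term $\tfrac{2}{\pi|x|}$ is irrelevant — indeed there is no Coulomb singularity on $\supp\Phi_+$ — so one may freely add a harmless $-\tfrac{2}{\pi|y-r_k|}$ term and apply Theorem~\ref{thm:CPHLT} (or simply apply it with $V = \widetilde\kappa V^{TF}_{z,r} + \text{(IMS error)}$ and no Hardy term needed, since the critical inequality certainly implies the non-critical one). Fourth, collect: the Hardy-Lieb-Thirring bound gives $-C\int|\nabla\times \mathcal A|^2 - C\int V_{\mathrm{eff},+}^4 = -Ch^{-1}\int|\nabla\times\widetilde A|^2 - C$, the second term being $O(1)$ because $\int_{\{d\geq U\}} (\widetilde\kappa V^{TF})^4 \leq C\int_{\{d\geq h^{-1}\}} d_r^{-16} \to 0$ and the IMS-error contribution $\int (h^4\beta^{-2})^4 1_{\{d\leq 2U\}}$ is likewise controlled (it is supported on a ball of radius $2h^{-1}$, giving $h^{16}\beta^{-8} h^{-3}$, which for $\beta \leq h$ is $O(h^5)$, hence $O(1)$). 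This yields exactly the claimed bound.

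I expect the main obstacle to be the first step — converting the mass-shifted, IMS-perturbed square root $\sqrt{\beta^{-2}T_h(\widetilde A) + \beta^{-4} - Ch^2\beta^{-2}U^{-2}1_{\{d\leq 2U\}}} - \beta^{-2}$ into a clean lower bound of the form $c\,h\beta^{-1}|\sigma\cdot(-i\nabla+h^{-1}\widetilde A)| - C$ that is amenable to the (massless, non-magnetic-mass) Hardy-Lieb-Thirring inequality. The subtlety is that the subtracted IMS error lives inside the square root, so one cannot simply pull it out; one must argue, e.g. via $\sqrt{a - b} \geq \sqrt{a} - \sqrt{b}$ for $0 \leq b \leq a$ applied in the functional calculus (legitimate since all operators involved are functions of $T_h(\widetilde A)$ only after noting the error is a multiplication operator — actually it is \emph{not} a function of $T_h$, so one needs instead the operator monotonicity of $t \mapsto \sqrt t$ together with $T_h(\widetilde A) + \beta^{-4} - Ch^4\beta^{-2} \geq \tfrac12(T_h(\widetilde A) + \beta^{-4})$ for small $h$, which holds because $\beta^{-4}$ dominates $h^4\beta^{-2}$ when $\beta \leq h \leq 1$). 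Once this reduction is in place the rest is a matter of bookkeeping with the scaling exponents and invoking Theorem~\ref{thm:CPHLT}; since this is precisely the argument of \cite[Subsection III.A]{EFS} and that argument is insensitive to whether $\widetilde\kappa\max z_k$ equals $2/\pi$ (the far region sees no Coulomb singularity at all), the extension to criticality is immediate.
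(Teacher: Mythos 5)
The paper does not actually prove Theorem~\ref{thm:EFS_far}; it quotes it from \cite[Subsection III.A]{EFS} with the remark that the argument there is insensitive to the value of $\widetilde\kappa\max z_k$ because $\supp\Phi_+$ contains no Coulomb singularity. Your identification of that last point is correct, as is the overall bookkeeping architecture (restrict $V^{TF}_{z,r}$ to $\{d_r\geq U\}$ using the $d_r^{-4}$ decay, extract the IMS error, rescale so that the field energy picks up the factor $h^{-1}$). But the central analytic step you propose is wrong, and it is wrong in a way that cannot be patched.

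The gap is the reduction from the massive operator to the massless one. The inequality $\sqrt{\beta^{-2}X+\beta^{-4}}-\beta^{-2}\geq\tfrac12\beta^{-1}\sqrt X$ is false for small $X$: the left side behaves like $\tfrac12\beta^{2}X$ there, which is far below $\tfrac12\beta^{-1}\sqrt X$. The only globally valid statements are of the form $\sqrt{a^2+m^2}-m\geq\tfrac12(a-m)$ or $\geq c\min(a,a^2/m)$; the first produces an additive constant $-\tfrac12\beta^{-2}$, i.e.\ a constant potential on the \emph{unbounded} region $\{d_r\geq U\}$, whose $L^4$-norm is infinite, so Theorem~\ref{thm:CPHLT} returns $-\infty$; the second leaves you with the low-momentum (non-relativistic) branch, which the massless Hardy--Lieb--Thirring inequality simply does not control --- that branch is precisely why the Daubechies bound carries the term $\int V_+^{5/2}$ in addition to $\int V_+^4$. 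The correct tool here is Theorem~\ref{theorem.lieb.thirring.efs} (the Daubechies inequality for the Pauli operator, i.e.\ \cite[Theorem~2.2]{EFS}), applied directly to the massive operator after (i) removing the IMS error from under the square root via the monotonicity of $\xi\mapsto\sqrt{a^2+\xi^2}-\xi$ (which turns $Ch^2\beta^{-2}U^{-2}$ into a potential of size $Ch^2U^{-2}=Ch^4$, \emph{with} the crucial gain of a factor $\beta^2$), (ii) replacing $V^{TF}_{z,r}$ by $V^{TF}_{z,r}1_{\{d_r\geq U\}}$, and (iii) rescaling $x=hy$; one then reads off $-C\bigl(\int V_+^{5/2}+\beta^3\int V_+^4\bigr)=O(h^4)$ and, after Young's inequality on the cross term, $-Ch^{-1}\int|\nabla\times\widetilde A|^2$. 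Note also that your estimate of the IMS contribution, $\int(h^4\beta^{-2})^4 1_{\{d\leq 2U\}}\sim h^{13}\beta^{-8}=O(h^5)$, is incorrect: $\beta\leq h$ makes $\beta^{-8}\geq h^{-8}$, so $h^{13}\beta^{-8}$ is bounded \emph{below}, not above, by $h^5$ and blows up when $\beta\ll h$; this is exactly the $\beta^2$ you lose by not extracting the error from the square root correctly. Finally, your secondary suggestion $T_h(\widetilde A)+\beta^{-4}-Ch^4\beta^{-2}\geq\tfrac12(T_h(\widetilde A)+\beta^{-4})$ followed by operator monotonicity yields $\tfrac1{\sqrt2}\sqrt{\beta^{-2}T_h+\beta^{-4}}-\beta^{-2}$, which again carries an uncontrollable $-(1-\tfrac1{\sqrt2})\beta^{-2}$ per state.
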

\begin{theorem}[Erd\H{o}s, Fournais, Solovej]\label{thm:EFS_semiclassics}
For sufficiently small $h > 0$, $0 < \beta \leq h$, $0 < \widetilde{\kappa}\max(z_1, \ldots , z_M)$ $\leq 2/\pi$, $\widetilde{\lambda} > 0$, and any admissible $\widetilde{A}$,
\begin{align}\label{eq:EstimateT2}
& \, \tr\left[\Phi_-\phi_+\left(\sqrt{\beta^{-2}T_h(\widetilde{A}) + \beta^{-4} - C\beta^{-2}h^2W_{u, U}} - \beta^{-2} - \widetilde{\kappa}V_{z, r}^{TF}\right)\phi_+\Phi_-\right]_- + \frac{\widetilde{\lambda}}{\beta^2 h^3}\int|\nabla\otimes\widetilde{A}|^2\,dx\nonumber\\
\geq & \, \frac{2}{(2\pi h)^3}\int\!\!\!\int \Phi_-^2(x)\phi_+(x)^2\left[\frac{p^2}{2} - \widetilde{\kappa}V_{z, r}^{TF}(x)\right]_-\,dx\,dp - Ch^{-2 + \delta/22}.
\end{align}
\end{theorem}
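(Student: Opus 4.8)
The plan is to exploit the fact that $\Phi_-\phi_+$ is supported in the annular region $\{u\le d_r(x)\le 2U\}$, on which $V_{z,r}^{TF}$ is bounded and smooth: it has Coulomb singularities only at the points $r_k$, which lie outside $\supp(\Phi_-\phi_+)$ by construction, and it decays like $d_r^{-4}$ at infinity (the scaled analogue of \eqref{estimate.VTF.asymptotic.Coulomb.4th.power}). Because the potential is bounded there, the precise value of $\widetilde\kappa\max_k z_k$ --- in particular whether it is subcritical or equals the critical constant $2/\pi$ --- never enters, so the argument of \cite[\S III.B]{EFS} transfers with no change. Concretely I would organize the proof in three steps: remove the localization potential sitting inside the square root, pass from the relativistic to the non-relativistic kinetic energy, and then run the magnetic semiclassics.

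First I would remove the error potential $W_{u,U}$ from inside the root. On $\supp(\Phi_-\phi_+)$ one has $h^2W_{u,U}\le h^2u^{-2}=h^{2\delta}$, so the operator under the root stays non-negative for $h$ small, and using operator monotonicity of $\sqrt{\cdot}$ together with $\sqrt{a-b}\ge\sqrt a-b/(2\sqrt a)$ and $\sqrt a\ge\beta^{-2}$,
\[
\sqrt{\beta^{-2}T_h(\widetilde A)+\beta^{-4}-C\beta^{-2}h^2W_{u,U}}-\beta^{-2}\ \ge\ \Bigl(\sqrt{\beta^{-2}T_h(\widetilde A)+\beta^{-4}}-\beta^{-2}\Bigr)-Ch^2W_{u,U}.
\]
The bounded potential $Ch^2W_{u,U}$, supported on a set of volume $O(u^3+U^3)$, is then absorbed by the magnetic Daubechies inequality (Theorem~\ref{theorem.lieb.thirring.efs}), at the cost of an error bounded by $Ch^{-2+\delta/22}$ for $h$ small, leaving the operator with the unperturbed root.

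Next I would pass to the non-relativistic kinetic energy: since $\beta\le h$, the elementary inequality $\beta^{-2}(\sqrt{1+\beta^2 t}-1)\ge\tfrac t2-\tfrac18\beta^2 t^2$ applied via the spectral theorem for the non-negative operator $T_h(\widetilde A)$ (the high-momentum range where $\tfrac12\beta^{-1}\sqrt t$ takes over is discarded for a lower bound, or controlled again by Theorem~\ref{theorem.lieb.thirring.efs}) shows that the relativistic correction is of relative size $O(\beta^2 h^{-2})$ and hence negligible. This reduces the problem to bounding from below
\[
\tr\Bigl[\Phi_-\phi_+\Bigl(\tfrac12[\sigma\cdot(-ih\nabla+\widetilde A)]^2-\widetilde\kappa V_{z,r}^{TF}\Bigr)\phi_+\Phi_-\Bigr]_-+\frac{\widetilde\lambda}{\beta^2 h^3}\int_{{\mathbb R}^3}|\nabla\otimes\widetilde A|^2,
\]
with $[\sigma\cdot(-ih\nabla+\widetilde A)]^2=(-ih\nabla+\widetilde A)^2+h\,\sigma\cdot B$. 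Finally I would do the magnetic semiclassics: the spin term $h\,\sigma\cdot B$ is removed by replacing $V_{z,r}^{TF}$ with $V_{z,r}^{TF}+\tfrac12|B|$, its contribution estimated by Theorem~\ref{theorem.lieb.thirring.efs}; the resulting $(\int B^2)^{3/4}(\int(\cdot)^4)^{1/4}$-term is split by Young's inequality so that a small multiple of $\int B^2$ is absorbed into the penalty $\tfrac{\widetilde\lambda}{\beta^2 h^3}\int|\nabla\otimes\widetilde A|^2$ (possible because $(\beta^2 h^3)^{-1}$ is a large power of $h^{-1}$) and the residual is lower order, using $\int_{\{d_r\ge u\}}(V_{z,r}^{TF})^4\le Cu^{-1}$ with $u=h^{2-\delta}$. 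For the remaining non-magnetic operator $\tfrac12(-ih\nabla+\widetilde A)^2-\widetilde\kappa V_{z,r}^{TF}$ one invokes the coherent-state lower bound of Solovej--Spitzer as in \cite{EFS}, producing the phase-space integral $\tfrac{2}{(2\pi h)^3}\iint\Phi_-^2(x)\phi_+^2(x)[\tfrac{p^2}{2}-\widetilde\kappa V_{z,r}^{TF}(x)]_-\,dx\,dp$ plus an error collecting the coherent-state scale, the variation of $V_{z,r}^{TF}$ across the transition layers at $d_r\sim u=h^{2-\delta}$ and $d_r\sim U=h^{-1}$, and the leftover magnetic terms; optimizing the scales exactly as in \cite{EFS} yields $-Ch^{-2+\delta/22}$.

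I expect the last step to be the main obstacle. Because the field is \emph{self-generated}, every $B$-dependent error has to be controlled uniformly over all admissible $\widetilde A$: each such term must be shown either intrinsically lower order or absorbable into $\tfrac{\widetilde\lambda}{\beta^2 h^3}\int|\nabla\otimes\widetilde A|^2$, while the total error is kept at the level $h^{-2+\delta/22}$ --- and this is precisely where working with the relativistic (rather than $p^2$) kinetic energy and with the magnetic Daubechies/Lieb--Thirring inequalities is indispensable. A secondary difficulty is the outer cutoff: since $V_{z,r}^{TF}$ decays only like $d_r^{-4}$, one must check that the region $\{d_r(x)\gtrsim h^{-1}\}$ together with the transition layer near $\partial\supp\Phi_-$ contributes negligibly, which dictates the choice $U=h^{-1}$. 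Since all of this takes place where $V_{z,r}^{TF}$ is bounded, no new phenomenon appears at criticality and the estimates of \cite{EFS} apply verbatim.
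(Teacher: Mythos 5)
Your proposal follows essentially the same route as the paper, which does not reprove this theorem but simply invokes \cite[\S III.B]{EFS}, observing---exactly as you do---that on $\supp(\Phi_-\phi_+)$ the potential is bounded so the criticality of $\widetilde{\kappa}\max_k z_k$ never enters, and that the only modification is the shorter length scale $u=h^{2-\delta}$ (admissible since $u\gg h^2$), which yields the error exponent $-2+\delta/22$. One arithmetic slip: $h^2u^{-2}=h^{-2+2\delta}$ is large, not $h^{2\delta}$; what actually keeps the operator under the root non-negative is $C\beta^{-2}h^2u^{-2}\leq C h^{2\delta}\beta^{-4}\ll\beta^{-4}$, using $\beta\leq h$.
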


\begin{remark}
Notice that the error term in \eqref{eq:EstimateT2} is different from \cite[(3.25)]{EFS}. This is because we need to choose the length scale $u=h^{2-\delta}$ in \eqref{eq:Phiphi} much shorter than the corresponding choice ($h^{3/2}$) in \cite{EFS}. However, the proof from \cite{EFS} works as long as $u \gg h^2$.
\end{remark}

The only part remaining then is the one localized close to the nuclei, i.e. the ${\mathcal T}_{3,k}$'s.
\end{subsection}
\begin{subsection}{The region close to the nuclei}
We will follow roughly the arguments in \cite[III.C]{EFS}, but with several important changes. We will study an indiviual term ${\mathcal T}_{3,k}$. We assume without loss of generality that $r_k = 0$. The expression we will be working on is therefore,
\begin{align}
\label{equation.first.close.to.the.nuclei}
&\tr\left[\theta_-\left(\frac{|x|}{u}\right)\left(\sqrt{\beta^{-2}T_h(\widetilde{A}) + \beta^{-4} - C\beta^{-2}h^2u^{-2}} - \beta^{-2} - \widetilde{\kappa}V_{z, r}^{TF}\right)\theta_-\left(\frac{|x|}{u}\right)\right]_-
+ \frac{\lambda}{3 M \beta^2 h^3}\int_{\mathbb{R}^3}\left|\nabla\otimes\widetilde{A}\right|^2\nonumber \\
&\qquad- \frac{2}{(2\pi h)^3}\int_{\mathbb{R}^3}\int_{\mathbb{R}^3} \theta_{-}\left(\frac{|x|}{u}\right)^2(x)
\left[\frac{p^2}{2} - \widetilde{\kappa}V_{z, r}^{TF}(x)\right]_{-}\,dx\,dp- \frac{2}{h^2} (z_k\widetilde{\kappa})^2 S_2\left(\beta h^{-1}\widetilde{\kappa}z_k\right).
\end{align}
First we study the phase space integral in \eqref{equation.first.close.to.the.nuclei}.
We notice that $V_{z, r}^{TF}$ may be replaced by $z_k/|x|$ at a negligible cost: by means of equation \eqref{estimate.VTF.Coulomb} and the expansion for $-1 < x < 1$ given by $(1 + x)^{5/2} = 1 + C(1 + \xi)^{3/2}x$, $\xi$ in the closed interval generated by 0 and $x$, we find
\begin{align}
& \left|\frac{2}{(2\pi h)^3}\int\!\!\!\int\theta_{u, k}^2(x)\left[\left(\frac{p^2}{2} - \frac{\widetilde{\kappa}z_k}{|x|}\right)_- - \left(\frac{p^2}{2} - \widetilde{\kappa}V_{z, r}^{TF}(x)\right)_-\right]\,dx\,dp\right|\nonumber\\
\leq & \, C\frac{2}{(2\pi h)^3}\int\theta_{u, k}^2(x)\left|\left(\frac{z_k}{|x|}\right)^{5/2} - V_{z, r}^{TF}(x)^{5/2}\right|\,dx\nonumber\\
\leq & \, C\frac{2z_k^{3/2}}{(2\pi h)^3}\int\frac{\theta_{u, k}^2(x)}{|x|^{3/2}}\,dx = Ch^{-3}u^{3/2} \leq Ch^{-3/22}.
\end{align}
Also, a simple change of variables shows that
\begin{align}
\frac{2}{(2\pi h)^3}\int\!\!\!\int\theta_{u, k}^2(x)\left(\frac{p^2}{2} - \frac{\widetilde{\kappa}z_k}{|x|}\right)_-\,dx\,dp 
&= \frac{(\widetilde{\kappa}z_k)^2}{h^2}\frac{2}{(2\pi)^3}\int\!\!\!\int\theta_-^2\left(|x|/\mathcal{R}\right)\left(\frac{p^2}{2} - \frac{1}{|x|}\right)_-\,dx\,dp,
\end{align}
with $\mathcal{R} := \widetilde{\kappa}z_k h^{-2} u$.
Therefore, our final estimate on the phase space integral is, 
\begin{align}\label{eq:PhaseSpace}
 -\frac{2}{(2\pi h)^3}\int\!\!\!\int & \theta_{u, k}^2(x)\left(\frac{p^2}{2} - \frac{\widetilde{\kappa}z_k}{|x|}\right)_- \,dx\,dp\nonumber \\
 &\geq
 - \frac{(\widetilde{\kappa}z_k)^2}{h^2}\frac{2}{(2\pi)^3}\int\!\!\!\int\theta_-^2\left(|x|/\mathcal{R}\right)\left(\frac{p^2}{2} - \frac{1}{|x|}\right)_-\,dx\,dp - C h^{-3/22}.
\end{align}

Since
$\xi \mapsto \sqrt{a^2 + \xi^2} -\xi$ in non-increasing on $[0,\infty)$ for all real $a$, we get that the trace in \eqref{equation.first.close.to.the.nuclei} is bounded from below by
\begin{gather}
\label{equation.second.close.to.the.nuclei}
\tr\left[\theta_-\left(|x|/u\right)\left(\sqrt{\beta^{-2}T_h(\widetilde{A}) + \beta^{-4}} - \beta^{-2} - \widetilde{\kappa}V_{z, r}^{TF} - Ch^2u^{-2}\right)\theta_{-}\left(|x|/u\right)\right]_-.
\end{gather}
We use the fact that by \eqref{estimate.VTF.Coulomb} on the support of $\theta_{-}\left(|\cdot|/u\right)$,
\begin{gather}
V_{z, r}^{TF} \leq C + \frac{z_k}{|x|},
\end{gather}
for a constant $C$, and we see that the trace in \eqref{equation.second.close.to.the.nuclei} is bounded from below by
\begin{gather}
\tr\left[\theta_-\left(|x|/u\right)\left(\sqrt{\beta^{-2}T_h(\widetilde{A}) + \beta^{-4}} - \beta^{-2} - \frac{\widetilde{\kappa}z_k}{|x|} - C\left(h^2u^{-2} + 1\right)\right)\theta_-\left(|x|/u\right)\right]_-.
\end{gather}
Now, with the scaling $x = h^2 y (\widetilde{\kappa}z_k)^{-1}$, this is greater than or equal to, with $T :=T_{h=1}$,
\begin{gather}
\label{equation.third.close.to.the.nuclei}
\frac{(\widetilde{\kappa}z_k)^2}{h^2}\tr\left[\theta_-\left(|y|/\mathcal{R}\right)\left(\sqrt{\widetilde{\alpha}^{-2}T(\overline{A}) + \widetilde{\alpha}^{-4}} - \widetilde{\alpha}^{-2} - \frac{1}{|y|} - C\left(h^4u^{-2} + h^2\right)\right)\theta_-\left(|y|/\mathcal{R}\right)\right]_-,
\end{gather}
with
\begin{align}\label{eq:params}
\mathcal{R} := \widetilde{\kappa}z_k u h^{-2}, \qquad \widetilde{\alpha} := \widetilde{\kappa}z_k\beta h^{-1}, \quad \text{ and } \quad \overline{A}(y) := h(\widetilde{\kappa}z_k)^{-1}\widetilde{A}\left(h^2 y(\widetilde{\kappa}z_k)^{-1}\right).
\end{align}
By recalling that $u = h^{2 - 1/11}$, we can finally bound the expression in \eqref{equation.third.close.to.the.nuclei} from below by
\begin{gather}
\label{equation.fourth.close.to.the.nuclei}
\frac{\left(\widetilde{\kappa}z_k\right)^2}{h^2}\tr\left[\theta_-\left(|y|/\mathcal{R}\right)\left(\sqrt{\widetilde{\alpha}^{-2}T(\overline{A}) + \widetilde{\alpha}^{-4}} - \widetilde{\alpha}^{-2} - \frac{1}{|y|} - C\mathcal{R}^{-2}\right)\theta_-\left(|y|/\mathcal{R}\right)\right]_-.
\end{gather}
In conclusion,
\begin{align}\label{eq:Traces}
&\tr\left[\theta_-\left(|x|/u\right)\left(\sqrt{\beta^{-2}T_h(\widetilde{A}) + \beta^{-4} - C\beta^{-2}h^2u^{-2}} - \beta^{-2} - \widetilde{\kappa}V_{z, r}^{TF}\right)\theta_-\left(|x|/u\right)\right]_- \nonumber \\
& \geq 
\frac{\left(\widetilde{\kappa}z_k\right)^2}{h^2}\tr\left[\theta_-\left(|y|/\mathcal{R}\right)\left(\sqrt{\widetilde{\alpha}^{-2}T(\overline{A}) + \widetilde{\alpha}^{-4}} - \widetilde{\alpha}^{-2} - \frac{1}{|y|} - C\mathcal{R}^{-2}\right)\theta_-\left(|y|/\mathcal{R}\right)\right]_-.
\end{align}
Notice also that
\begin{align}\label{eq:ChngVarField}
\int |\nabla \otimes \overline{A}|^2 = (\widetilde{\kappa} z_k)^{-1} \int |\nabla \otimes \widetilde{A}|^2.
\end{align}
Combining \eqref{eq:PhaseSpace}, \eqref{eq:Traces} and \eqref{eq:ChngVarField}, 
and introducing the notation $E(R,\alpha, \Lambda)$ from Theorem~\ref{theorem.asymptotics.E} below, 
we get (still with the convention $r_k=0$ and with parameters from \eqref{eq:params}),
\begin{align}\label{eq:CombinedT3k}
{\mathcal T}_{3,k} \geq \frac{\widetilde{\kappa}^2 z_k^2}{h^2} \left( E\left(\mathcal{R}, \widetilde{\alpha}, \frac{\lambda}{3M \beta^2 h \widetilde{\kappa} z_k}\right) - 2 S_2(\widetilde{\alpha})\right)
-C h^{-3/50}.
\end{align}
Notice that by the choice of $u$ (and since $\beta \leq h$) it is clear that 
$$\mathcal{R}^{11} \frac{3M \beta^2 h \widetilde{\kappa} z_k}{\lambda} \rightarrow 0
$$ 
as $h \rightarrow 0$.
Therefore, it follows from Theorem~\ref{theorem.asymptotics.E} below and \eqref{eq:CombinedT3k} that
\begin{align}\label{eq:t3kFinal}
\liminf_{h \rightarrow 0_{+}} h^2 {\mathcal T}_{3,k} \geq 0.
\end{align}

To establish \eqref{eq:t3kFinal} we needed the following result
\begin{theorem}
\label{theorem.asymptotics.E}
Let $\phi : \mathbb{R}^3 \to [0, 1]$ with $\text{supp }\phi \subset B(1)$ and such that $\phi = 1$ on $B(1/2)$, define $\phi_R(x) := \phi(x/R)$ for $rR> 0$, and let $\chi_R$ be the indicator function of the support of $\phi_R$. 
Define also, for $R, \Lambda > 0$, $D \geq 0$ and $0 < \alpha \leq 2/\pi$,
\begin{align}
\mathcal{E}_{R, \alpha, \Lambda}(A) := & \, \tr\left[\phi_R\left(\sqrt{\alpha^{-2}T(A) + \alpha^{-4}} - \alpha^{-2} - \frac{1}{|x|} - DR^{-2}\chi_R\right)\phi_R\right]_- \nonumber \\
&+ \Lambda\int|\nabla\otimes A|^2\,dx - I_R,
\end{align}
with $I_R$ as defined in \eqref{eq:IR}.

Define furthermore,
\begin{gather}
E(R, \alpha, \Lambda) := \inf_A\mathcal{E}_{R, \alpha, \Lambda}(A).
\end{gather}
Then,
\begin{gather}
\lim_{\substack{R, \Lambda \to \infty\\R^{11}/\Lambda \to 0}}\,\sup_{0 < \alpha \leq 2/\pi}\left|E(R, \alpha, \Lambda) - 2S_2(\alpha)\right| = 0.
\end{gather}
\end{theorem}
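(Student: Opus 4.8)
\emph{The plan} is to establish the two matching estimates $|E(R,\alpha,\Lambda)-2S_2(\alpha)|\le\xi(R,\Lambda)$ with $\xi(R,\Lambda)\to0$, uniformly in $\alpha$. For $\alpha$ bounded away from $2/\pi$, say $\alpha\le1/\pi$, the statement follows from the (subcritical) analysis of \cite[III.C]{EFS}, where one may give away a fixed fraction of the kinetic energy; the genuinely new regime is $\alpha$ close to $2/\pi$, where the critical relativistic Hardy inequality leaves no kinetic energy to spare, so we present the argument for $1/\pi\le\alpha\le2/\pi$, where in particular $\alpha^{-1}\le\pi$ is bounded. The upper bound is immediate: taking $A=0$ and using $\phi_R^2\le\chi_R$, the term $-DR^{-2}\chi_R$ only decreases the operator, whence $\mathcal{E}_{R,\alpha,\Lambda}(0)\le\tr[\phi_R(\sqrt{\alpha^{-2}p^2+\alpha^{-4}}-\alpha^{-2}-|x|^{-1})\phi_R]_- - I_R$, and by Lemma~\ref{lemma.2S.non.magnetic} this equals $2S_2(\alpha)+o(1)$ uniformly in $\alpha$ as $R\to\infty$ (only $R\to\infty$ is used here).

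For the lower bound, fix an admissible $A$, set $b^2=\int|\nabla\otimes A|^2$, $K_A=\sqrt{\alpha^{-2}[\sigma\cdot(p-A)]^2+\alpha^{-4}}-\alpha^{-2}\ge0$, $V=|x|^{-1}+DR^{-2}\chi_R$, and distinguish $b^2\ge R^{-1}$ (large field) from $b^2<R^{-1}$ (small field). In the large-field case I dominate $\phi_RV\phi_R$ by $\widetilde V_R:=(|x|^{-1}+DR^{-2})1_{\{|x|\le R\}}$, then write $\phi_RK_A\phi_R-\widetilde V_R\ge\phi_R(K_A-\widetilde V_R)\phi_R-w_R$ with $w_R$ supported in $\{R/2\le|x|\le R\}$ and $\|w_R\|_\infty\lesssim R^{-1}$ (so that $-w_R$ costs only $O(R^{1/2})$ in the trace, by a crude Lieb--Thirring count), remove the cutoff around $K_A$ via the elementary inequality $\tr[\phi M\phi]_-\ge\tr[M]_-$ (valid for $0\le\phi\le1$ and $M$ bounded below), and apply Theorem~\ref{thm:MCDLY} with $M=1$, $R_1=0$, $\nu=1$ (so $\alpha\nu=\alpha\le2/\pi$) and the $L^{5/2}\cap L^4$ remainder $U=|x|^{-1}1_{\{\alpha\le|x|\le R\}}+DR^{-2}1_{\{|x|\le R\}}$; using $\alpha\ge1/\pi$ this yields $\tr[\phi_R(K_A-V)\phi_R]_-\ge-C(R^{1/2}+b^2)$. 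Since $|I_R|=O(R^{1/2})$, we get $\mathcal{E}_{R,\alpha,\Lambda}(A)\ge(\Lambda-C)b^2-CR^{1/2}\ge(\Lambda-C)R^{-1}-CR^{1/2}\to+\infty$, because $\Lambda\gg R^{11}$.

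In the small-field case I compare with $A=0$. From the integral representation $\sqrt X=\pi^{-1}\int_0^\infty\frac{X}{X+t}\,t^{-1/2}\,dt$ and the resolvent identity one obtains $\|K_A-K_0\|\le C_\alpha(b+b^2)$; moreover subtracting the constant $\|K_A-K_0\|$ from $\phi_R(K_0-V)\phi_R$ creates at most $Cb^{3/2}R^3$ new negative eigenvalues, while $\phi_R(K_0-|x|^{-1})\phi_R$ itself has only $N_R=O(R^{3/2})$ negative eigenvalues (a bound state of energy $-\varepsilon$ being concentrated at distance $\sim\varepsilon^{-1}$, cut off at $\sim R$). First-order perturbation theory for eigenvalue sums then gives
\[
\tr[\phi_R(K_A-V)\phi_R]_-\ \ge\ \tr[\phi_R(K_0-V)\phi_R]_-\ -\ C\bigl(R^{3/2}b+R^3b^{5/2}\bigr),
\]
and, since $\Lambda\gg R^{11}$, both error terms are absorbed by $\Lambda b^2$ up to an $o_R(1)$ loss ($\Lambda b^2-CR^{3/2}b\ge-C^2R^3/(4\Lambda)$, and $\Lambda b^2\ge CR^3b^{5/2}$ for $b<R^{-1/2}$). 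Similarly $\tr[\phi_R(K_0-V)\phi_R]_-\ge\tr[\phi_R(K_0-|x|^{-1})\phi_R]_- - DR^{-2}N_R$, which costs only $O(R^{-1/2})$, and Lemma~\ref{lemma.2S.non.magnetic} gives $\tr[\phi_R(K_0-|x|^{-1})\phi_R]_- - I_R=2S_2(\alpha)+o(1)$ uniformly in $\alpha$. Assembling these estimates yields $\mathcal{E}_{R,\alpha,\Lambda}(A)\ge2S_2(\alpha)-o(1)$, uniformly in $A$ and in $\alpha\in[1/\pi,2/\pi]$, as $R,\Lambda\to\infty$ with $R^{11}/\Lambda\to0$.

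The crux is the small-field regime. Because the Pauli operator admits no diamagnetic inequality --- owing to the spin term $-\sigma\cdot B$ --- the comparison of $K_A$ with $K_0$ cannot proceed through pointwise kernel bounds and must go through the spectral theorem and the integral representation of the square root, \emph{without} sacrificing any kinetic energy; this is precisely where criticality bites, and where the Pauli--Hardy--Lieb--Thirring inequality (Theorem~\ref{thm:CPHLT}) and its Daubechies-type consequence (Theorem~\ref{thm:MCDLY}) replace the subcritical tools of \cite{EFS}. The other quantitative pillar is the bound $N_R=O(R^{3/2})$ (even $o(R^2)$ would do): it governs both the bounded perturbation $DR^{-2}\chi_R$ and the magnetic perturbation, whereas a naive Birman--Schwinger estimate, yielding only $O(R^3)$, is not enough. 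The scaling hypothesis $R^{11}/\Lambda\to0$ is exactly what makes the field penalty $\Lambda b^2$ large enough to swallow all the $R$-dependent perturbative errors.
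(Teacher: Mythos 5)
Your overall architecture (upper bound from Lemma~\ref{lemma.2S.non.magnetic} with $A=0$; reduction to $1/\pi\le\alpha\le 2/\pi$ via \cite{EFS}; an a priori bound forcing $\int|\nabla\otimes A|^2\lesssim \sqrt R/\Lambda$ for competitive $A$; then comparison with the non-magnetic operator) matches the paper's. But the central step of your small-field case is not correct: the claim $\|K_A-K_0\|\le C_\alpha(b+b^2)$ is false. In Coulomb gauge $[\sigma\cdot(p-A)]^2-p^2=-2A\cdot p+A^2-\sigma\cdot B$, which contains the unbounded operator $A\cdot p$; neither the resolvent identity nor the integral representation of the square root turns this into a \emph{bounded} perturbation controlled by $\|\nabla\otimes A\|_2$ alone (an $L^6$ vector potential with $L^2$ derivatives does not make $A\cdot p$ bounded, and even Birman--Koplienko--Solomyak-type bounds for $\sqrt X-\sqrt Y$ only control the difference by $|X-Y|^{1/2}$, still unbounded). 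The absence of any such norm comparison --- together with the absence of a diamagnetic inequality for the Pauli operator --- is exactly the technical heart of the problem. What is actually available (Lemma~\ref{lemma.inequality.removal.magnetic.field}, built on the estimates of \cite{BF}) is only a \emph{relative} quadratic-form bound: one must give up a factor $(1-\lambda)$ of the non-magnetic Hardy operator and accept an error $\lambda\alpha^{-2}+C\lambda^{-5/2}\alpha^{-1}\|B\|_2^2$. The paper then splits off a $\mu$-fraction, optimizes $\lambda=\mu=\alpha^{2/7}\|B\|_2^{4/7}$, and controls the resulting constant-potential trace by Theorem~\ref{thm:MCDLY}, using $R^{11}/\Lambda\to0$ to make $\mu R^3\to0$; this is where the exponent $11$ is consumed. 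Your perturbative scheme bypasses all of this by assuming the false norm bound, so the small-field case as written does not close.

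A secondary gap: your bound $N_R=O(R^{3/2})$ on the number of negative eigenvalues of $\phi_R(K_0-|x|^{-1})\phi_R$ is asserted from a heuristic about the spatial extent of bound states, not proved, and your argument uses it quantitatively twice (to absorb $DR^{-2}\chi_R$ and the magnetic perturbation). The paper avoids any eigenvalue counting: the $D$-term is removed by a convexity split, writing the operator as $(1-\lambda)(\cdot)+\lambda(\cdot-D\lambda^{-1}R^{-2}\chi_R)$ with $\lambda=R^{-1}$ and estimating the second trace by Theorem~\ref{thm:MCDLY}, which costs only $O(R^{-1/2}+D^{5/2}R^{-3}+\dots)$. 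I would recommend adopting that route for the $D$-term as well; it requires no spectral counting and is uniform in $\alpha$ up to the critical value.
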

We postpone the proof of Theorem~\ref{theorem.asymptotics.E} to Section~\ref{sec:MagneticScottTerm}.

\subsection{End of the proof of Theorem \ref{theorem.preliminary.main.result} }
It follows from Theorem~\ref{thm:EFS_far} and Theorem~\ref{thm:EFS_semiclassics} and with ${\mathcal T}_1$, ${\mathcal T}_2$ from \eqref{eq:Ts} that
\begin{align}
{\mathcal T}_1 \geq -C, \quad {\mathcal T}_2 \geq - C h^{-2 + \delta/22}.
\end{align}
Combining this with \eqref{eq:t3kFinal} we find from \eqref{eq:SplittingInTs} that
\begin{align}
\liminf_{h \rightarrow 0_{+}} h^2\Bigg\{&
\tr\left[\sqrt{\beta^{-2}T_h(\widetilde{A}) + \beta^{-4}} - \beta^{-2} - \widetilde{\kappa}V_{z, r}^{TF}\right]_- + \frac{\lambda}{\beta^2 h^3}\int_{\mathbb{R}^3}\left|\nabla\otimes\widetilde{A}\right|^2 \nonumber \\
&\quad- \frac{2}{(2\pi h)^3}\int_{\mathbb{R}^3}\int_{\mathbb{R}^3}\left[\frac{p^2}{2} - \widetilde{\kappa}V_{z, r}^{TF}(x)\right]_{-}\,dx\,dp - \frac{2}{h^2}\sum_{k = 1}^M (z_k\widetilde{\kappa})^2 S_2\left(\beta h^{-1}\widetilde{\kappa}z_k\right) 
\Bigg\} \nonumber \\
&\geq 0.
\end{align}
This was what we wanted to prove.

\qed
\end{subsection}
\end{section}

\section{Proof of Theorem~\ref{theorem.asymptotics.E}}
\label{sec:MagneticScottTerm}
In the proof of Theorem~\ref{theorem.asymptotics.E} we shall use the following result.
\begin{lemma}
\label{lemma.inequality.removal.magnetic.field}
For $0 < \lambda \leq 1$, $0 < \alpha \leq 2/\pi$,
\begin{align}
&\sqrt{\alpha^{-2}T(A) + \alpha^{-4}} - \alpha^{-2} - \frac{1}{|x|} \nonumber \\
& \geq \left(1 - \lambda\right)\left(\sqrt{\alpha^{-2}p^2 + \alpha^{-4}} - \alpha^{-2} - \frac{1}{|x|}\right) - \lambda\alpha^{-2} - C\lambda^{-5/2}\alpha^{-1}\|B\|_2^2.
\end{align}
\end{lemma}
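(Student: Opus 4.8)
\emph{Proof strategy.}

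First I would carry out a purely algebraic reduction. Writing out the claimed inequality and subtracting, the terms $-\alpha^{-2}$, $(1-\lambda)\alpha^{-2}$ and $\lambda\alpha^{-2}$ cancel, and a $(1-\lambda)$-fraction of $|x|^{-1}$ cancels, so the statement is equivalent to the form inequality
\[
\sqrt{\alpha^{-2}T(A)+\alpha^{-4}}-(1-\lambda)\sqrt{\alpha^{-2}p^2+\alpha^{-4}}\ \geq\ \frac{\lambda}{|x|}-C\lambda^{-5/2}\alpha^{-1}\|B\|_2^2 .
\]
Since $\alpha\leq 2/\pi$, the relativistic Hardy inequality $|p|\geq 2/(\pi|x|)$ gives $|x|^{-1}\leq\tfrac{\pi}{2}|p|\leq\alpha^{-1}|p|\leq\sqrt{\alpha^{-2}p^2+\alpha^{-4}}$, hence $\lambda|x|^{-1}\leq\lambda\sqrt{\alpha^{-2}p^2+\alpha^{-4}}$, and it suffices to prove the ``magnetic-field removal'' estimate
\[
\sqrt{\alpha^{-2}T(A)+\alpha^{-4}}\ \geq\ \sqrt{\alpha^{-2}p^2+\alpha^{-4}}-C\lambda^{-5/2}\alpha^{-1}\|B\|_2^2 .
\]
The $-\lambda\alpha^{-2}$ in the lemma is a byproduct of this cancellation: morally, if one had the operator bound $T(A)\geq(1-\lambda)p^2$ the lemma would follow from operator monotonicity of $x\mapsto\sqrt{\alpha^{-2}x+\alpha^{-4}}$ and $\sqrt{1-\lambda}\geq1-\lambda$; but zero modes of $\sigma\cdot(p-A)$ obstruct such a clean bound, which is exactly why the Coulomb term has to be kept in play.

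For the magnetic-field removal estimate I would use the integral representation $\sqrt{G}=\pi^{-1}\int_0^\infty G(G+s)^{-1}s^{-1/2}\,ds$ for $G\geq0$ together with the resolvent identity. With $G_A:=\alpha^{-2}T(A)+\alpha^{-4}$, $G_0:=\alpha^{-2}p^2+\alpha^{-4}$ and the perturbation $W:=T(A)-p^2=-2A\cdot p+A^2-\sigma\cdot B$ (using $\nabla\cdot A=0$ and $T(A)=(p-A)^2-\sigma\cdot B$), one gets
\[
\big\langle\psi,\big(\sqrt{G_A}-\sqrt{G_0}\big)\psi\big\rangle=\frac{\alpha^{-2}}{\pi}\int_0^\infty s^{1/2}\ \mathrm{Re}\,\big\langle (G_0+s)^{-1}\psi,\,W\,(G_A+s)^{-1}\psi\big\rangle\,ds .
\]
The $A^2$-term is the source of the $\|B\|_2^2$: one bounds $\langle\varphi,A^2\varphi\rangle\leq\|A\|_6^2\,\|\varphi\|_3^2\leq C\|B\|_2^2\,\langle\varphi,|p|\varphi\rangle$, using $\|A\|_6\leq C\|\nabla A\|_2=C\|B\|_2$ and the Sobolev embedding $\dot H^{1/2}(\mathbb{R}^3)\hookrightarrow L^3(\mathbb{R}^3)$; the $\sigma\cdot B$-term is handled in the same spirit after splitting $|B|=|B|^{1/2}\cdot|B|^{1/2}$ and distributing the two factors over the two resolvents. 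The delicate first-order term $-2A\cdot p$ is controlled by Cauchy–Schwarz, transferring one half onto $A^2$ and the other half onto a fraction of the kinetic energy; here the point is that the $|p|$-weight needed to tame $A\cdot p$ in the resolvent $(G_A+s)^{-1}$ can be paid for by $G_A$ itself (hence by the left-hand side of the desired inequality) via $T(A)=(p-A)^2-\sigma\cdot B$ and the diamagnetic inequality $\langle\varphi,(p-A)^2\varphi\rangle\geq\langle|\varphi|,p^2|\varphi|\rangle$. Performing the $s$-integral and optimising the Cauchy–Schwarz parameter — which must be taken momentum-scale dependent, since the magnetic perturbation is relatively weaker at high momenta — yields an error of precisely the form $C\lambda^{-5/2}\alpha^{-1}\|B\|_2^2$, with the powers $5/2$ and $-1$ emerging from this optimisation.

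The principal obstacle is the non-locality of the square root: as the authors stress, one cannot argue pointwise and diamagnetic comparison is unavailable for the Pauli operator, so all estimates must pass through operator monotonicity/concavity of $\sqrt{\,\cdot\,}$ and the resolvent expansion above. The subtle requirement is that the error come out in \emph{additive} form $-C\lambda^{-5/2}\alpha^{-1}\|B\|_2^2$, with no multiplicative defect in front of $\sqrt{\alpha^{-2}p^2+\alpha^{-4}}$: at the critical coupling $\alpha=2/\pi$ there is no spare relativistic kinetic energy to absorb a factor $(1-c)$ with $c>0$, so a crude bound of the shape $\sqrt{\alpha^{-2}T(A)+\alpha^{-4}}\geq(1-c\|B\|_2^2)\sqrt{\alpha^{-2}p^2+\alpha^{-4}}$ — which is comparatively easy — is useless here, and the resolvent argument has to be run carefully enough to avoid it.
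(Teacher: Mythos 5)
Your algebraic reduction is where the argument fails. After cancelling the mass terms and invoking $|x|^{-1}\leq\frac{\pi}{2}|p|\leq\sqrt{\alpha^{-2}p^2+\alpha^{-4}}$, you discard the Coulomb potential entirely and reduce the lemma to the form bound
\[
\sqrt{\alpha^{-2}T(A)+\alpha^{-4}}\ \geq\ \sqrt{\alpha^{-2}p^2+\alpha^{-4}}-C\lambda^{-5/2}\alpha^{-1}\|B\|_2^2 ,
\]
i.e.\ the Pauli relativistic energy dominates the \emph{full} free relativistic energy up to a purely additive error depending only on $\|B\|_2$. This is far stronger than the lemma, it is not an available estimate, and it is precisely the kind of ``clean'' bound that zero modes of $\sigma\cdot(p-A)$ obstruct --- as you yourself observe one paragraph earlier, which makes the proposal internally inconsistent at its key step. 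Moreover your sketch does not prove it: every individual estimate you describe produces an error proportional to the kinetic energy, not a constant. Your bound for the $A^2$ term reads $\langle\varphi,A^2\varphi\rangle\leq C\|B\|_2^2\langle\varphi,|p|\varphi\rangle$, and your treatment of $-2A\cdot p$ explicitly places ``the other half onto a fraction of the kinetic energy''. Errors of the form $\epsilon\,\langle\psi,|p|\psi\rangle$ are multiplicative defects; since $\langle\psi,|p|\psi\rangle$ is unbounded over normalized $\psi$, no choice of Cauchy--Schwarz weights or $s$-integration converts them into the additive constant $C\lambda^{-5/2}\alpha^{-1}\|B\|_2^2$. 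At critical coupling there is no spare kinetic energy to absorb such a defect --- the very point you make in your final paragraph.

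The missing idea is that the Coulomb term must be \emph{retained} so that the unavoidable kinetic-energy-scale error can be absorbed into the $\lambda$-fraction of the Hardy operator. The paper's proof quotes Estimates I and II of \cite{BF}, which yield, in the sense of expectation values,
\[
\sqrt{[\sigma\cdot p_A]^2+m^2}\ \geq\ \sqrt{p^2+m^2}-5\varepsilon\left(\psi,|p|^{3/4}\psi\right)^{4/3}-J(\varepsilon)\|B\|_2^2 ,
\]
and then uses the critical Hardy-type inequality (\cite[Thm.~2.3]{SSS}, \cite[Eq.~(1.25)]{BF}) asserting that the form $|p|-c/|x|$, even at $c=2/\pi$, controls $\left(\psi,|p|^{3/4}\psi\right)^{4/3}$. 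Choosing $\varepsilon=d\lambda$ with $d$ small, the error term is swallowed by $\lambda\left(|p|-c/|x|\right)$, and $J(d\lambda)\leq C\lambda^{-5/2}$ produces the stated constant. That absorption step --- the substitute for the unavailable diamagnetic inequality --- is entirely absent from your proposal, and without it the reduced inequality you aim for cannot be reached.
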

This lemma relies on results from \cite{BF}. In order not to deviate from the line of thought in the current section, the proof of Lemma~\ref{lemma.inequality.removal.magnetic.field} has been moved to the appendix (Appendix \ref{appendix.proof.bound.relativistic.pauli}). We now continue to prove Theorem \ref{theorem.asymptotics.E}.
\begin{proof}[Proof of Theorem~\ref{theorem.asymptotics.E}]~\\
\noindent{\bf Upper bound.}
The upper bound follows from Lemma~\ref{lemma.2S.non.magnetic}. Notice that for an upper bound, it suffices to consider the case $D=0$. In that case we can estimate
\begin{gather}
E(R, \alpha, \Lambda) - 2S_2(\alpha) \leq \mathcal{E}_{R, \alpha, \Lambda}(0) - 2S_2(\alpha) \leq \sup_{0 < \beta \leq 2/\pi}\left|\mathcal{E}_{R, \beta, \Lambda}(0) - 2S_2(\beta)\right| \to 0.
\end{gather}

\noindent{\bf Lower bound.}
We will start by reducing to the case $D=0$.
For this, we use Theorem~\ref{thm:MCDLY} to estimate (for $R\geq 1$),
\begin{align}\label{eq:UseDaubechies}
& \tr\Big[\phi_R\left(\sqrt{\alpha^{-2}T(A) + \alpha^{-4}} - \alpha^{-2} - \frac{1}{|x|} - DR^{-2}\chi_R\right)\phi_R\Big]_- \nonumber \\
 &\quad \geq -C \Big(  \sqrt{R} + D^{5/2} R^{-2} + D^4 R^{-5} + \alpha^{-1} \int B^2 \Big).
\end{align}
Notice that here we used that the constant term in \eqref{eq:CDLY} is not present for small values of $\alpha$ and that $R\geq 1$.

Now we let $0<\lambda <1$ and estimate
\begin{align}
& \tr\Big[\phi_R\left(\sqrt{\alpha^{-2}T(A) + \alpha^{-4}} - \alpha^{-2} - \frac{1}{|x|} - DR^{-2}\chi_R\right)\phi_R\Big]_- \nonumber \\
 &\quad \geq
 (1-\lambda) \tr\Big[\phi_R\left(\sqrt{\alpha^{-2}T(A) + \alpha^{-4}} - \alpha^{-2} - \frac{1}{|x|} \right)\phi_R\Big]_-  \nonumber\\
 &\quad
 +
 \lambda \tr\Big[\phi_R\left(\sqrt{\alpha^{-2}T(A) + \alpha^{-4}} - \alpha^{-2} - \frac{1}{|x|} - D\lambda^{-1}R^{-2}\chi_R\right)\phi_R\Big]_-
 \nonumber \\
 &\quad \geq
  (1-\lambda) \tr\Big[\phi_R\left(\sqrt{\alpha^{-2}T(A) + \alpha^{-4}} - \alpha^{-2} - \frac{1}{|x|} \right)\phi_R\Big]_- \nonumber \\
&\quad  - C \lambda \left( \sqrt{R} + \lambda^{-5/2} R^{-2} + \lambda^{-4} R^{-5} + \alpha^{-1} \int B^2
\right),
\end{align}
where we used \eqref{eq:UseDaubechies} to get the last estimate.
By choosing $\lambda := R^{-1}$  we see that it suffices to prove the lower bound of Theorem~\ref{theorem.asymptotics.E} in the case $D=0$.

In the case $D=0$ the lower (and upper) bound in Theorem~\ref{theorem.asymptotics.E} is given in \cite[Lemma 4.1]{EFS} in the case where $\alpha$ is bounded away from $2/\pi$.
Therefore, it suffices to prove the lower bound in the case $D=0$ and under the assumption
\begin{align}\label{eq:limitparams}
 1/\pi \leq \alpha \leq 2/\pi.
\end{align}
In the remainder of the proof this will be assumed.

We start by applying  \eqref{eq:UseDaubechies} in the case $D=0$.
In this way, if $A$ is such that $\mathcal{E}_{R, \alpha, \Lambda}(A) \leq \mathcal{E}_{R, \alpha, \Lambda}(0)$, and $R \geq 1$, (and remembering the lower bound on $\alpha$)
\begin{equation}
\mathcal{E}_{R, \alpha, \Lambda}(0) \geq \left(\Lambda - C\right)\int|\nabla\otimes A|^2\,dx - C\sqrt{R}.
\end{equation}
Note that
\begin{gather}
\mathcal{E}_{R, \alpha, \Lambda}(0) \leq \sup_{1/\pi \leq \beta \leq 2/\pi}\left|\mathcal{E}_{R, \beta, \Lambda}(0) - 2S_2(\beta)\right| + 2S_2(\alpha) \leq C,
\end{gather}
because, by Lemma~\ref{lemma.2S.non.magnetic},
$\sup_{1/\pi \leq \beta \leq 2/\pi}\left|\mathcal{E}_{R, \beta, \Lambda}(0) - 2S_2(\beta)\right|$ converges to zero as $R \to \infty$ and $S_2 \leq 1/4$. Therefore, for $R$ and $\Lambda$ large enough,
\begin{gather}
\int |\nabla\otimes A|^2\,dx \leq \frac{C\sqrt{R}}{\Lambda}.
\label{equation.a.priori.bound.magnetic.field}
\end{gather}
This provides us with an upper bound for the magnetic field energy. 
Notice also the simple estimate,
\begin{align}
I_R \leq C \sqrt{R}.
\end{align}

We now use Lemma \ref{lemma.inequality.removal.magnetic.field} to find, with $0 < \mu < 1$,
\begin{align}
\mathcal{E}_{R, \alpha, \Lambda}(A) \geq \, & \tr\left\{\phi_R\left[(1 - \lambda)\left(\sqrt{\alpha^{-2}p^2 + \alpha^{-4}} - \alpha^{-2} - \frac{1}{|x|} \right)
-\lambda\alpha^{-2} - C\alpha^{-1}\lambda^{-5/2}\|B\|_2^2\right]\phi_R\right\}_-\nonumber\\
& + \Lambda\|B\|_2^2 - I_R\nonumber\\
\geq \, & (1 - \lambda)(1 - \mu)\left\{\tr\left[\phi_R\left(\sqrt{\alpha^{-2}p^2 + \alpha^{-4}} - \alpha^{-2} - \frac{1}{|x|} \right)\phi_R\right]_- - I_R\right\}\nonumber\\
& + (1 - \lambda)\mu\tr\Bigg\{\phi_R\bigg[\sqrt{\alpha^{-2}p^2 + \alpha^{-4}} - \alpha^{-2} - \frac{1}{|x|} \nonumber\\
& \qquad\qquad\qquad\qquad\qquad - C\lambda\mu^{-1}\alpha^{-2} - C\alpha^{-1}\lambda^{-5/2}\mu^{-1}\|B\|_2^2\bigg]\phi_R\Bigg\}_-\nonumber\\
& - (\mu + \lambda - \lambda\mu)I_R.
\end{align}
Using Lemma~\ref{lemma.2S.non.magnetic} it therefore suffices to prove that the last two terms above tend to zero in the limit.
We are led to studying
\begin{align}
\mu\tr\bigg\{\phi_R\bigg[\sqrt{\alpha^{-2}p^2 + \alpha^{-4}} - \alpha^{-2} - \frac{1}{|x|} 
 -C\lambda\mu^{-1}\alpha^{-2} - C\alpha^{-1}\lambda^{-5/2}\mu^{-1}\|B\|_2^2\bigg]\phi_R\bigg\}_-.
\end{align}
To simplify expressions, we introduce the (optimal) choice
\begin{align*}
\lambda=\mu = \alpha^{2/7} \|B \|_2^{4/7}.
\end{align*}
Notice, using \eqref{equation.a.priori.bound.magnetic.field} that with this choice $\mu,\lambda \rightarrow 0$ in the limit considered, and also
\begin{align}
(\mu + \lambda - \lambda\mu)I_R \rightarrow 0.
\end{align}
With our choices of parameters and the lower bound on $\alpha$ we have, using Theorem~\ref{thm:MCDLY},
\begin{align}
&\mu\tr\bigg\{\phi_R\bigg[\sqrt{\alpha^{-2}p^2 + \alpha^{-4}} - \alpha^{-2} - \frac{1}{|x|} 
 -C\lambda\mu^{-1}\alpha^{-2} - C\alpha^{-1}\lambda^{-5/2}\mu^{-1}\|B\|_2^2\bigg]\phi_R\bigg\}_- \nonumber \\
 &= \mu\tr\bigg\{\phi_R\bigg[\sqrt{\alpha^{-2}p^2 + \alpha^{-4}} - \alpha^{-2} - \frac{1}{|x|} 
 -C\bigg]\phi_R\bigg\}_-\nonumber \\
 &\geq -C \mu ( R^3 + \| B \|_2^2) \nonumber \\
 &\rightarrow 0,
\end{align}
since $R^{11} \Lambda^{-1} \rightarrow 0$.
Combing the estimates above, we therefore have
\begin{gather}
\lim_{\substack{R, \Lambda \to \infty\\R^{11}/\Lambda \to 0}}\sup_{\alpha \in [1/\pi, 2/\pi]}\left|E(R, \alpha, \Lambda) - 2S_2(\alpha)\right| = 0.
\end{gather}
\end{proof}

\section{Proof of Theorem~\ref{thm:MCDLY}}
\label{sec:PfofDaubechies}

\begin{proof}
\noindent
{\bf Case 1. $\nu \alpha \in [  \frac{1}{64 M}, \frac{2}{\pi}]$.}\\
Define $r:= (\pi+1)$.
Define $\theta(t) = \begin{cases} 1, & t \leq \alpha \\
\cos(\frac{t-\alpha}{2\alpha}), & \alpha < t \leq r \alpha \\
0, & t > r \alpha
\end{cases}$, and define 
\begin{align}
\theta_j(x) &= \theta(|x-R_j|), \text{ for } j \in \{ 1,\ldots, M\}, \nonumber \\
\theta_{M+1}(x) & = \sqrt{1 - \sum_{j=1}^M \theta_j^2(x)}.
\end{align}
Then, using the separation condition \eqref{eq:SepCond} we have
\begin{align}
|\nabla \theta_j| \leq \frac{1}{2\alpha} \chi_{\Omega_j},
\end{align}
with $\chi$ denoting the indicator function and
\begin{align}
\Omega_j = \begin{cases}
B(R_j, r \alpha)\setminus B(R_j, \alpha), & j \in \{ 1, \ldots, M\}, \\
\cup_{k=1}^M \Omega_k,  & j= M+1.
\end{cases}
\end{align}
Therefore, by the IMS-localization formula,
\begin{align*}
\alpha^{-2} [\sigma\cdot p_{A}]^2 &\geq
\alpha^{-2} \sum_{j=1}^{M+1}\theta_j \left(
[\sigma\cdot p_{A}]^2 - \alpha^{-2} \chi_{\Omega_j}/4\right) \theta_j.
\end{align*}
So, using operator-monotonicity of the square root and the pull-out formula (Prop.~\ref{proposition.pull.out}),
\begin{align}
\sqrt{ \alpha^{-2} [\sigma\cdot p_{A}]^2 + \alpha^{-4}}
&\geq
\sum_{j=1}^{M} \theta_j
\alpha^{-1} |\sigma\cdot p_{A}|  \theta_j  
+
\theta_{M+1}
\sqrt{
\alpha^{-2} [\sigma\cdot p_{A}]^2 + \alpha^{-4} (1-\frac{1}{4}\chi_{\Omega_{M+1}})} \,\theta_{M+1}.
\end{align}
Also,
\begin{align}
W &= \sum_{j=1}^M \theta_j W \theta_j + \theta_{M+1}  W \theta_{M+1}  
\geq
\sum_{j=1}^M  \theta_j^2 \left( - \frac{\nu}{|x-R_j|} - U\right)
- \theta_{M+1}  U \theta_{M+1}.
\end{align}
We can therefore estimate
\begin{align}\label{eq:Splitting}
&\tr\left( \sqrt{\alpha^{-2}[\sigma\cdot p_A]^2 + \alpha^{-4}} - \alpha^{-2}+ W \right)_{-} \nonumber \\
&\quad\geq
\alpha^{-1} \sum_{j=1}^M \tr\left(\theta_j( |\sigma\cdot p_{A}|  - \alpha^{-1} - [ \frac{\alpha \nu}{|x-R_j|} + \alpha U  ]) \theta_j \right)_{-} \nonumber \\
&\quad\quad +
\tr\left(
\theta_{M+1}\big(
\sqrt{
\alpha^{-2} [\sigma\cdot p_{A}]^2 + \alpha^{-4} (1-\frac{1}{4}\chi_{\Omega_{M+1}})} 
- \alpha^{-2} - U \big) \theta_{M+1}
\right)_{-} .
\end{align}
For $j\in \{1,\ldots, M\}$,
we have by the Pauli-Hardy-Lieb-Thirring inequality \eqref{inequality.pauli.hardy.lieb.thirring}, and using that $\nu\alpha \leq \frac{2}{\pi}$ and $\theta_j$ localizes to a ball of radius proportional to $\alpha$
\begin{align}
\alpha^{-1}  \tr\left(\theta_j( |\sigma\cdot p_{A}|  - \alpha^{-1} - [ \frac{\alpha\nu}{|x-R_j|} + \alpha U  ]) \theta_j \right)_{-} 
\geq - C \alpha^{-1} \Big(\alpha^{-1}  + \alpha^4 \int U^4 +  \int B^2\Big).
\end{align}
Therefore, the contribution from the terms with $j\in \{1,\ldots, M\}$ are in agreement with \eqref{eq:CDLY}.

For the $\theta_{M+1}$-term in \eqref{eq:Splitting} we need to prove spatial decay of the localization error. For this we use a dyadic-type partition of unity on ${\mathbb R}^3$ (similar to \cite[Sect.3.1]{EFS}). 
The partition of unity is given as
\begin{align}
\sum_{j=0}^{\infty} \phi_{j,\alpha}^2 =1,
\end{align}
where, for some parameter $T\geq 1$,
\begin{align}
\supp \phi_{0,\alpha} &\subset \{ d_R(x) \leq T \alpha\}, \nonumber \\
\supp \phi_{j,\alpha} &\subset \{ T 2^{j-2} \alpha \leq d_R(x) \leq T 2^{j} \alpha \}, \quad j\geq 1,\nonumber \\
|\nabla \phi_{j, \alpha}| & \leq C 2^{-j} \alpha^{-1} T^{-1}.
\end{align}
Therefore, by the standard localization argument
\begin{align}
 [\sigma\cdot p_{A}]^2 \geq \sum_{j=0}^{\infty} \phi_{j,\alpha} \left( [\sigma\cdot p_{A}]^2 - C T^{-2} 2^{-2j} \alpha^{-2}\right)  \phi_{j,\alpha} .
\end{align}
We will choose $T \geq 4 r$. Then, $\phi_{j,\alpha} \chi_{\Omega_{M+1}}=0$, for all $j\geq1$.

For $j= M+1$, we have, where we have chosen and fixed $T$ sufficiently large,
\begin{align}
\alpha^{-2} [\sigma\cdot p_{A}]^2 + \alpha^{-4} (1-\frac{1}{4}\chi_{\Omega_{M+1}})
&\geq
\phi_{0,\alpha} \alpha^{-2}  [\sigma\cdot p_{A}]^2 \phi_{0,\alpha} \nonumber \\
&\quad +
\sum_{j=1}^{\infty} \phi_{j,\alpha} \left(
\alpha^{-2} [\sigma\cdot p_{A}]^2 + \alpha^{-4} (1-  2^{-2j} )
\right)
 \phi_{j,\alpha} 
\end{align}
Therefore, by the pull-out formula (Prop.~\ref{proposition.pull.out}),
\begin{align}
&\theta_{M+1}\left(
\sqrt{
\alpha^{-2} [\sigma\cdot p_{A}]^2 + \alpha^{-4} (1-\frac{1}{4}\chi_{\Omega_{M+1}})} - \alpha^{-2} \right)\theta_{M+1} \nonumber \\
&\geq
\alpha^{-1} \theta_{M+1} \phi_{0,\alpha} \left(  |\sigma\cdot p_A|  - \alpha^{-1}\right) \phi_{0,\alpha} \theta_{M+1} 
\nonumber \\
&\quad +
\sum_{j=1}^{\infty} 
\theta_{M+1} \phi_{j,\alpha} \left(
\sqrt{
\alpha^{-2} [\sigma\cdot p_{A}]^2 + \alpha^{-4} (1-  2^{-2j} )} - \alpha^{-2}\right) \phi_{j,\alpha}\theta_{M+1} \nonumber \\
&\geq
\alpha^{-1} \theta_{M+1}\phi_{0,\alpha}  \left(  |\sigma\cdot p_A|  - \alpha^{-1}\right) \phi_{0,\alpha} \theta_{M+1} 
\nonumber \\
&\quad +
\sum_{j=1}^{\infty} 
\theta_{M+1} \phi_{j,\alpha} \left(
\sqrt{
\alpha^{-2} [\sigma\cdot p_{A}]^2 + \alpha^{-4} } - \alpha^{-2}(1+ C 2^{-2j})\right) \phi_{j,\alpha}\theta_{M+1} ,
\end{align}
where we used that $\xi \mapsto \sqrt{a^2 + \xi^2} -\xi$ in non-increasing on $[0,\infty)$ for all real $a$, and we estimated
$(1- 2^{-2j})^{-1} - 1 \leq C 2^{-2j}$.

At this point we apply the inequality of Theorem~\ref{theorem.lieb.thirring.efs} to get that
\begin{align}
&\tr\left(
\theta_{M+1}\left(
\sqrt{
\alpha^{-2} [\sigma\cdot p_{A}]^2 + \alpha^{-4} (1-\frac{1}{4}\chi_{\Omega_j})} - \alpha^{-2} - U \right)\theta_{M+1}
\right)_{-} 
\nonumber \\
&\geq
\tr\left(
\theta_{M+1}\phi_{0,\alpha}  \left(  \alpha^{-1}|\sigma\cdot p_A|  - \alpha^{-2} - U\right) \phi_{0,\alpha} 
\theta_{M+1}
\right)_{-} 
\nonumber \\
&+
\sum_{j=1}^{\infty}
\tr\left(
\theta_{M+1}\phi_{j,\alpha} \left(
\sqrt{
\alpha^{-2} [\sigma\cdot p_{A}]^2 + \alpha^{-4} } - \alpha^{-2} - C 2^{-2j} \alpha^{-2} 1_{\supp \phi_{j,\alpha}}-U \right) \phi_{j,\alpha}
\theta_{M+1}
\right)_{-} 
\nonumber \\
&\geq
-C \left\{ \alpha^{-2}+ 
\int  U^{5/2} + \alpha^3 \int U^4 + \alpha^{-1} \int B^2\right\} .
\end{align}
Here we used, in particular, that for $p> 3/2$, 
\begin{align*}
\sum_{j=0}^{\infty} \int |2^{-2j} \alpha^{-2} 1_{\supp \phi_{j,\alpha}}|^p \leq C \alpha^{-2p+3}.
\end{align*}
Clearly, this finishes the proof of Theorem~\ref{thm:MCDLY} in the case $\nu \alpha \in [ \frac{1}{64 M }, \frac{2}{\pi}]$.

\noindent
{\bf Case 2. $\nu \alpha < \frac{1}{64 M}$.}\\
We end by treating the easier case of small values of $\nu \alpha$.
In this case we will never be near criticality and can use less delicate estimates.
We write
\begin{align}
&\tr\left( \sqrt{\alpha^{-2}[\sigma\cdot p_A]^2 + \alpha^{-4}} - \alpha^{-2}+ W \right)_{-} \nonumber \\
&\geq
\frac{1}{2} \tr\left( \sqrt{\alpha^{-2}[\sigma\cdot p_A]^2 + \alpha^{-4}} - \alpha^{-2} - 2 U \right)_{-} \nonumber \\
&\quad +
\frac{1}{2M} \sum_{j=1}^M 
\tr\left( \sqrt{\alpha^{-2}[\sigma\cdot p_A]^2 + \alpha^{-4}} - \alpha^{-2}- \frac{2 M \nu}{|x-R_j|}
1_{\{|x-R_j| \leq \alpha\}} 
\right)_{-}
\end{align}
On the first term we can apply the Lieb-Thirring-type inequality of Theorem~\ref{theorem.lieb.thirring.efs}.
\begin{align}\label{eq:Integrals}
 \tr\left( \sqrt{\alpha^{-2}[\sigma\cdot p_A]^2 + \alpha^{-4}} - \alpha^{-2} -2 U \right)_{-} 
 \geq
 - C \left(
 \int U^{5/2} + \alpha^3  \int U^4 + \alpha^{-1} \int B^2
 \right).
\end{align}
This is in agreement with the integral terms in \eqref{eq:CDLY}.

To estimate the terms near the singularities we will use the BKS inequality.
All $M$ terms will be treated in the same manner, so for notational simplicity we only treat one term and assume that $R_j=0$.

This yields, 
\begin{align}
&\tr\left( \sqrt{\alpha^{-2}[\sigma\cdot p_A]^2 + \alpha^{-4}} - \alpha^{-2}- \frac{2 M \nu}{|x|} 1_{\{|x| \leq \alpha\}} 
\right)_{-} \nonumber \\
&\geq - \alpha^{-1} \tr \left\{-\left(
[\sigma\cdot p_A]^2 + \alpha^{-2}
- [\alpha^{-1}+\frac{2 M \alpha \nu}{|x|} 1_{\{|x| \leq \alpha\}}  ]^2
\right)_{-} 
\right\}^{1/2} \nonumber \\
&\geq
-\alpha^{-1}  \tr \left\{-\left(
p_A^2 - \frac{1}{4|x|^2}-|B|
+ \left[ \frac{1}{4|x|^2}
-  \frac{ 4M \nu}{|x|}1_{\{|x| \leq \alpha\}} 
-   \frac{(2M\alpha \nu)^2}{|x|^2} \right]
\right)_{-} 
\right\}^{1/2} .
\end{align}
By the interval for $\alpha \nu$ the term in square brackets $[\cdots]$ is non-negative and can be dropped.
Therefore, we get from the magnetic version of the Hardy-Lieb-Thirring inequality for the $\frac{1}{2}$-moments of the eigenvalues for the Schr\"{o}dinger operator \cite[Theorems 1.1 and 4.2]{RF}
\begin{align}\label{eq:conclusion}
\tr\left( \sqrt{\alpha^{-2}[\sigma\cdot p_A]^2 + \alpha^{-4}} - \alpha^{-2}- \frac{2 M \nu}{|x|} 1_{\{|x| \leq \alpha\}} 
\right)_{-} \geq
-\alpha^{-1} C
\int B^2 .
\end{align}
Clearly, this is also in agreement with \eqref{eq:CDLY}.

Notice that neither in \eqref{eq:Integrals} nor \eqref{eq:conclusion} did we get a contribution to the constant term in \eqref{eq:CDLY}.

This finishes the proof of Theorem~\ref{thm:MCDLY}.
\end{proof}

\appendix
\section{Proof of Lemma \ref{lemma.inequality.removal.magnetic.field}}
\label{appendix.proof.bound.relativistic.pauli}
Let $m \geq 0$, $0 \leq c \leq 2/\pi$, and $0 < \lambda \leq 1$. We shall prove
\begin{gather}
\sqrt{[\sigma\cdot p_A]^2 + m^2} - m - \frac{c}{|x|} \geq (1 - \lambda)\left(\sqrt{p^2 + m^2} - m - \frac{c}{|x|}\right) - \lambda m - C\lambda^{-5/2}\|B\|_2^2.
\end{gather}
Estimate I, Theorem 2.3 in \cite{BF}, says in particular that, with the choice of adequate parameters, for any $\varepsilon > 0$ and normalized $\psi \in Q([\sigma\cdot p_A]^2)\cap Q\left( (p - A)^2\right)$, $Q$ denoting the quadratic form domain of an operator,
\begin{gather}
\Big(\psi, \sqrt{
[\sigma\cdot p_A]^2 + m^2}\,\psi\Big) \geq \Big(\psi, \sqrt{p_A^2 + m^2}\,\psi\Big) - \varepsilon\left(\psi, |p_A|^{3/4}\psi\right)^{4/3} - F(\varepsilon)\|B\|_2^2,
\end{gather}
for some function $F > 0$. Furthermore, Estimate II, Theorem 2.6 in \cite{BF}, says that
\begin{gather}
\left|\left(\psi, \left(|p_A|^{3/4} - |p|^{3/4}\right)\psi\right)\right| \leq \varepsilon\left(\psi, |p|^{3/4}\psi\right) + G(\varepsilon)\|B\|_2^{3/2},
\end{gather}
and
\begin{gather}
\Big(\psi, \sqrt{p_A^2 + m^2}\,\psi\Big) \geq \left(\psi, \sqrt{p^2 + m^2}\,\psi\right) - \varepsilon\left(\psi, |p|^{3/4}\psi\right)^{4/3} - H(\varepsilon)\|B\|_2^{3/2},
\end{gather}
for some functions $G, H > 0$. Therefore,
\begin{gather}
\left(\psi, |p_A|^{3/4}\psi\right)^{4/3} \leq 2^{1/3}\left(2^{1/3}\varepsilon^{4/3} + 1\right)\left(\psi, |p|^{3/4}\psi\right)^{4/3} + 2^{2/3}G(\varepsilon)^{4/3}\|B\|_2^2,
\end{gather}
and so, if $J(\varepsilon) \equiv 2^{2/3}G(\varepsilon)^{4/3}\varepsilon + F(\varepsilon) + H(\varepsilon)$, and $\varepsilon$ is chosen smaller than 1,
\begin{align}
&\left(\psi, \left(\sqrt{[\sigma\cdot p_A]^2 + m^2} - m - \frac{c}{|x|}\right)\psi\right)\nonumber \\
& \geq \,  \left(\psi, \left(\sqrt{p^2 + m^2} - m - \frac{c}{|x|}\right)\psi\right)
- \left(2^{2/3}\varepsilon^{7/3} + 2^{1/3}\varepsilon + \varepsilon\right)\left(\psi, |p|^{3/4}\psi\right)^{4/3} - J(\varepsilon)\|B\|_2^2\nonumber\\
&\geq \,  \left(\psi, \left(\sqrt{p^2 + m^2} - m - \frac{c}{|x|}\right)\psi\right) - 5\varepsilon\left(\psi, |p|^{3/4}\psi\right)^{4/3} - J(\varepsilon)\|B\|_2^2\nonumber\\
&\geq \,  \left(1 - \lambda\right)\left(\psi, \left(\sqrt{p^2 + m^2} - m - \frac{c}{|x|}\right)\psi\right)\nonumber\\
&\quad + \lambda\left[\left(\psi, \left(|p| - \frac{c}{|x|}\right)\psi\right) - \frac{5\varepsilon}{\lambda}\left(\psi, |p|^{3/4}\psi\right)^{4/3}\right] - m\lambda - J(\varepsilon)\|B\|_2^2.
\label{estimate.P.A.p.almost.finished}
\end{align}
If we now set $\varepsilon = \lambda d$, with $d > 0$ a small enough fixed constant, then the term in brackets in \eqref{estimate.P.A.p.almost.finished} is non-negative, by Equation (1.25) in \cite{BF}. (The result that the form $|p| - c/|x|$ can control a power of $|p|$ smaller than 1 appeared first in \cite[Theorem 2.3]{SSS}; see also \cite[Theorem 1.2]{RF}.) We conclude in this way that
\begin{gather}
\sqrt{[\sigma\cdot p_A]^2 + m^2} - m - \frac{c}{|x|} \geq \left(1 - \lambda\right)\left(\sqrt{p^2 + m^2} - m - \frac{c}{|x|}\right) - m\lambda - J(d\lambda)\|B\|_2^2.
\end{gather}
Finally, it is easy to show that $J(d\lambda) \leq C\lambda^{-5/2}$ by using the explicit expressions for $F$, $G$ and $H$ appearing in \cite[Appendix A.1]{BF}. This finishes the proof of Lemma \ref{lemma.inequality.removal.magnetic.field}.
\qed

\paragraph{Acknowledgement}
SF was partially supported by the Sapere Aude grant DFF–4181- 00221 from the Independent Research Fund Denmark.

\end{document}